\documentclass[
    11pt,
    aps,
    pra,
    reprint,
    twocolumn,
    notitlepage,
    nofootinbib,
    longbibliography
]{revtex4-2}
\usepackage[T1]{fontenc}
\usepackage{amsmath,amssymb,amsthm,mathtools,bm}
\usepackage{graphicx,booktabs,array,xcolor,tikz}
\usepackage[normalem]{ulem}
\usepackage{pifont}

\definecolor{niceblue}{HTML}{2B5A8A}

\newenvironment{behaviorarray}[1]
{\setlength{\arraycolsep}{3pt}\begin{array}{#1}}
{\end{array}}
\usetikzlibrary{arrows.meta, positioning, calc, hobby, shapes, decorations.pathmorphing}
\usepackage[
    colorlinks=true,
    linkcolor=blue!45!black,
    citecolor=blue!45!black,
    urlcolor=blue!55!black
]{hyperref}
\newcommand{\GW}{\mathcal{GW}}
\newcommand{\IC}{\mathcal{IC}}
\newcommand{\Q}{\mathcal Q}
\newcommand{\Qc}{\mathcal Q_{\mathrm c}}
\newcommand{\LHV}{\mathcal L}
\newcommand{\NS}{\mathcal{NS}}
\newcommand{\Rvec}{\mathbf R}
\newcommand{\rvec}{\mathbf r}
\newcommand{\Bvec}{\mathbf B}
\newcommand{\bvec}{\mathbf b}
\newcommand{\Bghost}{\tilde{\mathbf B}}
\newcommand{\DKL}{D_{\mathrm{KL}}}

\newcommand{\conv}{\operatorname{conv}}
\newcommand{\tr}{\operatorname{tr}}

\newcommand{\beq}{\begin{equation}}
\newcommand{\eeq}{\end{equation}}
\newtheorem{theorem}{Theorem}
\newtheorem{result}{Result}
\newtheorem{proposition}{Proposition}
\newtheorem{lemma}{Lemma}
\newtheorem{corollary}{Corollary}
\theoremstyle{definition}
\newtheorem{definition}{Definition}

\allowdisplaybreaks[2]
\usepackage{lipsum}

\begin{document}
\title{Information causality does not single out the set of quantum correlations}
\author{Marco Pompili}
\affiliation{Inria, Télécom Paris -- LTCI, Institut Polytechnique de Paris, 91120
    Palaiseau, France}
\author{Mirjam Weilenmann}
\affiliation{Inria, Télécom Paris -- LTCI, Institut Polytechnique de Paris, 91120
    Palaiseau, France}
\begin{abstract}
Information causality (IC) was introduced as a physical principle that imposes constraints on the non-local correlations between two parties (Alice and Bob) in a Bell scenario. Specifically, it states that the information that Bob can gain about an $n$-bit database held by Alice is bounded by the information she communicates to him. To date, the question whether IC for any $n \in \mathbb{N}$ singles out the set of quantum correlations remains open, meaning that it is unknown whether for any number of inputs $m$ and outputs $d$ of the two parties the set of correlations that satisfy IC exactly coincides with the quantum correlations. In this work, we answer this question in the negative. Firstly, we show that for any finite database length $n \geq 2$, there is a bipartite set of correlations that satisfies IC for all compatible protocols and that exhibits post-quantum correlations in the Bell scenario with $n+1$ inputs and $2$ outcomes per party. This result relies on constructing an entropic proof of IC for GPTs that may be of independent interest. We further show that in the Bell scenario with $n+1$ inputs and $2$ outcomes per party, our constructed sets of correlations lead to neither a superset nor a subset of the quantum correlations. Secondly, we construct a set of correlations in the Bell scenario with $3$ inputs and $2$ outputs per party that satisfies IC (for all database sizes $n \in \mathbb{N}$) and that includes post-quantum points. Consequently, all known formulations of information causality fail to fully characterize the bipartite set of quantum correlations, already in the $3$-input, $2$-outcome Bell scenario. 
\end{abstract}
\maketitle

\section{Introduction}

Initiated by Bell's theorem, which first showed that quantum theory features
non-local correlations~\cite{bell:1964}, understanding these correlations has
become a research field in its own right. In their seminal
paper~\cite{popescu:1994}, Popescu and Rohrlich showed that the impossibility to
send superluminal signals alone does not serve as an underlying principle to
explain quantum correlations: indeed, there are non-quantum non-local
correlations that would also be compatible with this
principle~\cite{popescu:1994, cirelson:1993}. Following their work, much effort
has been devoted to the search for information-theoretic physical principles
that imply that non-local correlations in Bell scenarios are restricted to the
quantum set~\cite{brassard:2006, linden:2007, navascues:2010, fritz:2013}. Among
these, information causality stands out as the only principle for which it has
so far not been shown that it is also satisfied by some non-quantum non-local
correlations~\cite{navascues:2015}.

Information causality (IC)~\cite{pawlowski:2009} is an information-theoretic
principle proposed in a scenario where Alice holds a database of $n$ independent
random dits, and Bob is asked to guess one chosen entry. IC requires that the
sum of the information Bob can obtain about the individual entries does not
exceed the information communicated by Alice, even when they share nonlocal
resources. Both classical and quantum theory satisfy this requirement, while
other generalized probabilistic theories (GPTs) such as
\emph{boxworld}~\cite{barrett:2007} do not.

Since its original introduction, various versions of the IC principle have been
formulated~\cite{chaves:2015c, miklin:2021, jain:2024, yu:2026, pollyceno:2026, miklin:2026}.
Recently, it was shown that one of these formulations~\cite{jain:2024},
already with $n = 2$ bits and a binary channel leads, in addition to Tsirelson's
bound~\cite{pawlowski:2009}, also to the Tsirelson--Landau--Masanes criterion
that characterizes the quantum correlator region for two binary measurements per
party~\cite{gachechiladze:2026}. Nevertheless, the general question whether (any
of the formulations of) IC recovers the set of quantum correlations remains open.

In this work, we show that information causality, in all formulations known to
us~\cite{pawlowski:2009,chaves:2015c,miklin:2021,jain:2024,miklin:2026,pollyceno:2026,yu:2026}, does not single out the bipartite quantum set of correlations. We
show this by constructing sets of correlations that feature non-quantum nonlocal
correlations and prove that these sets satisfy IC for any $n \in \mathbb{N}$.

Our proofs rely on sets of correlations that allow for symmetric non-signalling
extensions. Specifically, we view the \emph{ghostworld} sets of correlations,
introduced in  \cite{lang:2014}, as bipartite state spaces of GPTs and define corresponding effect spaces.
 Using a notion of conditional
measurement entropy for generalised probabilistic theories~\cite{barnum:2010,
short:2010a, weilenmann:2020}, we prove that for any finite database size $n$
there is a bipartite set of correlations that satisfies IC and that features non-quantum non-local
correlations (Result~\ref{res:gw-ic}). 
We obtain this by providing a proof of IC for GPT correlations, alternative to the well-known one from Refs.~\cite{barnum:2010, al-safi:2011}, which may be of independent
interest.

To generalize this to the arbitrary $n \in \mathbb{N}$ case in a fixed Bell scenario, we construct
a smaller set of correlations $\mathcal{S}$ (unrelated to GPTs) in the setting with three binary measurements per party, for which we show that it
satisfies IC for all $n \in \mathbb{N}$. We first
show this in the restricted case where Alice and Bob only have access to
one copy of their shared non-local resource, which we show solves the problem
for the formulation of IC in  \cite{miklin:2021}
(Result~\ref{res:contraction}). We then extend this result to an unbounded
number of copies (Result~\ref{res:contraction-extension}), thus establishing the
general result. The proof techniques are built on previous results on
(hyper)contractivity from information theory~\cite{beigi:2015, anantharam:2013b, anantharam:2013, raginsky:2016},
applied to the study of non-local correlations.

\section{Problem statement and summary of main results}
\label{sec:IC_and_quantum_sets}

Let us consider two parties, Alice and Bob, who share a bipartite 
system, on which they can perform measurements labelled by $x, y \in [m]:=\{0, \ldots, m-1 \}$ and obtain outcomes $a, b \in [d]:=\{0, \ldots, d-1 \} $ respectively. Any choice of resource and measurements leads to a correlation $P=\{ p(a,b | x,y)\}_{a,b,x,y}$ in the so-called $(m,m,d,d)$ Bell scenario, where we will suppress the subscripts ${a,b,x,y}$ throughout this article. We can generally think of such a correlation, also called \emph{behavior}, as a \emph{nonlocal box}, where the parties make inputs $x,y$ and obtain outcomes $a,b$ respectively.

Depending on the resources the parties share, different sets of correlations are possible. 
The quantum set $\Q^{(m,d)}$ is defined as the set of correlations for
which there exist separable Hilbert spaces $\mathcal H_A$ and $\mathcal H_B$,
POVMs $\{A_{a|x}\}_a$, $\{B_{b|y}\}_b$ indexed by $x, y$ on them, respectively, and a state
$\rho$ on $\mathcal H_A \otimes \mathcal H_B$ such that
$p(a,b|x,y) = \tr(\rho A_{a|x} \otimes B_{b|y})$. We denote its topological closure as $\overline{\Q}^{(m,d)}$. The quantum commuting set of
correlations $\Qc^{(m,d)}$ requires a single Hilbert space and
$p(a,b|x,y) = \tr(\rho A_{a|x} B_{b|y})$, with $[A_{a|x}, B_{b|y}] = 0$ for all
$a,b,x,y$. It is well-known that for all $m,d$ 
\beq
    \label{eq:icq-hierarchy}
    \Q^{(m,d)} \subseteq \overline{\Q}^{(m,d)}
       \subseteq \Qc^{(m,d)},
\eeq
and that each inclusion is strict for at least one choice of $(m,d)$; 
the strictness of the first inclusion was proved in~\cite{slofstra:2019} and of the second in~\cite{ji:2021}. 

Information-theoretic principles, similarly, define sets of correlations satisfying them. Here we consider the specific case of information causality~\cite{pawlowski:2009}, see Figure~\ref{fig:ic_notation}, where Alice holds a database
$\Rvec = (R_0,\dots,R_{n-1})$ of independent, uniformly distributed random
variables taking values in $[k]$.\footnote{Independence is assumed here for simplicity of presentation. The technical parts of the manuscript in fact apply also to the formulation of IC with correlated databases~\cite{chaves:2015c, jain:2024} (see Sections~\ref{sec:insufficiency-finite-length}--\ref{sec:insufficiency-general}).} Bob receives an
independent, uniformly distributed index $S \in [n]$, with the goal of guessing
$R_S$. To help them perform this task, the parties share nonlocal resources from some theory, e.g. quantum systems or systems from a GPT. The parties can perform local measurements on their respective parts of these systems; we can thus think of them as generating nonlocal boxes in the process. We will in particular be interested in the case where the parties share an arbitrary number of such boxes from the $(m,m,d,d)$ Bell scenario. In the quantum case this means any number of (possibly different) $P \in \Qc^{(m,d)}$, for a GPT any number of boxes $P$ that can be generated according to that theory in the $(m,m,d,d)$ Bell scenario. In this case, Alice can choose her measurement based on the database $\bf R$ and on the choices and outcomes of any previous measurements she has performed. In addition, she can send a message $M'$ to Bob. This message can in particular depend on the database $\bf R$ and on Alice's inputs $\mathbf X$ and outcomes $\mathbf A$ of each box.
While the original protocol considers a perfect channel to transmit the message, i.e. $M=M'$~\cite{pawlowski:2009}, more recent formulations~\cite{miklin:2021} consider a noisy channel whose output $M' \in [d']$ is received by Bob. 
Bob's measurement choices may depend on $S$, $M'$, and possibly on the inputs and outputs of the measurements he already performed.
He then produces a guess $G \in [k]$ using this information and his measurement outcomes. 

\begin{figure}
    \centering
    \includegraphics{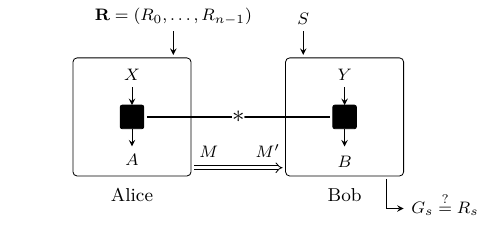}
    \caption{\textbf{Notation.} A $n$-dit random-access code assisted by non-signaling resources, with a noisy classical communication $M \to M'$.} %
    \label{fig:ic_notation}
\end{figure}

We say that a set $\mathcal S$ of nonlocal boxes (containing the local set) in the $(m,m,d,d)$ Bell scenario satisfies information causality with database size $n$, if any number of independent boxes from this set satisfy
\beq
    \label{eq:ic_inequality_noisy}
    \sum_{s=0}^{n-1}I(R_s:G_s) \leq I(\Rvec,\mathbf X, \mathbf A : M'),
\eeq
for all $[d']$ and all protocols, where
$I(R_s:G_s) \coloneqq I(R_s:G \mid S = s)$ and $I$ denotes the Shannon mutual
information. Notice that the potential use of shared randomness by the parties in a protocol is incorporated here in the set of resources (i.e.\ accounted for within $\mathbf A$).
Similarly, for a behavior $P$ in the $(m,m,d,d)$ Bell scenario, we write $P \in \IC_{n}^{(m,d)}$ if any number $c \in \mathbb N$ of independent copies of the behavior $P$ satisfy Eq.~\eqref{eq:ic_inequality_noisy}
for a database of size $n$, while $\IC_{\infty}^{(m,d)}$ refers to the case of arbitrary $n \in \mathbb{N}$. The inequality in Eq.~\eqref{eq:ic_inequality_noisy} will be particularly convenient for our proofs. Since any message $M$ that Alice may send is a function of $\bf R, \bf X, \bf A$, we have the Markov chain $\mathbf R, \mathbf X, \mathbf A \rightarrow M \rightarrow M'$, which implies that
\beq
I(\Rvec,\mathbf X, \mathbf A :M') \leq I(M:M') \leq H(M), 
\eeq
see also~\cite{marco:costantino}. Proving that~\eqref{eq:ic_inequality_noisy} is satisfied thus immediately implies that also the usual IC notions from~\cite{pawlowski:2009, miklin:2021} are satisfied. 
The requirement that the IC inequality holds for all sets of behaviors from a given theory,
under any protocol, together with the fact that IC is formulated in terms of an entropic constraint, explains the difficulties in analysing this principle even for moderate $(m,d)$ and $n$, and even more so for arbitrary $n \in \mathbb{N}$.

\begin{figure*}
    \centering
    \includegraphics{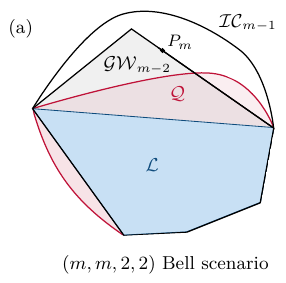} \qquad \qquad
    \includegraphics{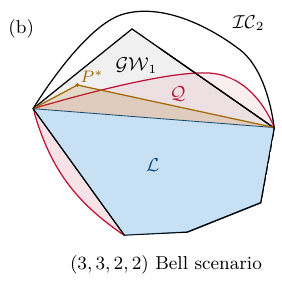}
    \caption{
        \textbf{Sets of correlations considered in our results.} (a) In the Bell
        scenarios with $m$ inputs and $2$ outputs per party,
        $\GW_{m-2}^{(m,2)} \subseteq \IC_{m-1}^{(m,2)}$, and $\GW_{m-2}^{(m,2)}$ 
        contains post-quantum behaviors, like $P_m$ achieving $I_{mm22} = 1/m$, for all $m \ge 3$
        (Result~\ref{res:gw-ic}). (b) In the Bell scenario with $3$ inputs and
        $2$ outputs per party, there exists a behavior $P^* \in \IC_\infty^{(3,2)}$ but outside
        $\Q^{(3,2)}$, such that any protocol involving correlations in
        $\mathcal S^{(3,2)} = \conv(\LHV^{(3,2)} \cup \{P^*\})$ cannot violate IC for any number
        of systems or database length (Results~\ref{res:contraction}
        and~\ref{res:contraction-extension}).
    }
    \label{fig:results}
\end{figure*}

The question whether IC with database size $n$ recovers the set of 
quantum correlations in Bell experiments amounts to asking whether
$\IC_n^{(m,d)}=\Qc^{(m,d)}$ for all $m,d$, or, alternatively, whether
$\IC_n^{(m,d)}=\overline{\Q}^{(m,d)}$ for all $m,d$. In principle, 
this must hold for any $(m_A,m_B,d_A,d_B)$ Bell scenario; however
for the purpose of this article the $(m,m,d,d)$ case suffices, to which we thus restrict for ease of presentation. Note further that,
since $\IC_n^{(m,d)}$ is closed, it cannot coincide
with $\Q^{(m,d)}$ for $m,d$ where the latter is not closed~\cite{slofstra:2019}.

In our work, we identify sets of correlations $\mathcal{S}^{(m,d)}$ that satisfy $\IC_n^{(m,d)}$, i.e., $\mathcal{S}^{(m,d)} \subseteq \IC_n^{(m,d)}$ and that feature behaviours $P \in \mathcal{S}^{(m,d)}$ that lie outside $\Qc^{(m,d)}$, thus solving the question for both types of quantum sets.\footnote{One could also ask, more generally, whether a physical principle recovers \emph{all} multipartite quantum correlations. However, bipartite principles are
known to be generally insufficient for recovering all  multipartite correlations~\cite{gallego:2011}.} Whenever inclusion is achieved for all $m,d$ we will denote this $\mathcal{S} \subseteq \IC_n$.

To solve this question, we study the relation between IC and
the \emph{ghostworld} sets of correlations, introduced in~\cite{lang:2014} as sets of symmetric non-signaling extendible behaviors, a
concept analogous to the well-known Doherty--Parrillo--Spedalieri hierarchy for
entanglement detection~\cite{doherty:2002, doherty:2004}.

\begin{definition}[Ghostworld correlations]
    \label{def:gw}
    A bipartite behavior $P = \{p(a,b|x,y)\}$ belongs to $\GW_n$ if there exist
    two $(n + 2)$-partite non-signaling behaviors
    $Q_A = \{q_A(\mathbf a,b|\mathbf x,y)\}$ and
    $Q_B = \{q_B(a,\bvec|x,\mathbf y)\}$ satisfying
    \beq
        \label{eq:gw-explicit-sums-alternative}
        \begin{aligned}
            \sum_{a_1,\ldots,a_n} q_A(\mathbf a,b|\mathbf x,y) & = p(a_0,b|x_0,y), \\
            \sum_{b_1,\ldots,b_n} q_B(a,\bvec|x,\mathbf y)     & = p(a,b_0|x,y_0).
        \end{aligned}
    \eeq
    Here, $\mathbf a = (a_0,\ldots,a_n)$, and analogously for
    $\bvec,\mathbf x,\mathbf y$, and $Q_A$ and $Q_B$ are invariant under
    permutations of the pairs $(a_i,x_i)$ and $(b_i,y_i)$, respectively, with
    $i = 0,\ldots,n$.
\end{definition}

The sets of behaviours $\GW_n^{(m,d)}$ are convex polytopes for each $m,d$. While it was known that $\GW_1^{(3,2)}$ features non-quantum non-local behaviours~\cite{lang:2014}, we show by analyzing for each $n$ the optimal $\GW_n$ violation of the Collins--Gisin $I_{(n+2)(n+2)22}$
Bell inequality~\cite{collins:2004a}, that non-quantum points exist for each $n$. Furthermore, it is known that for any $n \geq 1 $, $\GW_n$ cannot violate any two-input Bell inequalities, such as e.g.\ CHSH~\cite{lang:2014} and its liftings to higher Bell scenarios, which means that there are also $P\in \Q_c \setminus \GW_n$.

Since the set of $\GW_n^{(m,d)}$ correlations is closed under wirings for each $n,m,d$~\cite{lang:2014}, we were furthermore able to understand the sets $\GW_n$ as bipartite state spaces of a GPT. Using conditional measurement entropies for GPTs~\cite{barnum:2010, short:2010a, weilenmann:2020}, we prove the following first main result.

\begin{result}[Insufficiency of finite-length IC]
    \label{res:gw-ic}
    For all finite $n \ge 2$ and any Bell scenario, $\GW_{n-1} \subseteq \IC_n$. Furthermore, for all finite $n \ge 2$ there
    exists a behavior $P_{n+1} \in \GW_{n-1}^{(n+1,2)}$ such that
    \beq
        P_{n+1} \in \IC_n^{(n+1,2)} \setminus \Qc^{(n+1,2)}.
    \eeq
    In other words, information causality for any finite number  $n$ of dits does not
    single out the quantum set.
\end{result}

The details of our construction and proofs are provided in
Section~\ref{sec:insufficiency-finite-length} below.

To consider the case of arbitrary $n \in \mathbb{N}$, we take a subset of the ghostworld correlations: $\mathcal S^{(3,2)} \coloneqq \conv(\LHV^{(3,2)} \cup \{P^*\}) \subsetneq \GW_1^{(3,2)}$, where $\LHV^{(3,2)}$ is the set of local correlations and $P^*$ is a specific behaviour. 
Using tools from~\cite{anantharam:2013, anantharam:2013b, polyanskiy:2025}, we show that some explicit behaviours including $P^*$ satisfy a
fixed-input strong data-processing inequality~\cite{raginsky:2016, polyanskiy:2025} that gives a sufficient
condition for IC to hold for a single copy of any behaviour in $\mathcal S^{(3,2)}$. These
arguments are presented in Sec.~\ref{sec:insufficiency-single-copy} and they
yield the following. %

\begin{result}[Insufficiency of single-copy IC]
    \label{res:contraction}
    There exists a set of correlations $\mathcal S^{(3,2)}$ such that when restricted to using a single copy of the behaviour, $\mathcal S^{(3,2)} \subseteq \IC_\infty^{(3,2)}$. Furthermore there is a $P^* \in \mathcal S^{(3,2)}$ such that
    \beq
        P^* \notin \Qc^{(3,2)}. %
    \eeq
    In other words, single-copy information causality in the noisy channel
    formulation~\cite{miklin:2021} does not single out the quantum set.
\end{result}

Finally, we extend the argument to arbitrary multi-copy protocols
using any finite number of independent copies of the same behavior,
as we discuss in Section~\ref{sec:insufficiency-general}. 

\begin{result}[Insufficiency of general IC]
    \label{res:contraction-extension}
    There exists a set of correlations $\mathcal S^{(3,2)} \subseteq \IC_\infty^{(3,2)}$ and a behavior $P^* \in \mathcal S^{(3,2)}$ such that
    \beq
        P^* \notin \Qc. %
    \eeq
    In other words, the standard definition of information causality does not single out the quantum set~\cite{pawlowski:2009}.
\end{result}

 Notice that all of our results also apply to notions of IC, where the database of Alice is correlated, as used in~\cite{jain:2024, jain:2026,
miklin:2026, gachechiladze:2026} (see technical Sections~\ref{sec:insufficiency-finite-length}--\ref{sec:insufficiency-general}).
In Appendix~\ref{app:other-IC-formulations}, we further show
that our results also hold for all other notions of IC known to us, namely those from~\cite{chaves:2015c, pollyceno:2026, yu:2026}. Our three main results are summarized in Fig.~\ref{fig:results}.

\section{Preliminaries}
\label{sec:preliminary}

In this section we provide the necessary background for the proofs of our three main results.

\subsection{Generalised probabilistic theories and measurement entropy}
\label{sec:prliminary_gw}
In order to understand the relation between IC and the ghostworld sets of correlations from
Definition~\ref{def:gw}, we regard them as state spaces of a family of bipartite
GPTs indexed by $n$. The advantage of this treatment is that we can prove properties of these theories (such as they satisfy IC) independently of the system size and thus for all $m,d$ simultaneously.

A GPT is a convex-geometric framework encompassing
classical probability theory and quantum theory as special cases~\cite{hardy:2001, barrett:2007}, which describes a system by a convex set $\Omega$
of normalized states and a set $\mathcal E$ of effects, i.e.~affine maps
$e\colon \Omega \to [0,1]$. A measurement is a collection of effects $\{e_k\}_k$
summing to the unit effect $u$ that satisfies $u(\omega) = 1$ for all $\omega$. The
probability of outcome $k$ on state $\omega$ is $p(k) = e_k(\omega)$. A
measurement is fine-grained if its nonzero effects are indecomposable: each can
only be written as a sum of positive effects proportional to itself, analogously
to rank-1 POVMs in quantum theory. We write $\mathcal M$ for the allowed
measurements and $\mathcal M^*$ for the fine-grained ones. 

A common example of GPT, besides classical probability theory and quantum theory, is boxworld~\cite{barrett:2007}, where local states are described by the behaviour $\omega_A = \{p(a|x)\}_{a,x}$, where the only restriction is that $\{p(a|x)\}_{a}$ are probability distributions for all $x$. In this case, the effects of $\mathcal{E}_A$ live in the pointed polyhedral cone with origin ${\mathbf 0}$ and extremal rays $\{\delta_{a|x}\}_{a,x}$, where $\delta_{a|x}$ denotes the standard basis vectors with entry $1$ at position $p(a|x)$. The fine-grained measurements are in this case \beq
            M = \{c_x\delta_{a|x}\}_{x,a},\qquad c_x
              \geq 0, \quad\sum_x c_x
              = 1, \label{eq:gw-fiducial-measurements-main}
        \eeq
(up to proportional splitting of effects) as we show in Appendix~\ref{app:ghostworld_GPT}.
The bipartite boxworld state space $\Omega_{AB}^{BW}$ is in one-to-one correspondence with the set of all non-signalling correlations, i.e, behaviors $\omega_{AB}=\{p(a,b|x,y)\}$ such that 
\begin{equation}
\begin{aligned}
  \sum_a p(a,b|x,y) &=\sum_{a} p(a,b|x',y) \quad \forall\, x,x',b,y,  \\
  \sum_b p(a,b|x,y) &=\sum_{b} p(a,b|x,y') \quad \forall\, y,y',a,x .
\end{aligned}  
\end{equation}
The bipartite effect space is the maximal possible one, $\mathcal{E}_{AB}^{BW}:= \{e \mid e(\omega_{AB}) \in [0,1] \ \forall\, \omega_{AB} \in \Omega_{AB}^{BW} \}$.

The ghostworld GPTs inherit the elementary local state spaces (as well as effect
spaces) from boxworld. The bipartite state space of the
$n$-ghostworld theory is $\Omega_{AB}^{(n)} \equiv \GW_n$, with joint effects
and measurements inherited from boxworld:
\beq
    \mathcal E_{AB}^{(n)} = \left\{ e\big|_{\Omega_{AB}^{(n)}} \colon e \in \mathcal E_{AB}^{\mathrm{BW}} \right\},
\eeq
where $|_{\Omega_{AB}^{(n)}}$ denotes a restriction of the domain. 
For any $m,d$ the sets $\GW_n^{(m,d)}$ are closed under wirings~\cite{lang:2014}. This implies that the $n$-ghostworld GPT introduced here leads to the set of non-local correlations $\GW_n^{(m,d)}$ for any $m,d$. For more properties of $n$-ghostworld and the corresponding sets of correlations see Appendix~\ref{app:ghostworld_GPT}.

In the IC proof, we regard the extensions $Q_A,Q_B$ of Definition~\ref{def:gw}
as auxiliary boxworld states and evaluate their measurement entropies in that
theory. 
Their copied subsystems, called \emph{ghosts}, are not available to the
parties in the physical protocol. More details and proofs are provided in
Appendix~\ref{app:ghostworld_GPT}. 

The \emph{measurement entropy} on a state $\omega_A$ is the infimum of the
Shannon entropy of the outcomes of a fine-grained
measurement~\cite{short:2010a}:
\beq
    H_+(A) = \inf_{\{e_k\} \in \mathcal M^*} H\bigl(\{e_k(\omega_A)\}_k\bigr). \label{eq:bg_measurement_entropy}
\eeq
It reduces to Shannon and von Neumann entropies for classical and quantum
systems, respectively. For a bipartite state $\omega_{AB}$, we use the
conditional measurement entropy~\cite{short:2010a, weilenmann:2020}
\beq
    H_+(A \mid B) = \inf_{\{f_k\} \in \mathcal M_B} \sum_k p(k)H_+(A|_{k}), \label{eq:bg_conditional_measurement_entropy}
\eeq
where $p(k) = f_k(\omega_B)$ and
$\omega_{A|_k} = (\mathrm{id}_A \otimes f_k)(\omega_{AB})/p(k)$ for
$p(k) > 0$.
This conditional entropy is always nonnegative and generally differs from the 
conditional von Neumann entropy for quantum states, but it reduces to the
conditional Shannon entropy in the classical case. Finally, we define the
generally asymmetric quantity
\beq
    \label{eq:bg_measurement_information}
    I_+(A \to B) \coloneqq H_+(A) - H_+(A \mid B),
\eeq
in analogy with the mutual information.

Since the ghostworld GPTs inherit the elementary local state and effect spaces from boxworld, their single system measurement entropy for $\omega_B=\{p(b|y)\}$ simplifies to
\beq
    \label{eq:bg_local_box_entropy}
    H_+(B) = \min_y H(B_y) = \min_y H\left( p(\cdot|y) \right),
\eeq
where $B_y$ denotes the outcome of measurement $y$. For classical states $C,M$,
the optimization defining $H_+(C \mid B,M)$ allows the measurement on $B$ to
depend on the observed value of $M$, so Bob can choose his measurement after
reading the message. In Appendix~\ref{app:ghostworld_GPT}, we prove
Eq.~\eqref{eq:bg_local_box_entropy} and derive similar simplifications for conditional measurement entropies of ghostworld systems.

\subsection{Contraction inequalities}

The technical tools used in our work include contraction coefficients of
divergences and the associated strong data-processing inequalities, widely studied concepts in classical~\cite{anantharam:2013, anantharam:2013b, raginsky:2016} and quantum~\cite{lesniewski:1999, hirche:2022} information theory.
A related connection between non-local correlations and hypercontractivity was established by Beigi and Gohari~\cite{beigi:2015}, who showed that
hypercontractivity ribbons are monotones under wirings.

For
probability distributions $q,p$ on the same finite alphabet, let
\beq
    \DKL(q\|p) = \sum_a q(a)\log\frac{q(a)}{p(a)}
\eeq
denote the Kullback--Leibler divergence. If $W$ is a classical channel
(represented by a column-stochastic matrix), we write $Wp$ for the output
distribution, with $(Wp)(b)=\sum_a W(b|a)p(a)$. The data-processing inequality for the KL divergence
reads $\DKL(W q\|W p) \leq \DKL(q\|p)$. For a fixed reference input distribution
$p$, the \emph{fixed-input contraction coefficient} of $W$ is defined
by~\cite{anantharam:2013, raginsky:2016, polyanskiy:2025}
\beq
    \eta(W,p) \coloneqq \sup_q \frac{\DKL(Wq\|Wp)}{\DKL(q\|p)}, \label{eq:bg_kl_contraction}
\eeq
where the supremum is over $q$ satisfying $0 < \DKL(q\|p) < \infty$. Thus
$0 \leq \eta(W,p) \leq 1$, and we have
\beq
    \label{eq:bg_relative_entropy_bound}
    \DKL(Wq\|Wp) \leq \eta(W,p) \DKL(q\|p)
\eeq
for all $q$ such that $\DKL(q\|p) < \infty$. A coefficient strictly below one
gives a \emph{strong data-processing inequality}.
Moreover, for any Markov chain $U \to A \to B$ with input distribution $P_A = p$
and channel $P_{B|A} = W$, it holds
\beq
    \label{eq:bg_mutual_information_contraction}
    I(U:B) \leq \eta(W,p)I(U:A).
\eeq
This follows by applying Eq.~\eqref{eq:bg_relative_entropy_bound} to each
conditional distribution $P_{A|U=u}$ and averaging over $u$. As we will show in the proofs of Results~\ref{res:contraction} and~\ref{res:contraction-extension}, contraction
coefficients can be used to bound the mutual information terms
appearing in IC, and thereby derive sufficient conditions for 
IC to hold.

\subsection{Reduction to deterministic protocols}
\label{sec:deterministic-reduction}

Alice and Bob's strategy for the IC task is specified
by a protocol, that we denote by $\pi$, specifying the (deterministic 
or probabilistic) functions (or maps) they apply to process their 
variables at each stage. Here we recall that to check whether 
Eq.~\eqref{eq:ic_inequality_noisy} is satisfied by a set of nonlocal boxes, one may restrict to a simpler class of \emph{deterministic 
protocols}~\cite{marco:costantino}.

For a single-copy protocol $\pi_1$, fix the behavior $P$ and $p(\rvec)$, and
consider the IC deficit 
${\Delta(P,\pi_1) = \sum_{s=0}^{n-1}I(R_s:G_s)
                                - I(\Rvec,X, A:M')}$.
With the communication channel $p(m'|\rvec,x,a)$ and all other
maps fixed, the joint distribution of all variables is affine in 
each local map. Because mutual information $I(X:Y)$ is 
convex in $p(y|x)$ for fixed $p(x)$ and concave in $p(x)$ for fixed 
$p(y|x)$, one has that $\Delta$ is convex in each local map separately, 
and its maximum can be attained with deterministic measurement choices
and decoding~\cite{marco:costantino}:
\beq
        x = f_X(\rvec),\quad
        y = f_Y(s,m'), \quad
        g = f_G(s,m',b),
\eeq
while the channel remains in general noisy. Since $X$ is then determined by $\Rvec$, the 
right-hand side of IC reduces to $I(\Rvec,A:M')$.

For a $c$-copy protocol $\pi_c$, Alice can choose which unused
copy to measure next and its input adaptively, based on $\rvec$ and her 
previous local choices and outcomes. Bob's choices may depend on
$s$, $m'$, and his previous local choices and outcomes, 
similarly to the characterization of general wirings~\cite{beigi:2015}.
Applying the same deterministic reduction to each choice of such maps~\cite{marco:costantino}, we may write
  \beq
      \begin{aligned}
          \ell_i
          &= f_{L_i}(\rvec,a_{\ell_1},\ldots,a_{\ell_{i-1}}), \\
          x_{\ell_i}
          &= f_{X_i}(\rvec,a_{\ell_1},\ldots,a_{\ell_{i-1}}), \\
          j_k
          &= f_{J_k}(s,m',b_{j_1},\ldots,b_{j_{k-1}}), \\
          y_{j_k}
          &= f_{Y_k}(s,m',b_{j_1},\ldots,b_{j_{k-1}}).
          \label{eq:deterministic}
      \end{aligned}
  \eeq
  Here, Alice measures the $\ell_i$th box at her time-step $i$ with input $x_{\ell_i}$, 
  and Bob measures his $j_k$th box at his time-step $k$ with input $y_{j_k}$. In particular,
  Alice's and Bob's orders might differ.
  Bob's final guess is then $g=f_G(s,m',\bvec)$. 
  For these deterministic protocols, Alice's order and inputs are determined by $\Rvec,\mathbf A$, therefore to check IC with uncorrelated inputs it suffices to consider
\beq
\sum_{s=0}^{n-1}I(R_s:G_s) \leq I(\Rvec,\mathbf A : M').
\eeq

\section{Insufficiency of finite-length information causality}
\label{sec:insufficiency-finite-length}

The IC inequality in Eq.~\eqref{eq:ic_inequality_noisy} is one instance of many
formulations of the principle that have been considered. In particular, it
extends the noisy-channel formulation~\cite{miklin:2021, jain:2024} to an
arbitrary number of copies, and subsumes the original
definition~\cite{pawlowski:2009}. Other versions have been proposed, namely IC
with correlated inputs~\cite{jain:2024, chaves:2015c}, relaxing the independence
assumption between the database entries,
\beq
    \label{eq:technical-part-ic-with-correlated-entries}
    \sum_{s=0}^{n-1}I(R_s:G_s \mid \Rvec_{<s}) \leq I(\Rvec,\mathbf A:M'),
\eeq
where $\Rvec_{<s} = (R_0, \dots, R_{s-1})$. 
Notice that the deterministic reduction 
applies to this expression as well, since $p(\rvec)$ is fixed, and each term on the left-hand side is an average of $I(R_s:G_s \mid \Rvec_{<s}=\rvec_{<s})$ to which the same convexity argument applies. This inequality was shown to be more
constraining under specific protocols, and later extended to general distributed
computation tasks~\cite{miklin:2026}. In the rest of the paper, we state our
results in terms of this inequality, and by ``the IC inequality'' we will refer
to Eq.~\eqref{eq:technical-part-ic-with-correlated-entries}, which implies the usual IC inequality when the entries are independent.

In this section, we prove Result~\ref{res:gw-ic}, using properties of the
ghostworld GPT and conditional measurement entropies introduced in Section~\ref{sec:prliminary_gw}. Let us fix $P \in \GW_n$
in an arbitrary Bell scenario. We consider a database with $n + 1$
possibly correlated entries $\Rvec = (R_0,\ldots,R_n)$, and an independent
question $S \in [n + 1]$ for Bob.

The distributions $Q_A$ and $Q_B$ that extend $P$ are non-signaling across all
bipartitions, hence each of the two can be interpreted as a state of a
multipartite boxworld system~\cite{barrett:2007}. This interpretation is useful
to calculate measurement entropies involving Bob's ghost systems
$\Bghost = (B_0, \dots, B_n)$. %
However, note that these are just auxiliary
distributions, since physically Bob has access to his copy $B = B_0$, i.e., his
share of the bipartite ghostworld state $\omega_{AB} \in \Omega_{AB}^{(n)}$.
With this premise, we describe the IC protocol using the GPT formalism.

In the following, we will rely on the reduction to deterministic protocols from Section~\ref{sec:deterministic-reduction}. In the GPT formalism, after Alice's measurement $x$
with outcome $a$, the conditional boxworld state of Bob and his ghosts is
\beq
    \omega_{\Bghost}^{a|x} = \left\{ \frac{q_B(a,\bvec|x,\mathbf y)}{p_A(a|x)} \right\}_{\bvec,\mathbf y}. \label{eq:gw-conditional-extension}
\eeq
In particular, by non-signaling
$\sum_a p_A(a|x)\omega_{\Bghost}^{a|x} = \omega_{\Bghost}$ is
independent of $x$. For the deterministic protocols considered here,
$p(x|\rvec) = \delta_{x,x_\rvec}$. After Alice's measurement and communication,
discarding $X$ gives the reduced state
\beq
    \begin{aligned}
        \omega_{\Rvec AM'\Bghost} & = \sum_{\mathbf r,a,m'}p(\mathbf r)p_A(a|x_\rvec)                                                  \\
                                       & \qquad\qquad \times p(m'|\mathbf r,a)\,[\mathbf r,a,m'] \otimes \omega_{\Bghost}^{a|x_\rvec}.
    \end{aligned} \label{eq:gw-post-channel-state}
\eeq
where $[c]$ denotes a pure classical state. Note that discarding $A,M'$ leaves
$\omega_\Rvec \otimes \omega_{\Bghost}$. Conditioning on $M'$ can
nevertheless correlate $\Rvec$ with $\Bghost$.

Measuring the ghost systems in this reduced state with inputs $\mathbf y$ gives
the joint distribution
\beq
    \begin{aligned}
        p(\rvec,a,m',\bvec|\mathbf y) = p(\rvec)p(m'|\rvec,a) q_B(a,\bvec|x_\rvec,\mathbf y).
    \end{aligned} \label{eq:gw-main-post-channel}
\eeq
All measurement entropies below are evaluated on the state in
Eq.~\eqref{eq:gw-post-channel-state} and its marginals. In particular, we
consider measurements on $M'\Bghost$. Since $M'$ is classical, we can first
read its value and then choose a joint measurement
$\{f_{k|m'}\}_k$ on $\Bghost$~\cite[Appendix~A]{weilenmann:2020}, with
outcome $k$, and write
\beq
    H_+(\Rvec \mid M',\Bghost) = \inf_{\{f_{k|m'}\}_k \in \mathcal M_{\Bghost}} H(\Rvec \mid M',K). \label{eq:gw-classical-target-measurement}
\eeq
We prove that in this setting two entropy bounds can be derived that together
give the IC inequality.

\begin{lemma}
    \label{lem:ghost-information-budget}
    For any deterministic protocol, considering the state after the message is
    received yields
    \beq
        \label{eq:ghost-information-budget}
        I_+(\Rvec \to M',\Bghost) \leq I(\Rvec,A:M').
    \eeq
\end{lemma}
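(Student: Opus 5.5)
The plan is to reduce the statement to a classical mutual-information inequality and prove that inequality from the non-signalling structure of $Q_B$. First I would use that $\Rvec$ is classical, so $H_+(\Rvec)=H(\Rvec)$. By Eq.~\eqref{eq:gw-classical-target-measurement}, the quantity $H_+(\Rvec\mid M',\Bghost)$ is an infimum of $H(\Rvec\mid M',K)$ over measurement families $\{f_{k|m'}\}$ on $\Bghost$ that may be chosen after reading $m'$. Hence
\[
I_+(\Rvec\to M',\Bghost)=\sup_{\{f_{k|m'}\}} I(\Rvec:M',K).
\]
It therefore suffices to show $I(\Rvec:M',K)\le I(\Rvec,A:M')$ for every fixed family $\{f_{k|m'}\}$.

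Second, I would write out the joint distribution of $(\Rvec,A,M',K)$ from Eq.~\eqref{eq:gw-post-channel-state}:
\[
p(\rvec,a,m',k)=p(\rvec)\,p_A(a|x_\rvec)\,p(m'|\rvec,a)\,f_{k|m'}\bigl(\omega_{\Bghost}^{a|x_\rvec}\bigr).
\]
The structural input is non-signalling, as noted below Eq.~\eqref{eq:gw-conditional-extension}: $\sum_a p_A(a|x)\,\omega^{a|x}_{\Bghost}=\omega_{\Bghost}$ for every $x$. Consequently, for any fixed measurement $\mu$, the hypothetical outcome $K_\mu$ obtained by measuring $\mu$ regardless of the message satisfies $K_\mu\perp \Rvec$. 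All correlation between $\Rvec$ and $K$ must therefore pass through the reweighting by $p(m'|\rvec,a)$ and through the message-dependent choice of measurement.

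Third, I would formalise this by a divergence computation conditioned on each value $m'$. I would compare $p(\rvec,k\mid m')$ with a reference distribution in which $K$ is produced by $f_{\cdot|m'}$ acting on $\omega^{a|x_\rvec}$ with $(\Rvec,A)$ drawn from its prior $p(\rvec)p_A(a|x_\rvec)$, which makes $K$ independent of $\Rvec$. Equivalently, I would enlarge the system to the classical pair $(\Rvec,A)$ together with an independent copy of the measurement outcome. The intended chain is
\[
I(\Rvec:M',K)\le D\bigl(p_{\Rvec A M' K}\,\big\|\,p_{\Rvec A}\,p_{M'K}\bigr)-I(A:M'K\mid\Rvec),
\]
followed by splitting $p_{M'K}=p_{M'}p_{K|M'}$. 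The aim is that, once the product structure $\Rvec\perp K_\mu$ is used, all terms not involving $M'$ cancel and the remainder is at most $I(\Rvec,A:M')$. In effect, the message is the only channel through which information about $(\Rvec,A)$ selects a particular conditional ghost state.

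I expect the main obstacle to be precisely this step, because $K$ depends on $(\Rvec,A)$ both through $\omega^{a|x_\rvec}$ and through the message-dependent measurement choice. As a result, a naive chain rule leaves a term $I(\Rvec,A:K\mid M')$ that need not vanish. My first attempt would be to absorb the $m'$-dependence using the symmetric structure of $Q_B$: interpret the family $\{f_{k|m'}\}_{m'}$ as a single fine-grained wiring on the classical-boxworld system $M'\Bghost$. I would then apply data processing for $I_+$ under Alice's local operations, namely appending $A$ and $\Rvec$ and then applying the channel to $M'$. This reduces the claim to $I_+(\Rvec A\to\Bghost)=0$, which holds by the product form $\omega_{\Rvec A\Bghost}$ marginalised over $A$ together with non-signalling, plus a chain-rule step for $I_+$ with a classical conditioning system. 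If the chain rule for $I_+$ fails in general GPTs, I would fall back to proving it directly here, using that the conditioning system $M'$ is classical.
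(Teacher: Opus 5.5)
Your reduction to proving $I(\Rvec:M',K)\le I(\Rvec,A:M')$ for each fixed family $\{f_{k|m'}\}$ is correct, and so is the joint distribution. However, the proposal stops exactly where the content of the lemma begins. The displayed chain has no content: $\DKL(p_{\Rvec AM'K}\|p_{\Rvec A}p_{M'K})=I(\Rvec,A:M',K)$, so subtracting $I(A:M',K\mid\Rvec)$ returns $I(\Rvec:M',K)$ by the chain rule. It is an equality, and non-signalling has not yet been used.

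Your route around the term $I(\Rvec,A:K\mid M')$ does not work, for three reasons. First, $I_+(\Rvec A\to\Bghost)$ is generally strictly positive, because Alice's outcome is correlated with Bob's ghosts. For instance, for the local box $p(a,b|x,y)=\tfrac12\delta_{a,b}$ one gets $I_+(\Rvec A\to\Bghost)\ge 1$; discarding $A$ only gives $I_+(\Rvec\to\Bghost)=0$. Second, $M'$ is generated from $(\Rvec,A)$, and $\Rvec$ is also the first argument of $I_+$. Appending $(\Rvec,A)$ to the conditioning side and processing it into $M'$ therefore only yields the trivial bound $I_+(\Rvec\to\Rvec A\Bghost)=H(\Rvec)$. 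Third, getting from $I_+(\Rvec\to\Bghost)=0$ to the noisy bound $I(\Rvec,A:M')$ needs a chain rule and a strong-subadditivity-type step for measurement entropies. These are not available in GPTs, and avoiding them is the point of the paper's argument. Your fallback, ``prove it directly using that $M'$ is classical'', is the whole lemma.

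The missing step is the reference-distribution idea you mention in passing, carried out for each message value via positivity or data processing of relative entropy rather than via a chain rule. Fix $m'$ and the kernel $(\rvec,a)\mapsto(\rvec,k)$ with weights $f_{k|m'}(\omega_{\Bghost}^{a|x_\rvec})$. This kernel maps the posterior $p(\rvec,a|m')$ to $p(\rvec,k|m')$. By linearity of effects and $\sum_a p_A(a|x)\,\omega^{a|x}_{\Bghost}=\omega_{\Bghost}$, it maps the prior $p(\rvec)p_A(a|x_\rvec)$ to the product $p(\rvec)f_{k|m'}(\omega_{\Bghost})$. Data processing then gives $\DKL(p_{\Rvec K|m'}\|p_{\Rvec}\otimes f_{\cdot|m'}(\omega_{\Bghost}))\le\DKL(p_{\Rvec A|m'}\|p_{\Rvec A})$. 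The left-hand side is at least $I(\Rvec:K\mid M'=m')+\DKL(p_{\Rvec|m'}\|p_{\Rvec})$, and averaging over $m'$ yields $I(\Rvec:M',K)\le I(\Rvec,A:M')$.

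The paper does the equivalent computation in entropic form. Set $h_{m'}=-\sum_{\rvec,a}p(\rvec,a|m')\log p(a|\rvec)$. Concavity of the logarithm gives
$H(\Rvec,A\mid M'=m')-h_{m'}-H(\Rvec\mid M'=m',K)\le\log\sum_{\rvec,a,k}p(k|\rvec,a,m')\,p(\rvec|m',k)\,p(a|\rvec)$.
That sum is at most $1$ precisely by the non-signalling identity. Averaging with $\sum_{m'}p(m')h_{m'}=H(A|\Rvec)$ then gives $H_+(\Rvec\mid M',\Bghost)\ge H(\Rvec)-I(\Rvec,A:M')$.
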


\begin{proof}
 Let us first fix a message $M'=m'$ with $p(m')>0$ and a family $\{f_{k|m'}\}_k$ of measurements on $\Bghost$, chosen according
    to the received message. We then show that
 \begin{equation}
     H(\Rvec \mid M' = m',K) \geq H(\Rvec,A \mid M' = m')- h_{m'}, 
 \end{equation}
 where $h_{m'}=- \sum_{\mathbf r,a} p(\rvec,a|m') \log(p(a|\rvec))$.
 To see this, notice that 
 \begin{equation}
 \begin{aligned}
H&(\Rvec,A \mid M' = m')- h_{m'} - H(\Rvec \mid M' = m',K) \\
&= \sum_{\rvec,a,k}p(\rvec,a,k|m') \log\frac{p(\rvec|m',k)p(a|\rvec)}{p(\rvec,a|m')} \\
&\leq \log \left(\sum_{\rvec,a,k}p(\rvec,a,k|m') \frac{p(\rvec|m',k)p(a|\rvec)}{p(\rvec,a|m')} \right), \label{eq:negativity}
\end{aligned}
 \end{equation}
 where we use the concavity of the logarithm. To see that this is non-positive, notice that 
 \begin{equation}
     \begin{aligned}
     \sum_{\rvec,a,k}&p(\rvec,a,k|m') \frac{p(\rvec|m',k)p(a|\rvec)}{p(\rvec,a|m')} \\
       &\leq  \sum_{\rvec,a,k}p(k|\rvec,a,m') p(\rvec|m',k)p(a|\rvec)\\ 
         &= \sum_{\rvec,k} p(\rvec|m',k) \sum_a f_{k|m'}(\omega_{\Bghost}^{a|x_\rvec}) p(a|\rvec) \\
         &= \sum_{\rvec,k} f_{k|m'}(\omega_{\Bghost}) p(\rvec|m',k) \\
         &= 1
     \end{aligned}
 \end{equation}
 where the inequality follows because $p(k|\rvec, a,m') >0$ is possible also in cases where $p(\rvec, a |m') =0$; in the second equality we use linearity of the effects and non-signalling. If $p(\rvec, a |m')=0$, 
 we can assign $p(k|\mathbf r,a,m') = f_{k|m'}(\omega_{\Bghost}^{a|x_\rvec})$, and if $p(k|m')=0$,
 we can define $p(\rvec|m',k)$ as any normalized distribution.
 
    Averaging over $m'$ and then taking the infimum over $\{f_{k|m'} \}_k$, we obtain
    \beq
        \begin{aligned}
            H_+(\Rvec \mid M',\Bghost) & \geq H(\Rvec,A \mid M') - H(A \mid \Rvec) \\
                                            & = H(\Rvec) - I(\Rvec,A:M'),
        \end{aligned} \label{eq:gw-adaptive-measurement-entropy}
    \eeq
    which is the claimed bound.
\end{proof}

The second needed inequality is supplied by the $n$-ghostworld assumption. The
idea is that for each received message, one can measure each ghost $B_i$ with an
input $y_i$ that minimizes the remaining uncertainty about $R_i$. The resulting
product measurement is an allowed joint measurement on the extension, so this implies an inequality resembling a 
subadditivity, Eq.~\eqref{eq:ghost-subadditivity}.

\begin{lemma}
    \label{lem:ghost-entropy}
    For the state after the received message,
    \begin{equation}
        \begin{aligned}
            H_+(\Rvec \mid M',\Bghost) \leq \sum_{i=0}^{n}H_+(R_i \mid M',B). \label{eq:ghost-subadditivity}
        \end{aligned}
    \end{equation}
    Moreover,
    \beq
        H_+(\Rvec \mid M',\Bghost) \leq \sum_{i=0}^{n}H(R_i \mid \Rvec_{<i},M',G_i), \label{eq:ghost-correlated-entropy}
    \eeq
\end{lemma}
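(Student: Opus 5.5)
The plan is to upper bound the infimum in Eq.~\eqref{eq:gw-classical-target-measurement} by exhibiting one specific admissible measurement on $\Bghost$ for each received message. For every $m'$, choose inputs $y_i^{(m')}$, $i=0,\ldots,n$, and perform on $\Bghost$ the product measurement with outcome $K=(B_{0,y_0},\ldots,B_{n,y_n})$. Since $Q_B$ is a multipartite boxworld state (non-signaling in every bipartition), this product of local fiducial measurements is an allowed joint measurement; the only check is that measurements whose inputs are chosen adaptively from the classical $M'$ are permitted, which the paper already grants via \cite[Appendix~A]{weilenmann:2020}. By Eq.~\eqref{eq:gw-main-post-channel}, the joint distribution of $(\Rvec,M',K)$ is obtained from $q_B$ with inputs $\mathbf y^{(m')}$. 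Hence
\[
H_+(\Rvec\mid M',\Bghost)\le H(\Rvec\mid M',B_{0,y_0},\ldots,B_{n,y_n}).
\]

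Next, apply the chain rule together with the fact that conditioning reduces entropy:
\[
H(\Rvec\mid M',K)=\sum_{i}H(R_i\mid \Rvec_{<i},M',K)\le \sum_i H(R_i\mid \Rvec_{<i},M',B_{i,y_i}),
\]
and, dropping $\Rvec_{<i}$ as well, $\le\sum_i H(R_i\mid M',B_{i,y_i})$.

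The key step is permutation invariance of $Q_B$ under exchanging $(b_i,y_i)$ with $(b_0,y_0)$. Because of it, the marginal distribution of $(\Rvec,M',B_{i,y_i})$ equals that of $(\Rvec,M',B_{0,y})$ with the same input function $y(m')$, i.e.\ the distribution Bob would see measuring his real system $B$. The argument is as follows. By Eq.~\eqref{eq:gw-main-post-channel} this marginal is $p(\rvec)p(m'|\rvec,a)\sum q_B(a,\bvec|x_\rvec,\mathbf y)$, summed over all $b_j$ with $j\neq i$. By non-signaling and symmetry this sum equals $p(a,b_i|x_\rvec,y_i)$, exactly as for index $0$ by Eq.~\eqref{eq:gw-explicit-sums-alternative}.

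For Eq.~\eqref{eq:ghost-subadditivity}, choose each $y_i(m')$ to be the optimizer of $H_+(R_i\mid M',B)$. By the ghostworld simplification of conditional measurement entropy (from the appendix, as for Eq.~\eqref{eq:bg_local_box_entropy}), this optimum is attained by a single input chosen depending on $m'$. Then each term equals $H_+(R_i\mid M',B)$. For Eq.~\eqref{eq:ghost-correlated-entropy}, instead set $y_i(m')=f_Y(i,m')$, Bob's deterministic input when $S=i$. Since $G_i=f_G(i,m',b)$ is a function of $(M',B_{y})$, data processing yields
\[
H(R_i\mid\Rvec_{<i},M',B_{i,y_i})\le H(R_i\mid \Rvec_{<i},M',G_i),
\]
where the right-hand side is evaluated in the real protocol with $S=i$. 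Here we need the independence of $S$ from $(\Rvec,A,M')$, so that conditioning on $S=i$ does not alter the distribution.

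The main obstacle I expect is justifying rigorously that the marginal on ghost $i$ coincides with the physical distribution when the inputs of all ghosts depend jointly on $m'$. This is handled by fixing $m'$ first, so that $\mathbf y$ is a constant vector, and then applying symmetry and non-signaling pointwise. A secondary point is confirming that the ghostworld conditional entropy $H_+(R_i\mid M',B)$ is indeed attained by fiducial, message-dependent single inputs rather than by mixtures $\{c_y\delta_{b|y}\}$. By concavity of entropy, a coarse-grained mixture never does better than the best single $y$.
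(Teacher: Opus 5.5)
Your proposal is correct and follows the paper's proof. The ingredients are the same: a message-dependent product measurement on the ghosts, chain rule or subadditivity plus conditioning to reduce to single-ghost terms, symmetry and non-signaling of $Q_B$ to identify each ghost marginal with that of $B$, the appendix identity for $H_+(R_i \mid M',B)$, and Bob's inputs $f_Y(i,m')$ for Eq.~\eqref{eq:ghost-correlated-entropy}. The only difference is cosmetic: you apply data processing from $(M',B_{i,y_i})$ to $G_i$ directly, while the paper first coarse-grains the product measurement into the joint guess $\mathbf G$ and then discards all guesses but $G_i$, and the two are equivalent.
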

\begin{proof}
    We restrict to measurements that read $M'$ and then perform one
    measurement on each ghost, with inputs $\mathbf y$ chosen according to $m'$.
    These product measurements are allowed, so they give an upper bound on the
    conditional measurement entropy in
    Eq.~\eqref{eq:gw-classical-target-measurement}:
    \beq
        \begin{aligned}
            & H_+(\Rvec \mid M',\Bghost)                                                                                     \\
            & \quad \leq \sum_{m'}\min_{\mathbf y}\sum_{\bvec} p(m',\bvec|\mathbf y)H(\Rvec|m',\bvec,\mathbf y)              \\
            & \quad \leq \sum_{m'}\min_{\mathbf y}\sum_{i=0}^{n}\sum_{\bvec} p(m',\bvec|\mathbf y)H(R_i|m',\bvec,\mathbf y). \\
            & \quad \leq \sum_{m'}\min_{\mathbf y}\sum_{i=0}^{n}\sum_{b_i} p(m',b_i|y_i)H(R_i|m',b_i,y_i)                    \\
            & \quad = \sum_{i=0}^{n}H_+(R_i \mid M',B_i).                                                                    \\
        \end{aligned} \label{eq:gw-ghost-proof}
    \eeq
    The second inequality uses Shannon subadditivity. The third uses concavity:
    for each $i$, averaging $p(r_i|m',\bvec,\mathbf y)$ over the other ghost
    outcomes $\bvec_{-i}$ with weights $p(\bvec_{-i}|m',b_i,\mathbf y)$ gives
    $p(r_i|m',b_i,\mathbf y)$, which equals $p(r_i|m',b_i,y_i)$ by
    non-signaling. The entropy of this mixture is at least the corresponding
    average of the entropies. The last step follows by minimizing each $y_i$
    independently, and using a measurement entropy identity (see
    Eq.~\eqref{eq:gw-classical-conditional-entropy} in
    Appendix~\ref{app:ghostworld_GPT}). Because the ghosts are identical, we get
    the bound of Eq.~\eqref{eq:ghost-subadditivity}.

    To prove the second part of the lemma, use the protocol input
    $y_i = f_Y(i,m')$ and decoding $g_i = f_G(i,m',b_i)$ on the $i$th ghost. The
    resulting joint guesses $\mathbf G = (G_0,\ldots,G_n)$ have the same
    individual marginals $(\Rvec,M',G_i)$ as the physical protocol. For each
    $m'$, the joint guesses can be regarded as the outcome of an allowed
    measurement on $\Bghost$, obtained by classical processing of the
    product-measurement outcomes, therefore like before we get an upper bound
    for the measurement entropy:
    \beq
        \begin{aligned}
            H_+(\Rvec \mid M',\Bghost) %
            & \leq \sum_{m',\mathbf g}p(m',\mathbf g)H_+(\Rvec|m',\mathbf g) \\
            & = H(\Rvec \mid M',\mathbf G) \\
            & = \sum_{i=0}^{n}H(R_i \mid \Rvec_{<i},M',\mathbf G) \\
            & \leq \sum_{i=0}^{n}H(R_i \mid \Rvec_{<i},M',G_i),
        \end{aligned} \label{eq:gw-correlated-ghost-proof}
    \eeq
    We have used that the conditional states of $\Rvec$ are classical, so their
    measurement entropies are Shannon entropies. The last two steps are the
    Shannon entropy chain rule and data processing, discarding all guesses
    except $G_i$ in each summand.
\end{proof}

\begin{theorem}
    \label{thm:gw-ic}
    For all $n \geq 1$, any correlation in $\GW_n$ satisfies information
    causality for $n + 1$ correlated database entries:
    \beq
        \label{eq:gw-noisy-ic}
        \sum_{s=0}^{n}I(R_s:G_s \mid \Rvec_{<s}) \leq I(\Rvec,\mathbf A:M').
    \eeq
    The statement holds for any finite number $c$ of independent $\GW_n$
    behaviours. In particular, $\GW_n \subseteq \IC_{n+1}$.
\end{theorem}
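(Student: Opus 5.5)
Combining Lemma~\ref{lem:ghost-information-budget} and the second part of Lemma~\ref{lem:ghost-entropy} gives the single-copy statement; the multi-copy case is then handled by absorbing the copies into a single system. For the single copy, fix a deterministic protocol, as allowed by Section~\ref{sec:deterministic-reduction}. Lemma~\ref{lem:ghost-information-budget} gives
\[
H(\Rvec) - I(\Rvec,A:M') \leq H_+(\Rvec\mid M',\Bghost),
\]
and Eq.~\eqref{eq:ghost-correlated-entropy} gives
\[
H_+(\Rvec\mid M',\Bghost)\le \sum_{i=0}^{n} H(R_i\mid \Rvec_{<i},M',G_i).
\]
Chaining them yields
\[
\sum_i \bigl[H(R_i\mid\Rvec_{<i}) - H(R_i\mid\Rvec_{<i},M',G_i)\bigr] \leq I(\Rvec,A:M'),
\]
using the chain rule $H(\Rvec)=\sum_i H(R_i\mid\Rvec_{<i})$. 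Each bracket equals $I(R_i:M',G_i\mid\Rvec_{<i})$, which is at least $I(R_i:G_i\mid\Rvec_{<i})$ by monotonicity of mutual information. Here $G_i$ denotes the guess $f_G(i,M',B)$ produced when $S=i$. Since $S$ is independent of $(\Rvec,A,M',B)$ and $y,g$ are deterministic functions of $(s,m',b)$, the joint law of $(\Rvec,G)$ conditioned on $S=i$ is exactly that of $(\Rvec,G_i)$. This gives Eq.~\eqref{eq:gw-noisy-ic} for $c=1$.

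For $c$ copies, I would regard the $c$ independent boxes $P^{(1)},\dots,P^{(c)}\in\GW_n$, with Alice's adaptive wiring, as a single bipartite ghostworld resource. The tensor-product extension $Q_B=\bigotimes_j Q_B^{(j)}$ is again non-signaling. It is symmetric under permuting the $(n+1)$ ghost copies, where each ghost now bundles all $c$ of Bob's subsystems, and permutations act simultaneously across all copies. Alice's sequential adaptive measurement, described by Eq.~\eqref{eq:deterministic}, is a wiring and hence an allowed fine-grained-or-coarser measurement in boxworld on her side. Conditioned on her outcomes $\mathbf A$, Bob's ghosts are left in a conditional state $\omega_{\Bghost}^{\mathbf a|\rvec}$, which averages to the unconditioned $\omega_{\Bghost}$ by non-signaling of the product extension.

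With this in hand, I would rerun the proof of Lemma~\ref{lem:ghost-information-budget} verbatim with $A$ replaced by $\mathbf A$ and $p(a|\rvec)$ replaced by $p(\mathbf a|\rvec)$. The only ingredient used there is
\[
\sum_{\mathbf a} f_{k|m'}\bigl(\omega_{\Bghost}^{\mathbf a|\rvec}\bigr)\,p(\mathbf a|\rvec) = f_{k|m'}(\omega_{\Bghost}),
\]
which follows from linearity and non-signaling. For Lemma~\ref{lem:ghost-entropy}, the plan is to replace Bob's single measurement on ghost $i$ by his full adaptive wiring on the $i$th bundle of ghosts, with decoding $g_i=f_G(i,m',\bvec^{(i)})$. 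The measurement that reads $m'$ and applies this wiring independently to each bundle is a valid boxworld measurement on the multipartite extension. Its $i$th marginal coincides with the physical protocol's law of $(\Rvec,M',G_i)$, by non-signaling together with the symmetry of the extension.

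The main obstacle is this last point. One must check that Alice's adaptive wiring over several copies leaves Bob's ghost bundles in states whose single-bundle marginals reproduce the physical correlations. This needs the permutation symmetry and non-signaling of the product extension to be compatible with adaptivity. I would try to settle it by writing the joint distribution explicitly as a product of the $q_B^{(j)}$ evaluated at adaptively chosen inputs, and summing out the other bundles one box at a time in reverse temporal order. An alternative route is to invoke closure of $\GW_n$ under wirings~\cite{lang:2014}, which collapses the $c$ copies into one effective $\GW_n$ behavior and reduces everything to the single-copy case. Finally, $\GW_n\subseteq\IC_{n+1}$ follows by specializing to independent entries, for which $I(R_s:G_s\mid\Rvec_{<s})=I(R_s:G_s)$ holds because $R_s$ is independent of $(\Rvec_{<s},S)$ and $G_s$ is computed from data independent of $\Rvec_{<s}$ given $R_s$, up to the conditional-independence check.
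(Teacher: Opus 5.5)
Your single-copy argument is exactly the paper's: it chains Lemma~\ref{lem:ghost-information-budget} with the second part of Lemma~\ref{lem:ghost-entropy}, applies the chain rule for $H(\Rvec)$, and then drops $M'$ from the conditioning. For $c$ copies, the paper takes what you list as your alternative. It wires the $c$ boxes into one effective behavior, with Alice's input $\rvec$, Bob's input $(s,m')$ and outputs $(\mathbf a,\bvec)$. Closure of $\GW_n$ under wirings~\cite{lang:2014} places this behavior in $\GW_n$, and the single-copy argument is then rerun with $A$ replaced by $\mathbf A$.

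Your primary product-extension route is also valid. It essentially inlines the closure-under-wirings argument and only needs Bob-side extensions. The obstacle you flag is not serious:
\begin{itemize}
\item the marginal of $\bigotimes_j Q_B^{(j)}$ on Alice's systems together with bundle $i$ is exactly $\bigotimes_j P^{(j)}$, by the marginal condition and symmetry of each $Q_B^{(j)}$;
\item Alice's wiring acts only on her own subsystems, so it does not interact with the bundle permutation symmetry;
\item summing out the other bundles' adaptive wirings, copy by copy in reverse measurement order, is licensed by nonsignaling.
\end{itemize}
The paper's route is shorter; yours is self-contained.

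One step is wrong, though it does not affect the conclusion. For independent entries you claim $I(R_s:G_s\mid\Rvec_{<s})=I(R_s:G_s)$, justified by conditional independence of $G_s$ and $\Rvec_{<s}$ given $R_s$. That conditional independence fails, because $G_s$ depends on $M'$, which depends on the whole database. For example, take $M=R_0\oplus R_1$ over a perfect channel and $G_1=M'$. Then $I(R_1:G_1)=0$ while $I(R_1:G_1\mid R_0)=1$ bit, so the equality is false. What does hold, and is all you need, is
\[
I(R_s:G_s)\le I(R_s:G_s,\Rvec_{<s})=I(R_s:G_s\mid\Rvec_{<s}),
\]
using $I(R_s:\Rvec_{<s})=0$. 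This is the inequality the paper uses to pass from the correlated-input form to $\GW_n\subseteq\IC_{n+1}$.
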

\begin{proof}
    Start by considering one copy only of $P$, and write
    $\mathcal I \coloneqq \sum_{i=0}^{n}I(R_i:G_i \mid \Rvec_{<i})$ for the
    left-hand side of IC. Use the Shannon entropy chain rule
    $H(\Rvec) = \sum_{i=0}^{n}H(R_i \mid \Rvec_{<i})$ and the two lemmas:
    \beq
        \label{eq:gw-ic-proof-chain}
        \begin{aligned}
            \mathcal I & \leq H(\Rvec) - \sum_{i=0}^{n}H(R_i \mid \Rvec_{<i},M',G_i) \\
                       & \leq I_+(\Rvec \to M',\Bghost)                         \\
                       & \leq I(\Rvec,A:M').
        \end{aligned}
    \eeq
    In the first inequality, we add $M'$ to the conditioning in each entropy.
    The second uses Lemma~\ref{lem:ghost-entropy}, and the third is
    Lemma~\ref{lem:ghost-information-budget}. 
    Finally, consider a protocol using independent behaviors $P_1,\ldots,P_c \in \GW_n$. This protocol defines an    effective behavior with external inputs $\rvec$ for Alice and $(s,m')$ for    Bob, and outputs $\mathbf a$ and $\bvec$~\cite{marco:costantino}. Since $\GW_n$ is    closed under wirings~\cite{lang:2014}, this behavior belongs to $\GW_n$.    Using it with the same channel $p(m'|\rvec,\mathbf a)$ and decoding strategy    reproduces the original protocol. Applying the single-copy argument with $A$    replaced by $\mathbf A$ proves Eq.~\eqref{eq:gw-noisy-ic} for every finite    $c$. 
\end{proof}

We now construct the nonlocal non-quantum correlations mentioned in Result~\ref{res:gw-ic}.
Let us consider the $(m,m,2,2)$ scenario, with $x,y \in \{1,\ldots,m\}$, $a,b \in \{0,1\}$, and set
$p_x^A = p_A(0|x)$, $p_y^B = p_B(0|y)$, and $p_{xy} = p(0,0|x,y)$ for the Collins--Gisin
representation~\cite{collins:2004a}. Consider the functional
\beq
    I_{mm22}(P) = \sum_{x,y=1}^{m}M_{xy}p_{xy}
                  - \sum_{y=1}^{m}(m - y)p_y^B
                  - p_1^A, \label{eq:gw-bell-functional}
\eeq
where
\beq
    M_{xy} = \begin{cases}
                 1,  & x + y \leq m + 1, \\
                 -1, & x + y = m + 2,    \\
                 0,  & x + y \geq m + 3.
             \end{cases} \label{eq:gw-bell-coefficients}
\eeq
Its local bound is zero~\cite{collins:2004a}.

\begin{proposition}
    \label{prop:bell-upper}
    For every $m \geq 2$, the behavior $P_m$ in the $(m,m,2,2)$ Bell scenario
    with the following entries
    \begin{equation}
        \begin{behaviorarray}{c|cccc}
            xy \ \backslash \ ab & 00  & 01      & 10  & 11  \\
            \hline
            x+y=m+2              & 0   & (m-1)/m & 1/m & 0   \\
            x+y\ne m+2           & 1/m & (m-2)/m & 0   & 1/m
        \end{behaviorarray} \label{eq:bell-full-table}
    \end{equation}
    belongs to $\GW_{m-2}$ and attains $I_{mm22}(P_m) = 1/m$.
\end{proposition}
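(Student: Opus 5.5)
The Bell value is a direct computation, and membership in $\GW_{m-2}$ will be shown by building the two extensions of Definition~\ref{def:gw} explicitly. I would first check that \eqref{eq:bell-full-table} is a valid non-signalling behaviour. Both rows give $p_A(0|x)=(m-1)/m$ and $p_B(0|y)=1/m$ for every $x,y$.

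\emph{Bell value.} In Collins--Gisin form, $p_{xy}=1/m$ whenever $x+y\neq m+2$, and $p_{xy}=0$ on the anti-diagonal $x+y=m+2$. There are $m(m+1)/2$ pairs with $x+y\le m+1$, so $\sum M_{xy}p_{xy}=(m+1)/2$. The anti-diagonal has weight $-1$ but contributes nothing, and pairs with $x+y\ge m+3$ carry no weight. Subtracting $\sum_y (m-y)/m=(m-1)/2$ and $p^A_1=(m-1)/m$ gives $I_{mm22}(P_m)=1-(m-1)/m=1/m$. This exceeds the local bound $0$.

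\emph{Extension $Q_B$ (Alice plus $m-1$ Bob copies).} I would condition on Alice's outcome. The table shows that, given $a=1$ and input $x$, Bob's output is deterministic: $b=0$ iff $y=y^*(x):=m+2-x$. For $x=1$ no such $y$ exists, so then $b=1$ always. I would therefore set the conditional Bob-side state after $a=1$ to the product deterministic box $D_x$, in which each copy $j$ outputs $b_j=0$ iff $y_j=y^*(x)$. Product deterministic boxes are non-signalling and permutation symmetric.

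I would then look for an $x$-independent, permutation-symmetric, non-signalling $(m-1)$-party Bob state $T$ whose one-party marginals have $p(b=0|y)=1/m$ for every $y$. Given $T$, I would define the $a=0$ conditional state by
\[
Q_x=\tfrac{m}{m-1}\bigl(T-\tfrac1m D_x\bigr),
\]
and set $q_B(a,\bvec|x,\mathbf y)=\tfrac1m[a=1]D_x+\tfrac{m-1}{m}[a=0]Q_x$. Non-signalling from Alice to the Bobs then holds by construction, since the mixture over $a$ is $T$ for every $x$. Non-signalling among the Bobs holds because $D_x$ and $T$ satisfy it and the Bob-side constraints are linear. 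The one-copy marginals reproduce $P_m$ automatically; in particular $Q_x$ has $p(b=0|y^*)=0$ and $p(b=0|y)=1/(m-1)$ otherwise. The only condition left is positivity, $T\ge \tfrac1m D_x$ entrywise for all $x$.

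\emph{Main obstacle.} Constructing $T$ is the hard step. The naive choice $T=\tfrac1m\sum_x D_x$ fails, because it gives $p(b=0|y=1)=0$ instead of $1/m$. I would next try $T$ as a symmetric mixture of deterministic product boxes indexed by a hidden label $z$. Each such box outputs $0$ on a prescribed set of inputs, and the labels are chosen so that each $D_x$ appears with weight $\ge 1/m$ while input $y=1$ is also covered. Here the number of copies, $m-1$ Bobs against $m$ inputs, should be exactly what makes these marginal and positivity constraints feasible, which is why the hierarchy level is $m-2$. If no clean closed form appears, I would set it up as a small symmetric LP, using permutation symmetry of the copies to reduce to type classes of $\mathbf y$, and guess the pattern from $m=3,4$. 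That would agree with the known $\GW_1^{(3,2)}$ point of \cite{lang:2014}.

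\emph{Extension $Q_A$ ($m-1$ Alice copies plus Bob).} The Alice side is handled symmetrically by conditioning on $b$. Given $b=0$, which has probability $1/m$, Alice's output is deterministic: $a=1$ iff $x=m+2-y$. Given $b=1$, one completes with an analogous state $T'$ and $Q'_y=\tfrac{m}{m-1}(T'-\tfrac1m D'_y)$, and the same positivity check finishes the proof.
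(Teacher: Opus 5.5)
Your Bell-value computation is correct. Your reduction for $Q_B$ is also sound, and in fact exact: $a=1$ forces every Bob copy's output, so the $a=1$ branch must be $\tfrac1m D_x$. Hence a valid $Q_B$ exists if and only if there is a symmetric non-signalling $(m-1)$-copy Bob box $T$ with marginals $p(0|y)=1/m$ and $T\ge\tfrac1m D_x$ entrywise for every $x$; the same holds for $T'$ on the Alice side. But membership in $\GW_{m-2}$, the real content of the proposition, rests entirely on exhibiting $T$ and $T'$, and you leave this open. Solving an LP for $m=3,4$ and guessing a pattern is not a proof for all $m$. Your heuristic that ``each $D_x$ appears with weight $\ge 1/m$'' in a mixture also cannot work as stated: $m$ components of weight $\ge 1/m$ use up all the weight and force $T=\tfrac1m\sum_x D_x$, which you already rejected.

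The missing idea is that positivity is needed only entrywise, separately for each input tuple $\mathbf y$. So the all-ones box $D_1$ never has to appear as a component. Let $E_z$, $z\in\{1,\dots,m\}$, be the deterministic product box in which every copy outputs $b_j=0$ iff $y_j=z$, and set $T=\tfrac1m\sum_{z=1}^m E_z$, i.e.\ shared uniform $z$. This $T$ is symmetric, non-signalling, and has $p(0|y)=1/m$. The positivity condition then splits into two cases.
\begin{itemize}
\item For $x\ge 2$, $D_x=E_{m+2-x}$ is itself a component, so $T\ge\tfrac1m D_x$.
\item For $x=1$, $D_1$ outputs all ones. If $\mathbf y$ has $k$ distinct entries, then $k\le m-1$, since there are only $m-1$ copies. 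The $m-k$ components $E_z$ with $z$ absent from $\mathbf y$ all output all ones, so $T(\mathbf 1\mid\mathbf y)=1-k/m\ge 1/m$.
\end{itemize}
This pigeonhole step is exactly where the level $m-2$ enters: with $m$ copies and $\mathbf y=(1,\dots,m)$ it fails.

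The analogous choice settles $Q_A$. Take $T'=\tfrac1m\sum_z E'_z$, where $E'_z$ outputs $a_j=1$ iff $x_j=z$; now $y=1$, whose conditional state is all zeros, plays the role of $x=1$. Completed this way, your extensions coincide with the paper's. The paper samples a label uniformly from $m$ of $m+1$ labels, omitting one label absent from the input tuple, and summing out Alice in its $q_B$ gives precisely this $T$.
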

\begin{proof}
Equation~\eqref{eq:bell-full-table} defines a normalized non-signaling
    behavior, with $p_x^A = (m - 1)/m$ and $p_y^B = 1/m$. Since $p_{xy} = 0$
    whenever $M_{xy} = -1$, the $m + 1 - y$ positive joint terms in each column
    $y$ give
    \beq
        \begin{aligned}
            I_{mm22} & = \frac1m\sum_{y=1}^{m}\bigl[(m + 1 - y) - (m - y)\bigr] - \frac{m - 1}{m} \\
                    & = \frac1m. \label{eq:gw-pm-value}
        \end{aligned}
    \eeq
    To prove that $P_m \in \GW_{m-2}$, we construct the needed extensions. 

Introduce the set of labels $\mathcal D = \{1,\ldots,m+1\}$.
Assign label $x$ to Alice's input $x$ and label $m+2-y$ to
Bob's input $y$. The two labels coincide precisely when
$x+y=m+2$, which, by the definition of $P_m$, is equivalent
to $p(0,0\mid x,y)=0$.
For $\lambda \in \mathcal D$, let us define
        \beq
            a_\lambda(x) = \begin{cases}
                               1 & \lambda = x   \\
                               0 & \lambda \ne x
                           \end{cases}, \quad b_\lambda(y)
                         = \begin{cases}
                               0 & \lambda = m + 2 - y   \\
                               1 & \lambda \ne m + 2 - y
                           \end{cases}. \label{eq:gw-label-responses}
        \eeq
        Each required extension has $m$ parties: $m - 1$ copies of one side and
        one of the other. Its inputs therefore specify at most $m$ distinct
        labels. Omit one label absent from the input tuple and sample uniformly
        from the other $m$ labels, using the responses above.

        Explicitly, set $D_A = \{x_0,\ldots,x_{m-2},m + 2 - y\}$ and
        $D_B = \{x,m + 2 - y_0,\ldots,m + 2 - y_{m-2}\}$. Choose
        $\ell_A \in \mathcal D \setminus D_A$ and
        $\ell_B \in \mathcal D \setminus D_B$, and define
        $\Lambda_A = \mathcal D \setminus \{\ell_A\}$ and
        $\Lambda_B = \mathcal D \setminus \{\ell_B\}$. Then
        \beq
            \begin{aligned}
                q_A(\mathbf a,b|\mathbf x,y) & = \frac1m\sum_{\lambda \in \Lambda_A} \left( \prod_{i=0}^{m-2}\delta_{a_i,a_\lambda(x_i)} \right) \delta_{b,b_\lambda(y)}, \\
                q_B(a,\bvec|x,\mathbf y)     & = \frac1m\sum_{\lambda \in \Lambda_B} \delta_{a,a_\lambda(x)} \left( \prod_{j=0}^{m-2}\delta_{b_j,b_\lambda(y_j)} \right).
            \end{aligned} \label{eq:gw-pm-extensions}
        \eeq
        Both distributions are nonnegative and normalized. One can also check
        that both extensions are non-signaling across every bipartition. They
        are also independent of the omitted label and symmetric under
        permutations of copies.

        Finally, we verify that any Alice-Bob marginal of either extension
        equals $P_m$. If $x = m + 2 - y$, the event $\lambda = x$ has
        probability $1/m$ and gives $(a,b) = (1,0)$, while every other sampled
        $\lambda$ gives $(0,1)$. If $x \ne m + 2 - y$, sampling $x$ gives
        $(1,1)$ and sampling $m + 2 - y$ gives $(0,0)$, each with probability
        $1/m$; the remaining $m - 2$ labels give $(0,1)$. These are precisely
        the two rows of Eq.~\eqref{eq:bell-full-table}.
\end{proof}

The extension in the proof might look like a local hidden-variable model, in apparent
    contradiction with the nonlocality of $P_m$. However, the sets $\Lambda_A$,
    $\Lambda_B$, and hence the distribution of $\lambda$, depend on the inputs.
    Therefore, the constructions violate measurement independence, so there is
    no contradiction.
The value $1/m$ achieved by $P_m$ is optimal over $\GW_{m-2}$ (see
Appendix~\ref{app:ghostworld_GPT} for a proof). Furthermore, the fact that $p(0,0|x,y) = 0$ when $x + y = m + 2$ eliminates all terms
with coefficient $-1$ in $I_{mm22}$. Using this specific \emph{zero pattern}, we
can prove by contradiction that it lies outside the commuting quantum set.

\begin{proposition}
    \label{prop:bell-postquantum}
    For every $m \geq 2$, $P_m \notin \Qc$.
\end{proposition}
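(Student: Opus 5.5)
The plan is to argue by contradiction, using only the exact entries of $P_m$ from Eq.~\eqref{eq:bell-full-table}, in particular its zero pattern, and no Bell-inequality bound. Suppose $P_m\in\Qc$. Then there are a Hilbert space $\mathcal H$, a unit vector $\psi$, and mutually commuting POVMs realising $P_m$. I first reduce to projective measurements. A standard dilation argument for commuting-operator models (dilating each party's POVMs on its own factor of the commutant structure) keeps Alice's and Bob's operators commuting while making them projections, and changes no probability. Writing $A_x:=A_{0|x}$ and $B_y:=B_{0|y}$ for these commuting projections, the table gives
\[
\|A_x\psi\|^2=\tfrac{m-1}{m},\qquad \|B_y\psi\|^2=\tfrac1m,\qquad
\langle\psi,A_xB_y\psi\rangle=\begin{cases}0,& x+y=m+2,\\ \tfrac1m,& x+y\ne m+2.\end{cases}
\]

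Next I convert these scalar identities into vector identities. Since $A_x$ and $B_y$ commute, $A_xB_y$ is a projection, so $\langle\psi,A_xB_y\psi\rangle=\|A_xB_y\psi\|^2$. If $x+y\neq m+2$, then $\|A_xB_y\psi\|^2=\|B_y\psi\|^2$, hence $A_xB_y\psi=B_y\psi$. Equivalently, $B_y(\mathbb 1-A_x)\psi=0$. If $x+y=m+2$ (so $y\ge2$), then $A_xB_y\psi=0$, hence $B_y(\mathbb 1-A_x)\psi=B_y\psi$. Together with $\|(\mathbb 1-A_x)\psi\|^2=\tfrac1m=\|B_y\psi\|^2$, this forces $(\mathbb 1-A_{m+2-y})\psi=B_y\psi$ for every $y\ge2$.

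These identities give orthogonality of Bob's vectors. Take $y\ge2$ and any $y'\neq y$, and set $x=m+2-y$. Then $x+y'\neq m+2$, so $B_{y'}B_y\psi=B_{y'}(\mathbb 1-A_x)\psi=0$. Consequently $\langle B_{y'}\psi,B_y\psi\rangle=0$, which covers the pairs involving $y=1$ as well. So $\{B_y\psi\}_{y=1}^m$ is an orthogonal family of $m$ vectors, each of squared norm $1/m$. Moreover, $1+y\le m+1<m+2$ for every $y$, so every $B_y\psi$ lies in the range of $A_1$. Projecting $\psi$ onto their span, which is contained in the range of $A_1$, gives
\[
\|A_1\psi\|^2\ \ge\ \sum_{y=1}^m\frac{|\langle B_y\psi,\psi\rangle|^2}{\|B_y\psi\|^2}=\sum_{y=1}^m\frac{(1/m)^2}{1/m}=1 .
\]
This contradicts $\|A_1\psi\|^2=(m-1)/m<1$. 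I expect this argument to work uniformly for all $m\ge2$.

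The step I expect to need the most care is the initial reduction to projective measurements in the commuting framework. The Naimark dilation must preserve the commutation between Alice's and Bob's operators. I would cite or sketch the standard construction, in which one dilates on $\mathcal H\otimes\mathbb C^{d}$ per measurement while keeping the other party's algebra acting trivially on the ancillas. Alternatively, the argument can be rerun directly with POVMs by replacing the projection identities with $\|A^{1/2}B^{1/2}\psi\|$ estimates, but this is messier. Everything after the reduction consists of elementary norm-saturation arguments.
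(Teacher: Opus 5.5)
Your proposal is correct and is essentially the paper's proof: your identities $(1-A_{m+2-y})\psi=B_y\psi$ for $y\ge 2$ are the paper's $|v_y\rangle=|u_{m+2-y}\rangle$, and your orthogonal family $\{B_y\psi\}_{y=1}^m$ in the range of $A_1$, together with $(1-A_1)\psi$, is exactly the paper's orthogonal family $|u_1\rangle,\ldots,|u_m\rangle,|v_1\rangle$, which gives the same Bessel contradiction $(m+1)/m\le 1$. The differences are minor: the paper also takes the reduction to pure states and projective measurements as standard (citing NPA), and it derives orthogonality from inner products alone via Cauchy--Schwarz saturation, so it does not need the Alice--Bob commutation that your operator identities use.
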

\begin{proof}
    By contradiction, suppose $P_m$ has a commuting-operator realization,
    without loss of generality with a pure state $|\psi\rangle$ and projective
    measurements $\{A_{a|x}\}$, $\{B_{b|y}\}$~\cite{navascues:2008}. 
    The probability table gives
    \beq
    \begin{aligned}
        \langle\psi|A_{1|x}|\psi\rangle
        = \langle\psi|B_{0|y}|\psi\rangle
        = \frac 1 m, \\
        \langle\psi|A_{1|x}B_{0|y}|\psi\rangle
        = \frac{\delta_{x,m+2-y}}m. \label{eq:gw-rare-events}
\end{aligned}
    \eeq
    Define $|u_x\rangle = A_{1|x}|\psi\rangle$ and $|v_y\rangle = B_{0|y}|\psi\rangle$.
    Since $A_{1|x}^2 = A_{1|x}$ and $B_{0|y}^2 = B_{0|y}$,
    \beq
        \|u_x\|^2 = \|v_y\|^2
                  = \frac 1 m, \qquad\langle u_x|v_y\rangle
                  = \frac{\delta_{x,m+2-y}}m.
    \eeq
    For $2 \leq y \leq m$, these identities imply
    $|v_y\rangle=|u_{m+2-y}\rangle$. Consequently, 
    for all $x \ne x'$ with $x'\ge 2$,
    \beq
    \begin{gathered}
    \langle u_x | u_{x'} \rangle = \langle u_x | v_{m+2-x'} \rangle = 0, \\ 
    \langle u_x | v_1 \rangle = \delta_{x,m+1}/m = 0,
    \end{gathered}
    \eeq
    i.e.,
    $|u_1\rangle,\ldots,|u_m\rangle,|v_1\rangle$ are pairwise 
    orthogonal. %
    Moreover each of them is a projection of the unit vector $|\psi\rangle$.
    Therefore,
    \beq
        \frac{m + 1}{m}
        = \sum_{x=1}^{m}\|u_x\|^2 + \|v_1\|^2
        \leq \|\psi\|^2 = 1,
    \eeq
    a contradiction. Thus $P_m \notin \Qc$.
\end{proof}

\section{Insufficiency of single-copy information causality}
\label{sec:insufficiency-single-copy}

In this section, we focus on IC with database of any length $n \in \mathbb{N}$, but where Alice and Bob are restricted to use only a single copy of a behavior $P=\{p(a,b|x,y)\}$. 
For this it will be useful to think of this behaviour as follows: define
\beq
    W_{xy}(b|a) = \frac{p(a,b|x,y)}{p_A(a|x)}, \label{eq:contraction-channels}
\eeq
assuming $p_A(a|x) \ne 0$ for all $a,x$. 
If we fix $x,y$, we can regard $W_{xy}$ as a classical channel $A \to B$, because for all $a,b$, $W_{xy}(b|a) \ge 0$ and $\sum_{b} W_{xy}(b|a) = 1$. Since all behaviors we consider are non-signalling, we can write 
$p_B(b|y) = \sum_a W_{xy}(b|a) p_A(a|x)$, or for short
\beq
    \label{eq:W_matrix_vector_product}
    \bm p_{B|y} = W_{xy} \bm p_{A|x}.
\eeq

Intuitively, when a behaviour is used in an IC protocol, the local processing of inputs and outcomes by Alice and Bob cannot change the channel $W_{xy}$ but only the distributions of their inputs $x,y$ (also dependent on the sent message $M'$). Thus if for each $x,y$ the channel is sufficiently contractive, Alice's change in $x$ (conditioned on $\Rvec$ and $M'$) cannot affect Bob's outcome distribution significantly enough for the behavior to be useful for the IC task. To make this intuition precise, we first prove the following theorem.

\begin{theorem}[Contractivity criterion for information causality]
    \label{thm:contraction-criterion}
    Let Alice and Bob share a non-signalling behavior $P=\{p(a,b|x,y)\}$ that satisfies
    $p_A(a|x) > 0$ for all $a,x$, and such that for every input pair $(x,y)$
    there exist non-negative constants $\eta_{xy}$ such that
    \beq
        \sum_y \eta_{xy} \leq 1 \qquad \text{for every }x \label{eq:contraction-row-sums}
    \eeq
    and for every distribution $\bm w = w(\cdot)$ of Alice's outcomes,
    \beq
        \label{eq:relative-entropy-criterion}
        \DKL\left( W_{xy} \bm w \, \middle\| \, W_{xy} \bm p_{A|x} \right) \leq \eta_{xy}\DKL\left( \bm w \,\middle\|\, \bm p_{A|x} \right).
    \eeq
    Then, for every deterministic single-copy protocol $\pi$ and for every
    database length $n$ and arbitrarily correlated entries, the following
    inequality holds:
    \beq
        \label{eq:middle-ic-inequality-operational}
        I(\Rvec : M') + \sum_{y} I(\Rvec :B_y \mid M') \leq I(\Rvec,A:M').
    \eeq
    In turn, this implies the correlated-input IC inequality,
    \beq
        \sum_{s=0}^{n-1}I(R_s:G_s \mid \Rvec_{<s}) \le I(\Rvec,A:M') \label{eq:contraction-correlated-ic}
    \eeq
    for any $n$. 
\end{theorem}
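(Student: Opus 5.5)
The plan has two parts: first prove the operational inequality \eqref{eq:middle-ic-inequality-operational} directly from the contraction hypothesis, then derive \eqref{eq:contraction-correlated-ic} from it using only Shannon chain rules.

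\textbf{Setting up the joint distribution.} Fix a deterministic single-copy protocol as in Section~\ref{sec:deterministic-reduction}, so $x=x_\rvec=f_X(\rvec)$. For every $y$ (not only the one Bob actually uses), define the random variable $B_y$ as the outcome Bob would obtain on input $y$. This is legitimate because Bob's input only affects his own outcome. The joint law is
\[
p(\rvec,a,m',b_y)=p(\rvec)\,p_A(a|x_\rvec)\,p(m'|\rvec,a)\,W_{x_\rvec y}(b_y|a).
\]
Write $\bm w_{\rvec,m'}=p(\cdot|\rvec,m')$ for Alice's outcome distribution conditioned on $\rvec$ and $m'$. Then $p(b_y|\rvec,m')=(W_{x_\rvec y}\bm w_{\rvec,m'})(b_y)$. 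By non-signalling, Eq.~\eqref{eq:W_matrix_vector_product} gives $W_{x_\rvec y}\bm p_{A|x_\rvec}=\bm p_{B|y}$, which does not depend on $\rvec$.

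\textbf{Proving Eq.~\eqref{eq:middle-ic-inequality-operational}.} The target is equivalent to $\sum_y I(\Rvec:B_y\mid M')\le I(\Rvec,A:M')-I(\Rvec:M')=I(A:M'\mid\Rvec)$. The argument has four steps.
\begin{enumerate}
\item For each $m'$, use that mutual information is the minimum over reference distributions of the average divergence. Taking the reference $\bm p_{B|y}$ gives
\[
I(\Rvec:B_y\mid M'=m')\le\sum_\rvec p(\rvec|m')\,\DKL\bigl(W_{x_\rvec y}\bm w_{\rvec,m'}\,\big\|\,W_{x_\rvec y}\bm p_{A|x_\rvec}\bigr).
\]
\item Apply the hypothesis \eqref{eq:relative-entropy-criterion} with $\bm w=\bm w_{\rvec,m'}$. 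This bounds each term by $\eta_{x_\rvec y}\,\DKL(\bm w_{\rvec,m'}\|\bm p_{A|x_\rvec})$. The hypothesis holds for every $\bm w$, since $p_A(a|x)>0$ makes all divergences finite.
\item Average over $m'$ and sum over $y$, using \eqref{eq:contraction-row-sums} for each fixed $x_\rvec$. This gives $\sum_y I(\Rvec:B_y|M')\le\sum_\rvec p(\rvec)\sum_{m'}p(m'|\rvec)\,\DKL(p(\cdot|\rvec,m')\|p(\cdot|\rvec))$.
\item Since $p(a|\rvec)=p_A(a|x_\rvec)$, the right-hand side of the previous step is exactly $I(A:M'\mid\Rvec)$.
\end{enumerate}

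\textbf{Deriving the IC inequality.} Given $S=s$, Bob's guess is $G_s=f_G(s,M',B_{Y_s})$ with $Y_s=f_Y(s,M')$. By data processing and the chain rule,
\[
I(R_s:G_s\mid\Rvec_{<s})\le I(R_s:M'\mid\Rvec_{<s})+I(R_s:B_{Y_s}\mid M',\Rvec_{<s}).
\]
Summing the first terms over $s$ gives $I(\Rvec:M')$. For the second terms, expand as an average over $m'$. For fixed $m'$ the input $y=f_Y(s,m')$ is fixed, so group the indices $s$ according to the value of $y$. For each $y$, the sum over the indices $s$ in its group is at most the sum over all $s$ of the nonnegative terms $I(R_s:B_y\mid M'=m',\Rvec_{<s})$, which equals $I(\Rvec:B_y\mid M'=m')$ by the chain rule. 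Averaging over $m'$ and combining with Eq.~\eqref{eq:middle-ic-inequality-operational} then yields \eqref{eq:contraction-correlated-ic}.

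\textbf{Expected obstacle.} The main point needing care is step 1 together with the definition of the counterfactual variables $B_y$. The reference distribution $\bm p_{B|y}$ must be independent of $\rvec$ (which is where non-signalling enters), and the $\rvec$-dependent channel $W_{x_\rvec y}$ must be handled per $\rvec$ before summing over $y$. The row-sum condition is what absorbs the fact that Bob may use different inputs $y$ for different messages $m'$ and questions $s$.
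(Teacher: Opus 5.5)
Your proposal is correct and follows essentially the paper's route. It applies the contraction hypothesis to each posterior $p(\cdot|\rvec,m')$ against the $\rvec$-independent reference $\bm p_{B|y}$, then uses the row-sum condition. For the IC step it groups the indices $s$ by Bob's input $f_Y(s,m')$ at each fixed message and applies the chain rule.

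The only difference is bookkeeping. You use the variational formula for mutual information at fixed $m'$. The paper instead averages over $m'$ at fixed $\rvec$ to get $I(M':B_y\mid\Rvec)$ and then invokes $I(\Rvec:B_y)=0$. Both rest on the same identity, $\sum_{\rvec,m'}p(\rvec,m')\,\DKL(\bm p_{B_y|\rvec,m'}\|\bm p_{B|y}) = I(\Rvec,M':B_y)\ge I(\Rvec:B_y\mid M')$.
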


Recalling the reduction to deterministic protocols from Section~\ref{sec:deterministic-reduction}, we immediately obtain the following corollary.

\begin{corollary}
    Any behavior $P$ satisfying the conditions of Theorem~\ref{thm:contraction-criterion} satisfies single-copy information causality (for any single-copy protocol).
\end{corollary}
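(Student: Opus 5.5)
The plan is to combine Theorem~\ref{thm:contraction-criterion}, which handles every \emph{deterministic} single-copy protocol, with the reduction of Section~\ref{sec:deterministic-reduction}. The idea is that a general protocol can only do worse, for IC purposes, than the best deterministic protocol with the same channel. Fix a behavior $P$ satisfying the hypotheses of Theorem~\ref{thm:contraction-criterion}, a database distribution $p(\rvec)$ (independent or correlated), and an arbitrary single-copy protocol $\pi_1$. Such a protocol is given by a stochastic encoding $p(x|\rvec)$, a noisy channel $p(m'|\rvec,x,a)$, a stochastic choice $p(y|s,m')$ and a stochastic decoding $p(g|s,m',b)$. Any shared randomness is already absorbed into the resource, as stated after Eq.~\eqref{eq:ic_inequality_noisy}. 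I would consider the deficit
\[
\Delta(P,\pi_1)=\sum_{s}I(R_s:G_s\mid \Rvec_{<s})-I(\Rvec,X,A:M').
\]
The goal is to show $\Delta(P,\pi_1)\le 0$.

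First, hold the channel and all other maps fixed and vary a single local map. The joint distribution of all variables is affine in that map. The convexity/concavity properties of mutual information then make $\Delta$ convex in that map: the left-hand terms are convex in the decoding and measurement-choice maps, and $-I(\Rvec,X,A:M')$ is handled as in~\cite{marco:costantino}. For the correlated-entry version, I would use the remark after Eq.~\eqref{eq:technical-part-ic-with-correlated-entries}: each summand is an average over $\rvec_{<s}$ of terms to which the same argument applies. A convex function on a product of polytopes of stochastic maps attains its maximum at a vertex. Replacing the maps one at a time (encoding, then Bob's input choice, then decoding) by deterministic ones never decreases $\Delta$. This yields a deterministic protocol $\pi_1^{\det}$ with the same, possibly noisy, channel and $\Delta(P,\pi_1)\le\Delta(P,\pi_1^{\det})$.

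Next, for $\pi_1^{\det}$ the input $X=f_X(\Rvec)$ is a function of $\Rvec$, so $I(\Rvec,X,A:M')=I(\Rvec,A:M')$. Theorem~\ref{thm:contraction-criterion} applies verbatim to deterministic protocols with an arbitrary channel, arbitrary $n$ and arbitrarily correlated entries. Its conclusion~\eqref{eq:contraction-correlated-ic} is exactly $\Delta(P,\pi_1^{\det})\le 0$. Hence $\Delta(P,\pi_1)\le 0$ for every single-copy protocol, which is the claim. For independent entries this also gives Eq.~\eqref{eq:ic_inequality_noisy}, since the conditioning on $\Rvec_{<s}$ is then immaterial.

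The main obstacle is the convexity step for the encoding map $p(x|\rvec)$. This map enters both sides, including the right-hand side through $X$ itself. I would check that, with $p(\rvec)$ fixed, $I(\Rvec,X,A:M')$ is concave in $p(x|\rvec)$ while the left-hand side is convex in it. The right-hand side can be written as a mutual information whose input distribution $p(\rvec,x,a)$ is affine in $p(x|\rvec)$, while the channel to $M'$ is fixed. The left-hand side is a mutual information with fixed input $p(r_s)$ whose channel to $G_s$ is affine in $p(x|\rvec)$. I would rely on~\cite{marco:costantino} for the details, as the paper does.
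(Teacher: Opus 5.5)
Your proposal is correct and follows the paper's own route: the paper obtains the corollary by combining Theorem~\ref{thm:contraction-criterion} with the deterministic reduction of Section~\ref{sec:deterministic-reduction}, and your check that $I(\Rvec,X,A:M')$ is concave while the left-hand side is convex in the encoding $p(x|\rvec)$ is exactly the argument that reduction rests on. One small correction to your closing remark: for independent entries the conditioning on $\Rvec_{<s}$ is not immaterial (e.g.\ $G_1=R_0\oplus R_1$ with uniform bits gives $I(R_1:G_1)=0<1=I(R_1:G_1\mid R_0)$), but independence and the chain rule give $I(R_s:G_s)\le I(R_s:G_s\mid\Rvec_{<s})$, which is the direction you need, so your conclusion still holds.
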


The smallest admissible constants in Eq.~\eqref{eq:relative-entropy-criterion}
are the fixed-input contraction coefficients $\eta(W_{xy},\bm p_{A|x})$ defined
in Eq.~\eqref{eq:bg_kl_contraction}. The proof of
Theorem~\ref{thm:contraction-criterion} is given in
Appendix~\ref{app:contraction}.

To apply the criterion, let us consider the families of behaviors
\beq
    P_\lambda
    = \begin{behaviorarray}{c|cccc}
          xy \ \backslash \ ab & 00         & 01       & 10         & 11         \\
          \hline
          00,01,10,11     & q          & 0        & 1-2q       & q          \\
          02              & 0          & q        & \tfrac12   & \tfrac12-q \\
          12              & q          & 0        & \tfrac12-q & \tfrac12   \\
          20              & \tfrac12-q & q        & \tfrac12   & 0          \\
          21              & \tfrac12   & 0        & \tfrac12-q & q          \\
          22              & \tfrac14   & \tfrac14 & \tfrac14   & \tfrac14
      \end{behaviorarray}. \label{eq:contraction-segment-table}
\eeq
for $0 \leq \lambda \leq 1$ and where 
$q = (3 + \lambda)/12$. It is straightforward to verify that $P_\lambda \in \GW_1^{(3,2)}$ for all $\lambda \in [0,1]$, by considering the decomposition
\beq
    P_\lambda = (1 - \lambda)P_L + \lambda P_G, \label{eq:contraction-segment}
\eeq
where 
    \beq
        \begin{aligned}
                    P_G &= \begin{behaviorarray}{c|cccc}
                                                          xy \ \backslash \ ab & 00       & 01       & 10       & 11       \\
                                                          \hline
                                                          00,01,10,11     & \tfrac13 & 0        & \tfrac13 & \tfrac13 \\
                                                          02              & 0        & \tfrac13 & \tfrac12 & \tfrac16 \\
                                                          12              & \tfrac13 & 0        & \tfrac16 & \tfrac12 \\
                                                          20              & \tfrac16 & \tfrac13 & \tfrac12 & 0        \\
                                                          21              & \tfrac12 & 0        & \tfrac16 & \tfrac13 \\
                                                          22              & \tfrac14 & \tfrac14 & \tfrac14 & \tfrac14
                                                      \end{behaviorarray},
        \end{aligned} \label{eq:contraction-endpoint-tables}
    \eeq
        and
   \beq
        P_L = \frac14\left( P^{010;001} + P^{101;000} + P^{110;100} + P^{111;011} \right), \label{eq:contraction-local-decomposition}
    \eeq
    where $P^{a_0a_1a_2;b_0b_1b_2}$ denotes the
    deterministic local behavior in which Alice outputs $a_x$ on input $x$ and
    Bob outputs $b_y$ on input $y$. 
    While $P_L \in \mathcal L ^{(3,2)} \subsetneq \GW_1^{(3,2)}$ as a convex combination of local deterministic distributions, it was shown in~\cite[Section~5]{lang:2014} that also $P_G \in \GW_1^{(3,2)}$. Thus convexity of $\GW_1^{(3,2)}$ proves the claim.

In the following we further show that the family of behaviours we have defined, as soon 
as $\lambda \ne 0$, does not intersect with the quantum set.

\begin{proposition}
    \label{prop:contraction-segment}
    For every $0 < \lambda \leq 1$, $P_\lambda$ from Eq.~\eqref{eq:contraction-segment-table} satisfies $P_\lambda \notin \Qc^{(3,2)}$. Furthermore, for
    \beq
        0 \leq \lambda
          \leq 21 - 12\sqrt3
          \approx 0.215, \label{eq:contraction-segment-threshold}
    \eeq
    it satisfies the hypotheses of Theorem~\ref{thm:contraction-criterion}.
\end{proposition}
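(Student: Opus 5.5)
Two parts: non-quantumness for $\lambda>0$, and the contraction hypotheses for small $\lambda$.

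\emph{Non-quantumness.} I would mimic the zero-pattern argument of Proposition~\ref{prop:bell-postquantum}. Assume a commuting-operator realization with pure state $|\psi\rangle$ and projective measurements. The table for $P_\lambda$ has zeros at $p(0,1|x,y)=0$ for $(x,y)\in\{0,1\}^2$, and also at $p(0,0|0,2)=0$, $p(0,1|1,2)=0$, $p(1,1|2,0)=0$ and $p(0,1|2,1)=0$. Each zero $p(a,b|x,y)=0$ gives $B_{b|y}A_{a|x}|\psi\rangle=0$. Combined with the marginals, this forces vector identities: for instance $A_{0|x}|\psi\rangle = B_{0|y}A_{0|x}|\psi\rangle$ for $x,y\in\{0,1\}$, so $A_{0|x}|\psi\rangle$ lies in the range of $B_{0|y}$. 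Chaining these relations through the third inputs, as in the $P_m$ proof, I expect to obtain pairwise orthogonal projections of $|\psi\rangle$ whose squared norms sum to more than $1$ as soon as $q>1/4$, that is $\lambda>0$.

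If the direct orthogonality chain does not close, the fallback is to find a Bell functional $\beta$ whose $\Qc$ value is certified by the first NPA level $Q_1\supseteq\Qc$, and show $\beta(P_\lambda)$ exceeds it for all $\lambda>0$. The candidate is a functional exploiting the zeros, exactly tight on $P_L$, which is local and boundary in the relevant direction. Because $\beta(P_\lambda)$ is affine in $\lambda$, it suffices that $\beta(P_0)$ equals the quantum bound and $\beta(P_G)$ exceeds it. I expect this to be the cleaner route: take $\beta$ tangent at $P_0$ and verify the bound with an explicit SOS/NPA-1 certificate.

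\emph{Contraction hypotheses.} Here $p_A(0|x)=q$ for $x=0,1$ and $p_A(0|2)=1/2$, so all Alice marginals are strictly positive for $q\in[1/4,1/3]$. For each pair $(x,y)$, compute $W_{xy}$ from Eq.~\eqref{eq:contraction-channels}. For binary-input channels the fixed-input KL contraction coefficient $\eta(W,p)$ is computable via the known characterization of Anantharam et al.: it is the supremum over $w$ of the ratio, which for binary input reduces to a one-parameter optimization. It can be bounded, for example, by the $\chi^2$ maximal correlation in one direction or by an explicit analysis of the ratio function.

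\begin{itemize}
\item Rows $x=0,1$ with $y\in\{0,1\}$: the channel sends $a=0$ deterministically to $b=0$. These are Z-channel-like.
\item Rows with $y=2$: the channels are binary and asymmetric.
\item Row $x=2$, $y=2$: the channel is constant, so $\eta_{22}=0$.
\end{itemize}

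I would choose $\eta_{xy}=\eta(W_{xy},\bm p_{A|x})$ and check $\sum_y\eta_{xy}\le1$ for each $x$. By symmetry of the table, rows $x=0$ and $x=1$ should behave alike, and row $x=2$ has only two nonzero terms. The threshold $21-12\sqrt3$ suggests the binding row gives a quadratic in $q$ whose root is $q=(3+\lambda)/12$ with $\lambda=21-12\sqrt3$, coming from an explicit closed form of $\eta$ for the Z-type channels.

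\textbf{Main obstacle.} The hard step is obtaining exact, or tight enough, closed forms for the KL contraction coefficients, since $\eta_{\mathrm{KL}}$ can exceed the $\chi^2$ value. I would first try computing the supremum of the ratio directly in one variable $w=w(0)$ and check whether it is attained as $w\to p$, where it equals the squared maximal correlation, or at the endpoint $w\in\{0,1\}$. Whichever attains the supremum yields the algebraic threshold. Monotonicity in $\lambda$ then extends the bound to all of $[0,21-12\sqrt3]$.
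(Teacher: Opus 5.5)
Your strategy for both parts is the paper's, but in each part the step that actually constitutes the proof is left as an expectation, and your fallback for the first part cannot work.

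\textbf{Non-quantumness.} The argument closes with the four vectors $|u_x\rangle=A_{0|x}|\psi\rangle$ and $|v_y\rangle=B_{1|y}|\psi\rangle$, $x,y\in\{0,1\}$, each of squared norm $q$. Your range observation already gives the cross-orthogonality $\langle u_x|v_y\rangle=p(0,1|x,y)=0$. The missing within-party orthogonality is where the third inputs enter:
\begin{itemize}
\item $p(0,0|0,2)=0$ places $|u_0\rangle$ in the range of $B_{1|2}$, and $p(0,1|1,2)=0$ places $|u_1\rangle$ in the range of $B_{0|2}$. Hence $\langle u_0|u_1\rangle=\langle u_0|(B_{0|2}+B_{1|2})|u_1\rangle=0$.
\item Likewise $p(1,1|2,0)=p(0,1|2,1)=0$ place $|v_0\rangle$ and $|v_1\rangle$ in the ranges of $A_{0|2}$ and $A_{1|2}$, so $\langle v_0|v_1\rangle=0$.
\end{itemize}
Then $0\le\bigl\|\,|\psi\rangle-\sum_x|u_x\rangle-\sum_y|v_y\rangle\bigr\|^2=1-4q=-\lambda/3$, a contradiction for $\lambda>0$.

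Drop the NPA-level-1 fallback. The paper notes that $P^*=P_{1/5}$ is macroscopically local, i.e.\ lies in the first NPA level $\mathcal Q_1$. Since $P_\lambda=(1-5\lambda)P_0+5\lambda P_{1/5}$ for $\lambda\in[0,1/5]$, and $P_0$ is local, every such $P_\lambda$ lies in $\mathcal Q_1$ as well. So no functional certified at level $1$ can exclude any of them, and a functional tangent at $P_0$ with $\beta(P_G)>\beta(P_0)$ cannot have its level-1 maximum at $P_0$. The orthogonality step uses the words $A_{a|x}B_{b|y}$, which is why it adapts to the almost-quantum level $1+AB$ but not to level $1$.

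\textbf{Contraction hypotheses.} The missing idea is that each of the eight zeros you listed makes the corresponding binary channel a $Z$-channel: one input symbol is transmitted noiselessly. So all eight non-constant $W_{xy}$ are $Z$-channels up to relabelling, including the $y=2$ and $x=2$ ones that you only describe as asymmetric. For these, the fixed-input KL coefficient is known in closed form: $\eta=\log p_B(b_0)/\log p_A(a_0)$, where $a_0\mapsto b_0$ is the noiseless transition (Lemma~\ref{lem:contraction-z-channel}). It is attained at the point mass on $a_0$, not as $w\to p$. This yields
\begin{itemize}
\item $\alpha=\log(1-q)/\log q$, $\beta=\log(1/2)/\log q$ and $\gamma=\log(1-q)/\log(1/2)$;
\item the row conditions $2\alpha+\beta\le1\iff(1-q)^2\ge2q\iff q\le2-\sqrt3$ and $2\gamma\le1\iff q\le1-1/\sqrt2$.
\end{itemize}
With $q=(3+\lambda)/12$, the first condition is exactly $\lambda\le21-12\sqrt3$. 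This holds pointwise in $\lambda$, so no monotonicity argument is needed.

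Without this formula you have only reverse-engineered the threshold. The alternative is your own proof that the one-variable ratio is maximized at that endpoint, which is the nontrivial content of the cited lemma. Note also that the maximal-correlation ($\chi^2$) value is only a \emph{lower} bound on $\eta$; for $W_{00}$ it equals $q^2/(1-q)^2<\alpha$. So it cannot be used to verify the hypothesis.
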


To prove this, we observe that for the family under consideration, the channels
$W_{xy}$ are, up to outcome relabellings, binary $Z$-channels~\cite{polyanskiy:2025}, except
for $W_{22}$, whose output is independent of its input. We use the exact
fixed-input contraction coefficients for binary $Z$-channels derived in  \cite{anantharam:2013b} to verify the criterion throughout the
interval in Eq.~\eqref{eq:contraction-segment-threshold}.
 All the behaviors, calculations and the full proofs are given in
Appendix~\ref{app:contraction}.

Choosing for instance $\lambda = 1/5$ gives the explicit behavior that for single copy information causality satisfies
\beq
    P^* \coloneqq P_{1/5} 
        \in \IC_{\infty,1}^{(3,2)} \setminus \Qc^{(3,2)}, \label{eq:contraction-pstar}
\eeq
where the subscript $1$ indicates the single-copy restriction.
Moreover, single-copy information causality sets are convex: this 
follows directly from the convexity of the mutual information in the channel when the marginal held by Alice is fixed, and it has been previously established in the literature~\cite{yu:2026, marco:costantino}. %
Since all local points satisfy IC, this directly implies the following corollary and thus Result~\ref{res:contraction}, including its correlated-input
version. 

\begin{corollary}
The set 
    $\mathcal S^{(3,2)}= \conv( \mathcal{L}^{(3,2)} \cup P_{1/5})$ when restricted to using a single copy of the behavior, satisfies information causality. Furthermore, $P_{1/5} \notin \Qc^{(3,2)}$.
\end{corollary}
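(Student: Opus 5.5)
The corollary has two parts, and I would prove them separately: first that the convex hull $\mathcal S^{(3,2)}$ satisfies single-copy IC, then that $P_{1/5}$ is post-quantum.

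\textbf{Single-copy IC on $\mathcal S^{(3,2)}$.}
\begin{enumerate}
\item By Proposition~\ref{prop:contraction-segment}, $\lambda = 1/5 \le 21-12\sqrt3 \approx 0.215$. So $P_{1/5}$ satisfies the hypotheses of Theorem~\ref{thm:contraction-criterion}.
\item The accompanying corollary then says $P_{1/5}$ satisfies single-copy IC for every single-copy protocol, every $n$, and correlated entries.
\item Local deterministic behaviours satisfy IC, since the inequality holds for classical shared randomness. This is accounted for in $\mathbf A$: with a deterministic local box, $A$ is a function of the input, and the classical IC proof applies.
\item It remains to pass from the generators to their convex hull. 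Any $P\in\mathcal S^{(3,2)}$ can be written as $P=\sum_j \mu_j P_j$, with each $P_j$ either local deterministic or $P_{1/5}$.
\end{enumerate}

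\textbf{Convexity step.} Fix a protocol $\pi$. I would show that the deficit $\Delta(P,\pi)=\mathcal I - I(\Rvec,A:M')$ is convex along mixtures of $P$. The subtlety is that the mixture changes Alice's marginal, and so the distribution of $(\Rvec,A)$ entering the channel $M\to M'$ changes too.

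The cleanest route avoids this: realize the mixture with a classical flag $J\sim\mu$ added to Alice's outcome $A$. Then:
\begin{itemize}
\item Alice's enlarged outcome is $(A,J)$, and the behaviour with flag is still in the set generated by $P_j$ and local boxes, since appending a copy of shared randomness is local.
\item On the right-hand side, $I(\Rvec,A,J:M')\ge I(\Rvec,A:M')$.
\item On the left-hand side, Bob ignores the flag, so his guesses $G_s$ are unchanged.
\end{itemize}
So it suffices to prove IC for the flagged behaviour, with the message allowed to depend on $J$. Conditioning on $J=j$ reduces to the single behaviour $P_j$. I would then combine the $j$-wise inequalities using:
\begin{itemize}
\item convexity of $I(R_s:G_s\mid\Rvec_{<s})$ in the conditional law given $J$ (mixtures reduce mutual information relative to the average when the input marginal is fixed);
\item $I(\Rvec,A,J:M')\ge I(\Rvec,A:M'\mid J)$, which holds because $J$ is independent of $\Rvec$.
\end{itemize}

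\textbf{Main obstacle.} The delicate point is that a protocol for the flagged behaviour lets $M'$ depend on $J$, whereas the original protocol's message cannot. This only weakens the flagged statement, so it is harmless. What actually needs care is checking that the $j$-wise IC inequalities, each averaged over $J$, sum correctly. Concretely, I need $\sum_j\mu_j I(R_s:G_s\mid\Rvec_{<s},J=j)\ge I(R_s:G_s\mid\Rvec_{<s})$. This holds because $R_s$ is independent of $J$ given $\Rvec_{<s}$, so $I(R_s:G_s\mid\Rvec_{<s})\le I(R_s:G_s,J\mid\Rvec_{<s})=I(R_s:G_s\mid\Rvec_{<s},J)$. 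Alternatively, I could cite the convexity established in~\cite{yu:2026, marco:costantino}.

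\textbf{Post-quantumness.} The statement $P_{1/5}\notin\Qc^{(3,2)}$ is immediate from Proposition~\ref{prop:contraction-segment}, since $0<1/5\le1$.
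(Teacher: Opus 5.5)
Your overall route is the paper's. You get single-copy IC for $P_{1/5}$ from Proposition~\ref{prop:contraction-segment} and Theorem~\ref{thm:contraction-criterion}, you use IC for local points, and you then use convexity of the IC deficit in the behaviour. The flag $J$ is just a way of writing the mixture. Your left-hand-side step, $I(R_s:G_s\mid\Rvec_{<s})\le I(R_s:G_s\mid\Rvec_{<s},J)$, uses the independence of $J$ from $\Rvec$ and is exactly the convexity of mutual information in the channel that the paper invokes.

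The right-hand side, however, is argued with the inequality pointing the wrong way. Averaging the $j$-wise inequalities gives $\mathcal I\le I(\Rvec,A:M'\mid J)\le I(\Rvec,A,J:M')$, which is IC for the flagged behaviour. You then reason that, because $I(\Rvec,A,J:M')\ge I(\Rvec,A:M')$, it suffices to prove the flagged IC. But a bound $\mathcal I\le I(\Rvec,A,J:M')$ by a quantity at least as large as the target says nothing about $\mathcal I\le I(\Rvec,A:M')$, so the chain does not close as written.

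The missing ingredient is the Markov property of the protocol you are analysing. There $M'$ is produced from $(\Rvec,A)$ by the fixed channel $p(m'|\rvec,a)$, which does not see $J$. Hence $J\to(\Rvec,A)\to M'$ and $I(J:M'\mid\Rvec,A)=0$. This gives $I(\Rvec,A,J:M')=I(\Rvec,A:M')$, or equivalently $I(\Rvec,A:M'\mid J)=I(\Rvec,A:M')-I(J:M')\le I(\Rvec,A:M')$. This is the concavity of mutual information in the input distribution for a fixed channel, which the paper uses on the right-hand side.

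Your \emph{main obstacle} paragraph is misdiagnosed as a result. Letting $M'$ depend on $J$ is not harmless here: it is precisely the $J$-independence of the message channel that closes the argument. Conversely, your justification ``because $J$ is independent of $\Rvec$'' for $I(\Rvec,A,J:M')\ge I(\Rvec,A:M'\mid J)$ is unnecessary, since that inequality is just the chain rule. Your remark that the flagged behaviour stays in the generated set is also irrelevant: it is not a single-copy behaviour in $\mathcal S^{(3,2)}$, and you only use it after conditioning on $J=j$.

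With the Markov equality inserted, the rest of your argument is correct, including the post-quantumness part.
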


\section{Insufficiency of general information causality}
\label{sec:insufficiency-general}

The contraction criterion of Theorem~\ref{thm:contraction-criterion} is also
sufficient to certify membership to $\IC$ with arbitrarily many independent
copies. Here we state this extension.

\begin{theorem}[Multi-copy contraction criterion]
    \label{thm:contraction-multi-copy}
    Let $P$ satisfy the hypotheses of Theorem~\ref{thm:contraction-criterion}.
    Then, for every database length $n$ and every $c \geq 1$, every
    deterministic adaptive protocol using $c$ independent copies of $P$
    satisfies
    \beq
        \label{eq:contraction-multi-copy-ic}
        \sum_{s=0}^{n-1} I(R_s:G_s \mid \Rvec_{<s}) \leq I(\Rvec,\mathbf A:M'),
    \eeq
    where $\mathbf A = (A_1,\ldots,A_c)$ collects Alice's outcomes. The same
    holds with $(M',G_s)$ in place of $G_s$. In particular,
    $P \in \IC_\infty$.
\end{theorem}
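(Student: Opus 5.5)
The plan is to prove a multi-copy analogue of the intermediate inequality~\eqref{eq:middle-ic-inequality-operational} by induction over the time steps of the adaptive protocol, and then convert it into the IC inequality exactly as in the single-copy case. I will fix a deterministic adaptive protocol as in Eq.~\eqref{eq:deterministic}. Its outputs are Alice's ordered outcomes $\mathbf A$, the message $M'$ and, for Bob, the transcripts of his adaptive measurements. The target quantity is a multi-copy version of the left-hand side of~\eqref{eq:middle-ic-inequality-operational}:
\beq
    \Phi \coloneqq I(\Rvec:M') + \sum_{j=1}^{c}\sum_{y} I(\Rvec : B^{(j)}_y \mid M',\ \text{Bob's data on other copies}).
\eeq
The final step will only use that Bob's guess $G_s$ is a function of $(s,M')$ and his outcomes. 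The same chain-rule and data-processing argument that turns~\eqref{eq:middle-ic-inequality-operational} into~\eqref{eq:contraction-correlated-ic} should then give~\eqref{eq:contraction-multi-copy-ic}, and also the version with $(M',G_s)$ in place of $G_s$.

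The core step is a tensorization of the strong data-processing hypothesis~\eqref{eq:relative-entropy-criterion}. Because the copies are independent, conditioning on $\Rvec$ and on Alice's past outcomes $a_{\ell_1},\dots,a_{\ell_{i-1}}$ leaves the next copy $\ell_i$ in its original product form with Alice's input $x_{\ell_i}$. Bob's marginal on every unmeasured copy is still independent of everything else, by non-signalling. I would therefore peel off one copy at a time in Alice's order. At step $i$, condition on $(\Rvec,\mathbf A_{<i})$ together with the outcomes Bob has for the other copies. Then apply~\eqref{eq:bg_mutual_information_contraction} to the channel $W_{x y}$ of that copy, with the Markov chain $(\Rvec,M',\dots)\to A_{\ell_i}\to B^{(\ell_i)}_y$. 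This bounds Bob's information about the copy by $\eta_{x_{\ell_i} y}$ times the information carried by $A_{\ell_i}$. Summing over $y$ and using~\eqref{eq:contraction-row-sums} should then give, for each copy, a contribution at most $I(\Rvec:A_{\ell_i}\mid M',\Rvec\text{-past},\mathbf A_{<i})$-type terms. Combined with $I(\Rvec:M')$ through the chain rule over $\mathbf A$, this telescopes to $I(\Rvec,\mathbf A:M')$, mirroring the single-copy proof in Appendix~\ref{app:contraction}. One subtlety is that $x_{\ell_i}$ and even $\ell_i$ are random functions of $(\Rvec,\mathbf A_{<i})$. The hypothesis holds uniformly for every $x$ and every $\bm w$, so I will average the per-input inequality over these choices conditioned on $(\Rvec,\mathbf A_{<i})$.

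I expect the main obstacle to be Bob's adaptivity. His choice of which copy to measure next, and with what input, depends on $M'$ and on outcomes from other copies, which are themselves correlated with Alice's outcomes. So the conditioning on the Bob side does not factor cleanly. My first attempt is to avoid modelling Bob's order altogether by summing over all inputs $y$ of every copy, as the $\sum_y$ in~\eqref{eq:middle-ic-inequality-operational} already does. The aim is to show that any adaptive decoding is dominated by a quantity of the form $\sum_j\sum_y I(\Rvec:B^{(j)}_y\mid M',\dots)$, using the chain rule along Bob's time steps and the fact that each step reveals one $B^{(j)}_y$. If this conditioning on Bob's other outcomes breaks the contraction step, the fallback is Beigi--Gohari-style tensorization. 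That is, I would show that the set of pairs $(\eta_{xy})$ satisfying~\eqref{eq:contraction-row-sums}–\eqref{eq:relative-entropy-criterion} is preserved under wirings, so that the wired $c$-copy behavior again meets the hypotheses of Theorem~\ref{thm:contraction-criterion} with external inputs $\rvec$ and $(s,m')$. Theorem~\ref{thm:contraction-criterion} applied to that behavior, with $A$ replaced by $\mathbf A$, would then finish the proof, exactly as in the last paragraph of the proof of Theorem~\ref{thm:gw-ic}.
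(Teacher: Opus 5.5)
Your proposal has a genuine gap, exactly where you place the main obstacle, and neither of your two routes closes it.

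\textbf{The first route lacks a joint distribution.} Your quantity $\Phi$ is not defined. In the actual protocol Bob measures each copy once, at an adaptively chosen input, so the family $\{B^{(j)}_y\}_{j,y}$ and ``Bob's data on other copies'' have no joint distribution. A chain rule along Bob's time steps cannot fix this. Different $s$ produce different transcripts, and overlapping decision trees get double counted unless all transcripts are functions of one common variable.

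The missing idea is to \emph{construct} that variable. The paper builds an auxiliary distribution $Q(\rvec,m',\bvec_{\mathrm{tot}})$ over a full lookup table $\Bvec_{\mathrm{tot}}=(B_{j|y})_{j,y}$. It replaces Alice's outcomes in \emph{reverse} chronological order, starting with $A_c$. At each step it draws the $B_{i|y}$ conditionally independently over $y$ through $W_{x_iy}$, from the posterior $q_i(a_i|\rvec,\mathbf a_{<i},m',\bvec_{>i})$.

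The reverse order is essential. The remaining conditioning $(\Rvec,\mathbf A_{<i})$ still fixes $x_i$, and it gives $A_i$ exactly the reference law $p_{A|x_i}$. Replacing in Alice's order, as you propose, removes the very outcomes that determine her later inputs.

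This $Q$ has two properties:
\begin{enumerate}
\item For every fixed input tuple, it reproduces the joint law of $(\Rvec,M',B_{1|y_1},\dots,B_{c|y_c})$. Hence every adaptive strategy of Bob becomes a deterministic function $F_s(M',\Bvec_{\mathrm{tot}})$.
\item By the contraction hypothesis, each replacement changes the joint entropy by at least $h-H(A_i|\Rvec,\mathbf A_{<i})$, where $h=\sum_yH(B_y)$. The resulting $ch$ terms cancel against subadditivity of $H_Q(M',\Bvec_{\mathrm{tot}})$. This gives $I_Q(\Rvec:M',\Bvec_{\mathrm{tot}})\le I(\Rvec,\mathbf A:M')$.
\end{enumerate}
IC then follows from the chain rule $\sum_sI_Q(R_s:M',\Bvec_{\mathrm{tot}}|\Rvec_{<s})=I_Q(\Rvec:M',\Bvec_{\mathrm{tot}})$.

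Your per-copy telescoping also does not go through as stated. The contraction step naturally bounds $I(A_i:M',\Bvec_{>i}|\Rvec,\mathbf A_{<i})$, not $I(A_i:M'|\Rvec,\mathbf A_{<i})$. It is the joint-entropy bookkeeping above that absorbs the difference, not a chain rule over $\mathbf A$.

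\textbf{The fallback is false, not just unproven.} The row-sum hypothesis of Theorem~\ref{thm:contraction-criterion} is not preserved under wirings, not even under parallel composition. Take two copies used in parallel. For inputs $(x_1x_2)$ and $(y_1y_2)$, the wired channel is $W_{x_1y_1}\otimes W_{x_2y_2}$ with a product reference distribution. Testing with $\bm w_1\otimes\bm p_{A|x_2}$ shows its contraction coefficient is at least $\max(\eta_{x_1y_1},\eta_{x_2y_2})$.

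For $P^*$ with $x_1=x_2=0$, the row sum over the nine Bob input pairs is then at least $4\alpha+5\beta\approx 3.56>1$, even though the single-copy row sum is $2\alpha+\beta\approx 0.99$. So the wired $c$-copy behavior does not satisfy the hypotheses of Theorem~\ref{thm:contraction-criterion}. Unlike membership in $\GW_n$, the contraction condition is not closed under wirings. This is exactly why the paper needs the additive $\sum_i\sum_y$ entropy budget supplied by the lookup-table coupling, rather than a tensorized single-copy criterion.
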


 Consider $c$ independent
    boxes $P_1, \ldots, P_c$.
    We index all inputs and outcomes by these labels, writing
    $\mathbf x = (x_1,\ldots,x_c)$, $\mathbf y = (y_1,\ldots,y_c)$ and 
    $\mathbf a = (a_1,\ldots,a_c)$, $\bvec = (b_1,\ldots,b_c)$ respectively.
    To prove the theorem, we use the  reduction to
    deterministic protocols from Section~\ref{sec:deterministic-reduction}.
    
    For a fixed protocol $\pi_c$, of
    this kind, the joint distribution is
    \beq
    \begin{aligned}
        p&^{(P_1, \ldots ,P_c)}_{\pi_c}(\rvec,s,\mathbf a,m',\bvec,g) = \\
        &p(\rvec)p(s)\,p(m'|\rvec,\mathbf a)\delta_{g,f_G(s,m',\bvec)} \prod_{i=1}^{c}p_i(a_i,b_i|x_i,y_i),\label{eq:mc-mixed-protocol}
    \end{aligned}
    \eeq
    where the inputs $\mathbf x, \mathbf y$ are deterministic functions of other variables in the protocol (see Eq.~\eqref{eq:deterministic}).

    One of the main difficulties in the analysis is that the distributions of Bob's outcomes depend on the database via the behaviours $P_1, \ldots, P_c$ and the message combined. To decouple this we first consider the case where Bob's inputs are independent:
    For each copy $i$,  we introduce auxiliary variables $B_{i|y}$, one for every
    Bob input $y$, and write
   $ \Bvec_i
        \coloneqq (B_{i|y})_y$ and $\Bvec_{\mathrm{tot}}
        \coloneqq (\Bvec_1,\ldots,\Bvec_c)$.
    The idea is then to construct a distribution $Q(\rvec,m',\bvec_{\mathrm{tot}})$
    that reproduces Bob's guessing statistics for suitably chosen combinations of values from $\mathbf b_{\rm tot}$ and that satisfies
    $I_Q(\Rvec:M',\Bvec_{\mathrm{tot}}) \leq I(\Rvec,\mathbf A:M')$, where we use the subscript $Q$ to indicate that the entropy is taken on the auxiliary distribution $Q(\rvec,m',\bvec_{\mathrm{tot}})$. The idea is to then look at each individual behavior $P_1, \ldots, P_c$ as a channel from Alice to Bob, similarly to the single-copy case, and apply the contraction criterion at each copy. More specifically, for $P_1=\cdots = P_c = P$, we arrive at the following lemma.

    \begin{lemma}%
        \label{lem:mc-information-bound}
        Let $P$ satisfy the hypotheses of
        Theorem~\ref{thm:contraction-criterion}. For every database distribution
        and every deterministic adaptive protocol using $c \geq 1$ independent
        copies of $P$, there exists a distribution
        $Q(\rvec,m',\bvec_{\mathrm{tot}})$, independent of $s$, with the
        original marginal on $(\Rvec,M')$, such that
        \beq
            I_Q(\Rvec:M',\Bvec_{\mathrm{tot}}) \leq I(\Rvec,\mathbf A:M'). \label{eq:mc-auxiliary-information-bound}
        \eeq
        Moreover, for every $s$, there exists a deterministic function $F_s$
        such that $\widetilde G_s = F_s(M',\Bvec_{\mathrm{tot}})$ has the same
        joint distribution with $(\Rvec,M')$ as the original guess $G_s$.
    \end{lemma}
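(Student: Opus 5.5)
We build $Q$ by decoupling Bob's adaptive choices from the physical outcomes. Since $P$ is non-signalling and $p_A(a|x)>0$, each copy can be written as Alice's marginal $p_A(a_i|x_i)$ followed by the channel $W_{x_iy_i}$. We make Bob's response to every input explicit. For each copy $i$ and each $a_i$, let the block $\Bvec_i=(B_{i|y})_y$ be drawn from the product distribution
\[
\textstyle\prod_y W_{x_i y}(b_{i|y}|a_i),
\]
where $x_i$ is the input Alice actually used on copy $i$. Her order and inputs are fixed by $(\rvec,\mathbf a)$ through Eq.~\eqref{eq:deterministic}. Copies Alice never measures do not arise, because of the reduction to fixed-$c$ protocols in which she measures all boxes. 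We then set
\[
\textstyle Q(\rvec,m',\bvec_{\mathrm{tot}})=\sum_{\mathbf a}p(\rvec)\,p_A(\mathbf a|\mathbf x(\rvec,\mathbf a))\,p(m'|\rvec,\mathbf a)\prod_{i}\prod_y W_{x_iy}(b_{i|y}|a_i).
\]
The product over $y$ has marginal $1$ over $\bvec_{\mathrm{tot}}$, so the $(\Rvec,M')$ marginal is the original one. $Q$ is also manifestly independent of $s$.

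For the second claim we let $F_s$ simulate Bob. Starting from $(s,m')$, it computes $j_1,y_{j_1}$ via $f_{J_1},f_{Y_1}$ and reads $b_{j_1}:=B_{j_1|y_{j_1}}$. It continues adaptively and finally outputs $f_G(s,m',\bvec)$. We must check that this reproduces the joint law of $(\Rvec,M',G_s)$. Conditioned on $(\rvec,\mathbf a,m')$, each queried coordinate $B_{i|y}$ has law $W_{x_iy}(\cdot|a_i)$, independently across copies. Bob queries each copy at most once, so the adaptivity of his queries does not change this conditional law. In the real protocol, conditioned on $(\rvec,\mathbf a,m')$, Bob's outcome on copy $i$ with input $y_i$ has law $p(a_i,b_i|x_i,y_i)/p_A(a_i|x_i)=W_{x_iy_i}(b_i|a_i)$. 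By independence of the boxes these are also independent across copies, and $M'$ depends only on $(\rvec,\mathbf a)$. Summing over $\mathbf a$ gives equality of the laws.

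The main obstacle is Eq.~\eqref{eq:mc-auxiliary-information-bound}, and we aim for a multi-copy analogue of Eq.~\eqref{eq:middle-ic-inequality-operational}. Write
\[
I_Q(\Rvec:M',\Bvec_{\mathrm{tot}})=I(\Rvec:M')+\sum_{i=1}^c I_Q(\Rvec:\Bvec_i\mid M',\Bvec_{<i})
\]
by the chain rule. Likewise, expand $I(\Rvec,\mathbf A:M')-I(\Rvec:M')=I(\mathbf A:M'\mid\Rvec)$ in Alice's measurement order. We then try to show, copy by copy, that $I_Q(\Rvec:\Bvec_i\mid M',\Bvec_{<i})$ is at most the corresponding increment of Alice's information, namely $I(A_{\ell}:M'\mid\Rvec,\mathbf A_{<\ell})$ for the copy $\ell$ matched to $i$.

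For the single-copy step we reuse the argument of Theorem~\ref{thm:contraction-criterion}. By independence of the blocks given $(\rvec,\mathbf a)$, we have $I_Q(\Rvec:\Bvec_i|\cdot)\le\sum_y I_Q(\Rvec:B_{i|y}|\cdot)$. Each term $I_Q(\Rvec:B_{i|y}|\cdot)$ is bounded via Eq.~\eqref{eq:relative-entropy-criterion}, taking the fixed input distribution to be $\bm p_{A|x_i}$, which is Alice's unconditioned marginal by independence of the copies. This gives a factor $\eta_{x_iy}$, and Eq.~\eqref{eq:contraction-row-sums} then sums the bounds to at most one KL term of the form $\DKL(p_{A_i|\ldots,m'}\|p_{A_i|x_i})$, averaged over the conditioning.

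The delicate point is that $x_i$ is itself a function of $(\rvec,\mathbf a_{<})$. The conditioning must therefore be arranged so that, given the appropriate prefix, $x_i$ is fixed and the reference is exactly $\bm p_{A|x_i}$. We would first try conditioning the $i$-th term additionally on $(\Rvec,\mathbf A_{<i})$, ordering copies by Alice's order. This should be legitimate because the $\Bvec_{<i}$ blocks are noisy functions of $\mathbf A_{<i}$, and it should yield the telescoping sum. If Alice's order conflicts with this, we would instead handle it by a convexity argument over orders.
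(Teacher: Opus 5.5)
There is a genuine gap, and it lies in your definition of $Q$, not only in the unfinished last step. You draw the whole block $\Bvec_i$ from $\prod_y W_{x_iy}(b_{i|y}|a_i)$ with one \emph{shared} $a_i$. This gives Bob, for each Alice input $x$, a joint law $J_x(\bvec)=\sum_a p_A(a|x)\prod_y W_{xy}(b_y|a)$ of all his outcomes. Non-signalling only makes the single-coordinate marginals of $J_x$ independent of $x$.

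Suppose $J_x=J$ for all $x$. Then take $\lambda=\bvec\sim J$, let Bob answer $\lambda_y$, and let Alice answer $a$ with probability $p_A(a|x)\prod_yW_{xy}(\lambda_y|a)/J(\lambda)$. This is a local model for $P$. Hence for nonlocal $P$ such as $P^*$ there exist $x\neq x'$ with $J_x\neq J_{x'}$; for $P_{1/5}$, for instance, $J_0(0,0,1)\approx 0.36$ while $J_1(0,0,1)\approx 0.20$.

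Now take $c=1$, a uniform bit $R_0$, Alice's input equal to $x$ or $x'$ according to $R_0$, and a constant channel. Then $I(\Rvec,A:M')=0$, but $I_Q(\Rvec:M',\Bvec)=I_Q(R_0:\Bvec)>0$. So \eqref{eq:mc-auxiliary-information-bound} is false for your $Q$.

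The step that breaks is the subadditivity $I_Q(\Rvec:\Bvec_i|\cdot)\le\sum_yI_Q(\Rvec:B_{i|y}|\cdot)$. The $B_{i|y}$ are independent given $(\rvec,\mathbf a)$. However, the conditioning of that term does not contain $a_i$, and cannot without trivialising it, so the coordinates remain correlated through $a_i$. Equivalently, the block $\Bvec_i$, unlike each $B_{i|y}$, is not independent of $\Rvec$. That independence is exactly what the non-signalling step $I(\Rvec:B_y)=0$ of the single-copy proof would need at the block level. (Read literally, your proposal to condition the $i$-th term additionally on $\Rvec$ would make it vanish identically.)

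The missing idea is to resample Alice's outcome separately for each $y$ from its posterior. Theorem~\ref{thm:contraction-criterion} controls $\sum_y I(\Rvec:B_y|M')$, not the information carried by a jointly coupled block. The paper therefore replaces copies in reverse chronological order $i=c,\dots,1$, starting from $q_c=p_{\pi_c}(\rvec,\mathbf a,m')$:
\begin{multline*}
q_{i-1}(\rvec,\mathbf a_{<i},m',\bvec_{\ge i})=q_i(\rvec,\mathbf a_{<i},m',\bvec_{>i})\\
\times\prod_y\sum_{a_i}W_{x_iy}(b_{i|y}|a_i)\,q_i(a_i|\rvec,\mathbf a_{<i},m',\bvec_{>i}).
\end{multline*}
This construction has three properties you need:
\begin{itemize}
\item Keeping $\mathbf a_{<i}$ intact fixes $x_i$, which resolves your ``delicate point''.
\item Each $B_{i|y}$ has law $p_B(\cdot|y)$ given $(\rvec,\mathbf a_{<i})$.
\item The coordinates of $\Bvec_i$ are conditionally independent given $(\Rvec,\mathbf A_{<i},M',\Bvec_{>i})$, so $H(\Bvec_i|\cdot)=\sum_yH(B_{i|y}|\cdot)$.
\end{itemize}
Apply the summed criterion $\sum_y\DKL(W_{x_iy}\bm w\|\bm p_{B|y})\le\DKL(\bm w\|\bm p_{A|x_i})$ to the posterior $\bm w$ and average. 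This yields a per-step entropy inequality that telescopes to $H_Q(\Rvec,M',\Bvec_{\mathrm{tot}})\ge H(\Rvec)+H(M'|\Rvec,\mathbf A)+ch$, with $h=\sum_yH(p_B(\cdot|y))$. Together with $H_Q(M',\Bvec_{\mathrm{tot}})\le H(M')+ch$, this gives the bound.

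Your simulation argument for $F_s$ is fine and carries over unchanged. Bob reads only one coordinate per copy, and for each fixed $\mathbf y$ the resampled $Q$ still reproduces $\sum_{\mathbf a}p_{\pi_c}(\rvec,\mathbf a,m')\prod_iW_{x_iy_i}(b_i|a_i)$.
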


    The proof of this lemma is rather technical and thus deferred to Appendix~\ref{app:contraction-multi-copy}. Based on this, we can now prove the theorem.

        \begin{proof}[Proof of Theorem~\ref{thm:contraction-multi-copy}]
        Let $Q$ be the distribution of Lemma~\ref{lem:mc-information-bound}.
        Data processing and the lemma give, for arbitrarily correlated
        $\{R_s\}$,
        \beq
            \begin{aligned}
                \sum_{s=0}^{n-1}I(R_s:G_s|\Rvec_{<s}) & \leq \sum_{s=0}^{n-1} I_Q(R_s:M',\Bvec_{\mathrm{tot}}|\Rvec_{<s}) \\
                                                      & = I_Q(\Rvec:M',\Bvec_{\mathrm{tot}})                              \\
                                                      & \leq I(\Rvec,\mathbf A:M').
            \end{aligned} \label{eq:mc-correlated-ic}
        \eeq
        The first inequality uses conditional data processing and the defining
        property of $F_s$, whereas the equality uses the mutual information
        chain rule.
    \end{proof}

Combining this with Proposition~\ref{prop:contraction-segment} gives
Result~\ref{res:contraction-extension}, as for $0 < \lambda \leq 21 - 12\sqrt3$,
the behavior $P_\lambda$ of Eq.~\eqref{eq:contraction-segment-table} satisfies
\beq
    P_\lambda \in \IC_\infty^{(3,2)}
                  \setminus \Qc^{(3,2)}. \label{eq:contraction-multi-copy-segment}
\eeq
In combination with the following theorem, this establishes Result~\ref{res:contraction-extension}.

\begin{theorem}
    \label{thm:contraction-mixed}
    Let $P_1,\ldots,P_c \in \mathcal{S}^{(3,2)}$, where
    \beq
    \mathcal{S}^{(3,2)} \coloneqq \conv(\LHV^{(3,2)} \cup \{P^*\}).
    \eeq
    Then, for every database length $n$ and every adaptive
    protocol using the independent resources $P_1, \ldots, P_c$,
    information causality holds:
    \beq
        \sum_{s=0}^{n-1} I(R_s:G_s \mid \Rvec_{<s}) \leq I(\Rvec,\mathbf X, \mathbf A:M'). \label{eq:contraction-mixed-ic}
    \eeq
    The same is true with $(M',G_s)$ in place of $G_s$.
\end{theorem}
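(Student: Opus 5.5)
Every $P_i \in \mathcal S^{(3,2)}$ is a convex combination of local deterministic behaviours and $P^*$. The plan is to make this mixture explicit and then reduce to Theorem~\ref{thm:contraction-multi-copy} by treating the local parts as shared randomness.

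\emph{Decomposition.} For each copy write
\[
P_i = (1-\mu_i)\sum_j \nu_{ij} D_{ij} + \mu_i P^*,
\]
with $D_{ij}$ local deterministic. Realise this by an independent classical label $\Lambda_i = (\Lambda_i^{\rm sel}, \Lambda_i^{\rm det})$ given to both parties. The selector $\Lambda_i^{\rm sel}$ says whether copy $i$ is a $P^*$ box or a deterministic box, and $\Lambda_i^{\rm det}$ says which deterministic box. The labels $\Lambda=(\Lambda_1,\ldots,\Lambda_c)$ are independent of $\Rvec$ and of each other.

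\emph{Step 1: fixed label.} Condition on $\Lambda=\lambda$.
\begin{itemize}
\item Copies that are deterministic are local functions of the inputs; both parties can compute them without using the box, so they are absorbed into the local processing.
\item The remaining $c' \le c$ copies are independent copies of $P^*$.
\item Proposition~\ref{prop:contraction-segment} says $P^* = P_{1/5}$ satisfies the hypotheses of Theorem~\ref{thm:contraction-criterion}. Hence Theorem~\ref{thm:contraction-multi-copy} applies to the conditioned protocol, which is still adaptive and deterministic given $\lambda$ (the reduction of Section~\ref{sec:deterministic-reduction} is used per $\lambda$, or the original protocol's randomness is kept inside the shared label).
\end{itemize}
This gives
\[
\sum_s I(R_s:G_s\mid \Rvec_{<s},\Lambda=\lambda) \le I(\Rvec,\mathbf X,\mathbf A:M'\mid \Lambda=\lambda).
\]
Here Alice's deterministic outputs are functions of her inputs, so they are already included via $\mathbf X$ and $\mathbf A$.

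\emph{Step 2: averaging and removing the conditioning.} Average the Step 1 bound over $\lambda$.
\begin{itemize}
\item \textbf{Left side.} Since $\Lambda$ is independent of $\Rvec$, $I(R_s:G_s\mid\Rvec_{<s}) \le I(R_s:G_s,\Lambda\mid\Rvec_{<s}) = I(R_s:G_s\mid \Rvec_{<s},\Lambda)$. This uses that $R_s$ is independent of $\Lambda$ given $\Rvec_{<s}$.
\item \textbf{Right side.} We need $I(\Rvec,\mathbf X,\mathbf A:M'\mid\Lambda) \le I(\Rvec,\mathbf X,\mathbf A:M')$. This can fail in general, because conditioning on shared randomness can increase mutual information. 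This is the main obstacle.
\end{itemize}
The fix is to absorb $\Lambda$ into Alice's side. The paper explicitly accounts shared randomness within $\mathbf A$, so let Alice's recorded outcomes $\mathbf A$ include $\Lambda$ (she holds a copy of it). Then
\[
I(\Rvec,\mathbf X,\mathbf A:M') \ge I(\Rvec,\mathbf X,\mathbf A:M'\mid\Lambda),
\]
by the chain rule $I(\Rvec,\mathbf X,\mathbf A:M') = I(\Lambda:M') + I(\Rvec,\mathbf X,\mathbf A:M'\mid \Lambda)$, which holds because $\Lambda$ is a component of $\mathbf A$.

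\emph{Alternative if the convention is disputed.} If one objects to this convention, use the route that avoids conditioning. Bob's statistics are affine in each $P_i$, and the IC deficit is convex in each box for fixed remaining maps (the convexity argument of Section~\ref{sec:deterministic-reduction}, as used for single-copy convexity). It therefore suffices to check the deficit at extreme points of $\mathcal S^{(3,2)}$ copy by copy, i.e.\ with each $P_i$ either deterministic or $P^*$. That case is exactly Step 1 with a fixed $\lambda$.

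The main care throughout is verifying that the deficit is indeed convex in each individual box for the multi-copy adaptive protocol, with the channel $p(m'|\rvec,\mathbf a)$ fixed. The same argument, with $(M',G_s)$ in place of $G_s$, gives the final claim by the corresponding statement in Theorem~\ref{thm:contraction-multi-copy}.
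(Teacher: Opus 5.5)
Your fallback argument is essentially the paper's proof. Lemma~\ref{lem:mc-mixing} shows that, for a fixed deterministic protocol, the joint distribution is affine in each single box. The left-hand side is then convex, being a mixture of the channels $p(g_s|r_s,\mathbf r_{<s})$ at fixed input, and $I(\Rvec,\mathbf A:M')$ is concave, being a mixture of inputs $p(\rvec,\mathbf a)$ through the fixed channel $p(m'|\rvec,\mathbf a)$. Deterministic boxes are removed by local simulation (Lemma~\ref{lem:mc-deterministic-boxes}), Theorem~\ref{thm:contraction-multi-copy} is applied to the remaining copies of $P^*$, and the mixing is done one position at a time. The convexity check you flag as ``the main care'' is exactly what Lemma~\ref{lem:mc-mixing} verifies.

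Your primary route, conditioning on the whole label $\Lambda$ at once, is also valid, but your justification of the right-hand-side step is wrong. Redefining $\mathbf A$ to contain $\Lambda$ changes the right-hand side to $I(\Rvec,\mathbf X,\mathbf A,\Lambda:M')$. The paper's convention about shared randomness concerns randomness used as a resource in the protocol. It does not cover the latent variable of a decomposition of $P_i$, whose only outputs are $a_i,b_i$. What actually closes the step is that $M'$ is generated from $(\Rvec,\mathbf X,\mathbf A)$ alone. For a deterministic protocol, summing out Bob's side by non-signalling gives $p(\lambda,\rvec,\mathbf a,m')=p(\lambda)p(\rvec)p(m'|\rvec,\mathbf a)\prod_i p^{A}_{\lambda_i}(a_i|x_i)$. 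Hence $\Lambda\to(\Rvec,\mathbf X,\mathbf A)\to M'$ is Markov, and
\[
I(\Rvec,\mathbf X,\mathbf A:M'\mid\Lambda)\le I(\Rvec,\mathbf X,\mathbf A,\Lambda:M')=I(\Rvec,\mathbf X,\mathbf A:M'),
\]
with no convention needed. So the ``obstacle'' you identify does not arise here: conditioning on $\Lambda$ cannot increase this quantity. This is just concavity of mutual information in the input distribution. Likewise, your left-hand-side step is convexity in the channel written in conditional form.

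With this repair, your route handles all copies in one shot rather than iterating over positions. It mirrors the global mixture $Q=\sum_{\mathbf P}w_{\mathbf P}Q_{\mathbf P}$ that the paper itself uses in Proposition~\ref{prop:of-ys}. The paper's formulation, convexity of the deficit in each box, is more modular and makes no reference to a hidden-variable realization.
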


Notice that $\mathcal{S}^{(3,2)}$ is not closed under local relabellings, however, the conclusion of the theorem also holds for its closure under finite local wirings and shared randomness. Indeed, any IC protocol using independent resources from this closure can be turned into an IC protocol using finitely many independent resources from $\mathcal{S}^{(3,2)}$.

To prove the theorem, we proceed in two steps. First, we show by convexity arguments
that any IC violation using resources from $\mathcal S^{(3,2)}$ would also be possible
using extremal resources, in the same spirit as the deterministic reduction for protocols.
Second, we use the fact that protocols using copies of $P^*$ and local
deterministic boxes satisfy IC, since the local boxes can be simulated locally and 
copies of $P^*$ do not violate IC (by Theorem~\ref{thm:contraction-multi-copy}). 
Hence, Theorem~\ref{thm:contraction-mixed} rules out the possibility of reaching an IC 
violation by supplementing $P^*$ with local resources. 
It gives a convex set $\mathcal S$ containing all
local behaviors and some post-quantum ones, with IC holding for every adaptive
protocol combining independent resources from this set. The proofs are deferred
to Appendix~\ref{app:contraction-multi-copy}. 
Finally, using the Navascués--Pironio--Acín hierarchy~\cite{navascues:2008}
one can see that $P^*$ satisfies macroscopic locality~\cite{navascues:2010}, but
lies outside the almost quantum set~\cite{navascues:2015} (consistently with 
the observation that our proof that $P^* \notin \Q_c$ can be adapted to 
a proof of exclusion from almost quantum).
Our results therefore imply that 
there exist behaviors outside the almost quantum set that satisfy information causality.

\section{Conclusions}

We have shown that information causality does not single out the sets of bipartite quantum correlations, even in the case where an arbitrary number of behaviors can be used in the protocol,
already in the Bell scenario with three binary measurements per party.
This raises the immediate question whether IC can be further strengthened or, if not, whether there is an alternative principle that recovers bipartite quantum correlations without assuming the Hilbert space formalism. This would point us towards identifying what are the unique features of quantum theory.

A significant related problem is the relationship between information causality and the 
almost quantum set of correlations~\cite{navascues:2015}, a strict superset of quantum 
correlations that was shown to satisfy several other proposed physical principles~\cite{brassard:2006, linden:2007, navascues:2010, fritz:2013}. Our results show that some correlations that satisfy information causality lie outside the almost quantum set.
Whether all almost quantum correlations satisfy information causality remains an open question.

Our analysis also shows that IC can be established for GPTs (and corresponding sets of correlations) without defining an entropy that satisfies all the properties used in the usual GPT proof of IC~\cite{barnum:2010} (this is even possible from just the construction of bipartite state and effect spaces). In particular, the conditional measurement entropy we use does not reduce to the conditional von Neumann entropy in the quantum case. Exploring the consequences of potential other entropy measures for GPTs remains an interesting direction 
for future research.

\section*{Acknowledgements}
The authors thank Costantino Budroni, Jean-Daniel Bancal and Marco Túlio Quintino for insightful 
discussions.
This work was supported by the STeP2 grant (ANR-22-EXES-0013) of the Agence Nationale de la Recherche (ANR) as part of Plan France 2030.

\medskip

\section*{Statement on AI usage}

The conceptualization and development of the main ghostworld GPT and finite-length 
results (Result~\ref{res:gw-ic}) were performed by the authors. During the subsequent
study of the closure of information causality sets under wirings, OpenAI's
GPT-5.6 Sol was used for mathematical assistance. In conversations on the use of
contractivity in combination with our result,
the model suggested an initial version of
Theorem~\ref{thm:contraction-criterion} and pointed to relevant information
theory literature. It also provided an alternative proof that the ghostworld point
$P_G$ is not quantum, which was more suitable to be extended by the authors to the segment of
Proposition~\ref{prop:contraction-segment}. Building on the single-copy
criterion, the authors subsequently developed the multi-copy extension
establishing Result~\ref{res:contraction-extension}, using the model 
to check calculations and spot mathematical errors.
Finally, the model was used to correct grammatical mistakes in this manuscript. The authors have verified all arguments in the final manuscript and take full responsibility for its contents.

\bibliography{bibliography}{}

@article{al-safi:2011,
  title = {Information Causality from an Entropic and a Probabilistic Perspective},
  author = {{Al-Safi}, Sabri W. and Short, Anthony J.},
  year = 2011,
  month = oct,
  journal = {Physical Review A},
  volume = {84},
  number = {4},
  eprint = {1107.4031},
  primaryclass = {quant-ph},
  pages = {042323},
  issn = {1050-2947, 1094-1622},
  doi = {10.1103/PhysRevA.84.042323},
  urldate = {2026-10-05},
  archiveprefix = {arXiv}
}

@misc{anantharam:2013,
    title         = {On {{Maximal Correlation}}, {{Hypercontractivity}}, and the
                     {{Data Processing Inequality}} Studied by {{Erkip}} and
                     {{Cover}}},
    year          = 2013,
    author        = {Anantharam, Venkat and Gohari, Amin and Kamath, Sudeep and
                     Nair, Chandra},
    publisher     = {arXiv},
    month         = apr,
    doi           = {10.48550/arXiv.1304.6133},
    archiveprefix = {arXiv},
    eprint        = {1304.6133},
    number        = {arXiv:1304.6133},
    primaryclass  = {cs.IT},
    urldate       = {2026-08-24}
}

@inproceedings{anantharam:2013b,
    author    = {Anantharam, Venkat and Gohari, Amin Aminzadeh and
                 Kamath, Sudeep and Nair, Chandra},
    title     = {On Hypercontractivity and the Mutual Information between
                 {{Boolean}} Functions},
    booktitle = {2013 51st {{Annual Allerton Conference}} on {{Communication}},
                 {{Control}}, and {{Computing}} ({{Allerton}})},
    year      = 2013,
    pages     = {13--19},
    doi       = {10.1109/Allerton.2013.6736499},
    month     = oct,
    urldate   = {2026-08-24}
}

@article{barnum:2010,
  title = {Entropy and {{Information Causality}} in {{General Probabilistic Theories}}},
  author = {Barnum, Howard and Barrett, Jonathan and Clark, Lisa Orloff and Leifer, Matthew and Spekkens, Robert and Stepanik, Nicholas and Wilce, Alex and Wilke, Robin},
  year = 2010,
  month = mar,
  journal = {New Journal of Physics},
  volume = {12},
  number = {3},
  eprint = {0909.5075},
  primaryclass = {quant-ph},
  pages = {033024},
  issn = {1367-2630},
  doi = {10.1088/1367-2630/12/3/033024},
  urldate = {2026-10-05},
  archiveprefix = {arXiv}
}

@article{barrett:2007,
    author  = {Barrett, Jonathan},
    title   = {Information Processing in Generalized Probabilistic Theories},
    year    = 2007,
    journal = {Physical Review A},
    volume  = {75},
    number  = {3},
    pages   = {032304},
    month   = mar,
    issn    = {1050-2947, 1094-1622},
    doi     = {10.1103/PhysRevA.75.032304},
    langid  = {english},
    urldate = {2026-01-02}
}

@article{beigi:2015,
    author        = {Beigi, Salman and Gohari, Amin},
    title         = {Monotone {{Measures}} for {{Non-Local Correlations}}},
    year          = 2015,
    journal       = {IEEE Transactions on Information Theory},
    volume        = {61},
    number        = {9},
    pages         = {5185--5208},
    month         = sep,
    issn          = {0018-9448, 1557-9654},
    doi           = {10.1109/TIT.2015.2452253},
    archiveprefix = {arXiv},
    eprint        = {1409.3665},
    primaryclass  = {quant-ph},
    urldate       = {2026-07-24}
}

@article{bell:1964,
    author  = {Bell, J. S.},
    title   = {On the {{Einstein Podolsky Rosen}} Paradox},
    year    = 1964,
    journal = {Physics Physique Fizika},
    volume  = {1},
    number  = {3},
    pages   = {195--200},
    month   = nov,
    issn    = {0554-128X},
    doi     = {10.1103/PhysicsPhysiqueFizika.1.195},
    langid  = {english},
    urldate = {2025-06-24}
}

@article{chaves:2015c,
    author        = {Chaves, Rafael and Majenz, Christian and Gross, David},
    title         = {Information-{{Theoretic Implications}} of
                     {{Quantum Causal Structures}}},
    year          = 2015,
    journal       = {Nature Communications},
    volume        = {6},
    number        = {1},
    pages         = {5766},
    month         = jan,
    issn          = {2041-1723},
    doi           = {10.1038/ncomms6766},
    archiveprefix = {arXiv},
    eprint        = {1407.3800},
    primaryclass  = {quant-ph},
    urldate       = {2024-12-19}
}

@article{collins:2004a,
    author        = {Collins, Daniel and Gisin, Nicolas},
    title         = {A {{Relevant Two Qubit Bell Inequality Inequivalent}} to
                     the {{CHSH Inequality}}},
    year          = 2004,
    journal       = {Journal of Physics A: Mathematical and General},
    volume        = {37},
    number        = {5},
    pages         = {1775--1787},
    month         = feb,
    issn          = {0305-4470, 1361-6447},
    doi           = {10.1088/0305-4470/37/5/021},
    archiveprefix = {arXiv},
    eprint        = {quant-ph/0306129},
    urldate       = {2026-04-09}
}

@article{doherty:2002,
    author        = {Doherty, Andrew C. and Parrilo, Pablo A. and
                     Spedalieri, Federico M.},
    title         = {Distinguishing Separable and Entangled States},
    year          = 2002,
    journal       = {Physical Review Letters},
    volume        = {88},
    number        = {18},
    pages         = {187904},
    month         = apr,
    issn          = {0031-9007, 1079-7114},
    doi           = {10.1103/PhysRevLett.88.187904},
    archiveprefix = {arXiv},
    eprint        = {quant-ph/0112007},
    urldate       = {2026-09-23}
}

@article{doherty:2004,
    author        = {Doherty, Andrew C. and Parrilo, Pablo A. and
                     Spedalieri, Federico M.},
    title         = {A Complete Family of Separability Criteria},
    year          = 2004,
    journal       = {Physical Review A},
    volume        = {69},
    number        = {2},
    pages         = {022308},
    month         = feb,
    issn          = {1050-2947, 1094-1622},
    doi           = {10.1103/PhysRevA.69.022308},
    archiveprefix = {arXiv},
    eprint        = {quant-ph/0308032},
    urldate       = {2026-09-23}
}

@article{fine:1982,
    author    = {Fine, Arthur},
    title     = {Hidden {{Variables}}, {{Joint Probability}}, and the
                 {{Bell Inequalities}}},
    year      = 1982,
    journal   = {Physical Review Letters},
    volume    = {48},
    number    = {5},
    pages     = {291--295},
    month     = feb,
    issn      = {0031-9007},
    doi       = {10.1103/physrevlett.48.291},
    langid    = {english},
    publisher = {American Physical Society (APS)},
    urldate   = {2025-07-22}
}

@misc{gachechiladze:2026,
    title         = {Information {{Causality Characterizes}} the {{Set}} of
                     {{Quantum Correlations}} in the {{Simplest Bell Scenario}}},
    year          = 2026,
    author        = {Gachechiladze, Mariami and Miklin, Nikolai},
    publisher     = {arXiv},
    month         = sep,
    doi           = {10.48550/arXiv.2609.10508},
    archiveprefix = {arXiv},
    eprint        = {2609.10508},
    langid        = {english},
    number        = {arXiv:2609.10508},
    primaryclass  = {quant-ph},
    urldate       = {2026-09-10}
}

@article{gallego:2011,
    author        = {Gallego, Rodrigo and W{\"u}rflinger, Lars Erik and
                     Ac{\'i}n, Antonio and Navascu{\'e}s, Miguel},
    title         = {Quantum Correlations Require Multipartite Information
                     Principles},
    year          = 2011,
    journal       = {Physical Review Letters},
    volume        = {107},
    number        = {21},
    pages         = {210403},
    month         = nov,
    issn          = {0031-9007, 1079-7114},
    doi           = {10.1103/PhysRevLett.107.210403},
    archiveprefix = {arXiv},
    eprint        = {1107.3738},
    primaryclass  = {quant-ph},
    urldate       = {2026-09-23}
}

@article{harder:2013,
    author        = {Harder, Malte and Salge, Christoph and Polani, Daniel},
    title         = {A {{Bivariate Measure}} of {{Redundant Information}}},
    year          = 2013,
    journal       = {Physical Review E},
    volume        = {87},
    number        = {1},
    pages         = {012130},
    month         = jan,
    issn          = {1539-3755, 1550-2376},
    doi           = {10.1103/PhysRevE.87.012130},
    archiveprefix = {arXiv},
    eprint        = {1207.2080},
    primaryclass  = {cs.IT},
    urldate       = {2026-08-28}
}

@article{jain:2024,
    author        = {Jain, Prabhav and Gachechiladze, Mariami and
                     Miklin, Nikolai},
    title         = {Information Causality as a Tool for Bounding the Set of
                     Quantum Correlations},
    year          = 2024,
    journal       = {Physical Review Letters},
    volume        = {133},
    number        = {16},
    pages         = {160201},
    month         = oct,
    issn          = {0031-9007, 1079-7114},
    doi           = {10.1103/PhysRevLett.133.160201},
    archiveprefix = {arXiv},
    eprint        = {2308.02478},
    primaryclass  = {quant-ph},
    urldate       = {2024-11-21}
}

@misc{jain:2026,
    title     = {Bounds on {{Nonlocality}} and {{Random Access Codes}} from
                 {{Extended Information Causality Principle}}},
    year      = 2026,
    author    = {Jain, Prabhav and Miklin, Nikolai and Gachechiladze, Mariami},
    publisher = {arXiv},
    eprint={2606.02416},
    archivePrefix={arXiv},
    doi       = {10.48550/ARXIV.2606.02416},
    urldate   = {2026-06-02}
}

@article{ji:2021,
    author  = {Ji, Zhengfeng and Natarajan, Anand and Vidick, Thomas and
               Wright, John and Yuen, Henry},
    title   = {{{MIP}}* = {{RE}}},
    year    = 2021,
    journal = {Communications of the ACM},
    volume  = {64},
    number  = {11},
    pages   = {131--138},
    month   = nov,
    issn    = {0001-0782, 1557-7317},
    doi     = {10.1145/3485628},
    langid  = {english},
    urldate = {2026-09-23}
}

@article{lang:2014,
  title = {Closed Sets of Correlations: Answers from the Zoo},
  shorttitle = {Closed Sets of Correlations},
  author = {Lang, Ben and Vertesi, Tamas and Navascues, Miguel},
  year = 2014,
  month = oct,
  journal = {Journal of Physics A: Mathematical and Theoretical},
  volume = {47},
  number = {42},
  eprint = {1402.2850},
  primaryclass = {quant-ph},
  pages = {424029},
  issn = {1751-8113, 1751-8121},
  doi = {10.1088/1751-8113/47/42/424029},
  urldate = {2026-10-05},
  archiveprefix = {arXiv}
}

@article{lesniewski:1999,
  title = {Monotone {{Riemannian Metrics}} and {{Relative Entropy}} on {{Non-Commutative Probability Spaces}}},
  author = {Lesniewski, Andrew and Ruskai, Mary Beth},
  year = 1999,
  month = nov,
  journal = {Journal of Mathematical Physics},
  volume = {40},
  number = {11},
  eprint = {math-ph/9808016},
  pages = {5702--5724},
  issn = {0022-2488, 1089-7658},
  doi = {10.1063/1.533053},
  urldate = {2026-10-05},
  archiveprefix = {arXiv}
}

@article{masanes:2006a,
  title = {General Properties of {{Nonsignaling Theories}}},
  author = {Masanes, Ll and Acin, A. and Gisin, N.},
  year = 2006,
  month = jan,
  journal = {Physical Review A},
  volume = {73},
  number = {1},
  eprint = {quant-ph/0508016},
  pages = {012112},
  issn = {1050-2947, 1094-1622},
  doi = {10.1103/PhysRevA.73.012112},
  urldate = {2026-10-05},
  archiveprefix = {arXiv}
}

@article{miklin:2021,
    author        = {Miklin, Nikolai and Paw{\l}owski, Marcin},
    title         = {Information {{Causality}} without Concatenation},
    year          = 2021,
    journal       = {Physical Review Letters},
    volume        = {126},
    number        = {22},
    pages         = {220403},
    month         = jun,
    issn          = {0031-9007, 1079-7114},
    doi           = {10.1103/PhysRevLett.126.220403},
    archiveprefix = {arXiv},
    eprint        = {2101.12710},
    primaryclass  = {quant-ph},
    urldate       = {2024-11-21}
}

@misc{miklin:2026,
    title         = {Communication Complexity Bounds from Information Causality},
    year          = 2026,
    author        = {Miklin, Nikolai and Jain, Prabhav and
                     Gachechiladze, Mariami},
    publisher     = {arXiv},
    month         = feb,
    doi           = {10.48550/arXiv.2602.10206},
    archiveprefix = {arXiv},
    eprint        = {2602.10206},
    number        = {arXiv:2602.10206},
    primaryclass  = {quant-ph},
    urldate       = {2026-02-12}
}

@article{navascues:2008,
    author        = {Navascues, Miguel and Pironio, Stefano and Acin, Antonio},
    title         = {A Convergent Hierarchy of Semidefinite Programs
                     Characterizing the Set of Quantum Correlations},
    year          = 2008,
    journal       = {New Journal of Physics},
    volume        = {10},
    number        = {7},
    pages         = {073013},
    month         = jul,
    issn          = {1367-2630},
    doi           = {10.1088/1367-2630/10/7/073013},
    archiveprefix = {arXiv},
    eprint        = {0803.4290},
    primaryclass  = {quant-ph},
    urldate       = {2024-12-27}
}

@article{pawlowski:2009,
    author        = {Pawlowski, M. and Paterek, T. and Kaszlikowski, D. and
                     Scarani, V. and Winter, A. and Zukowski, M.},
    title         = {Information {{Causality}} as a {{Physical Principle}}},
    year          = 2009,
    journal       = {Nature},
    volume        = {461},
    number        = {7267},
    pages         = {1101--1104},
    month         = oct,
    issn          = {0028-0836, 1476-4687},
    doi           = {10.1038/nature08400},
    archiveprefix = {arXiv},
    eprint        = {0905.2292},
    primaryclass  = {quant-ph},
    urldate       = {2024-11-21}
}

@misc{pollyceno:2026,
    title         = {Communication-Constrained Nonlocal Correlations},
    year          = 2026,
    author        = {Pollyceno, Lucas and Freudenheim, Denis and
                     Nogueira, Jos{\'e} and Chaturvedi, Anubhav and
                     Rabelo, Rafael and Paw{\l}owski, Marcin},
    publisher     = {arXiv},
    month         = mar,
    doi           = {10.48550/arXiv.2603.08610},
    archiveprefix = {arXiv},
    eprint        = {2603.08610},
    number        = {arXiv:2603.08610},
    primaryclass  = {quant-ph},
    urldate       = {2026-03-11}
}

@book{polyanskiy:2025,
    author     = {Polyanskiy, Yury and Wu, Yihong},
    title      = {Information {{Theory}}: {{From Coding}} to {{Learning}}},
    year       = 2025,
    publisher  = {Cambridge University Press},
    edition    = {1},
    isbn       = {978-1-108-96635-1 978-1-108-83290-8},
    doi        = {10.1017/9781108966351},
    month      = jan,
    shorttitle = {Information {{Theory}}},
    urldate    = {2026-08-24}
}

@article{popescu:1994,
    author  = {Popescu, Sandu and Rohrlich, Daniel},
    title   = {Quantum Nonlocality as an Axiom},
    year    = 1994,
    journal = {Foundations of Physics},
    volume  = {24},
    number  = {3},
    pages   = {379--385},
    month   = mar,
    issn    = {0015-9018, 1572-9516},
    doi     = {10.1007/BF02058098},
    langid  = {english},
    urldate = {2024-12-24}
}

@article{raginsky:2016,
    author  = {Raginsky, Maxim},
    title   = {Strong {{Data Processing Inequalities}} and {$\Phi$}--{{Sobolev}}
               Inequalities for {{Discrete Channels}}},
    year    = 2016,
    journal = {IEEE Transactions on Information Theory},
    volume  = {62},
    number  = {6},
    pages   = {3355--3389},
    month   = jun,
    issn    = {0018-9448, 1557-9654},
    doi     = {10.1109/TIT.2016.2549542},
    urldate = {2026-09-30}
}

@article{hirche:2022,
  title = {On Contraction Coefficients, Partial Orders and Approximation of Capacities for Quantum Channels},
  author = {Hirche, Christoph and Rouz{\'e}, Cambyse and Fran{\c c}a, Daniel Stilck},
  year = 2022,
  month = nov,
  journal = {Quantum},
  volume = {6},
  eprint = {2011.05949},
  primaryclass = {quant-ph},
  pages = {862},
  issn = {2521-327X},
  doi = {10.22331/q-2022-11-28-862},
  urldate = {2026-10-05},
  archiveprefix = {arXiv}
}

@article{plavala:2023,
  title = {General Probabilistic Theories: {{An}} Introduction},
  shorttitle = {General Probabilistic Theories},
  author = {Pl{\'a}vala, Martin},
  year = 2023,
  month = sep,
  journal = {Physics Reports},
  volume = {1033},
  eprint = {2103.07469},
  primaryclass = {quant-ph},
  pages = {1--64},
  issn = {03701573},
  doi = {10.1016/j.physrep.2023.09.001},
  urldate = {2024-12-24},
  archiveprefix = {arXiv}
}

@article{short:2010a,
  title = {Entropy in General Physical Theories},
  author = {Short, Anthony J. and Wehner, Stephanie},
  year = 2010,
  month = mar,
  journal = {New Journal of Physics},
  volume = {12},
  number = {3},
  eprint = {0909.4801},
  primaryclass = {quant-ph},
  pages = {033023},
  issn = {1367-2630},
  doi = {10.1088/1367-2630/12/3/033023},
  urldate = {2026-10-05},
  archiveprefix = {arXiv}
}

@article{slofstra:2019,
    author  = {Slofstra, William},
    title   = {The set of quantum correlations is not closed},
    year    = 2019,
    journal = {Forum of Mathematics, Pi},
    volume  = {7},
    pages   = {e1},
    issn    = {2050-5086},
    doi     = {10.1017/fmp.2018.3},
    langid  = {english},
    urldate = {2026-09-23}
}

@article{weilenmann:2020,
    author        = {Weilenmann, Mirjam and Colbeck, Roger},
    title         = {Analysing Causal Structures in Generalised Probabilistic
                     Theories},
    year          = 2020,
    journal       = {Quantum},
    volume        = {4},
    pages         = {236},
    month         = feb,
    issn          = {2521-327X},
    doi           = {10.22331/q-2020-02-27-236},
    archiveprefix = {arXiv},
    eprint        = {1812.04327},
    primaryclass  = {quant-ph},
    urldate       = {2025-11-27}
}

@article{yu:2026,
  title = {Information Causality beyond the Random Access Code Model},
  author = {Yu, Baichu and Scarani, Valerio},
  year = 2026,
  month = jul,
  journal = {Physical Review A},
  volume = {114},
  number = {1},
  eprint = {2201.08986},
  primaryclass = {quant-ph},
  pages = {012202},
  issn = {2469-9926, 2469-9934},
  doi = {10.1103/s52j-2jr7},
  urldate = {2026-10-05},
  archiveprefix = {arXiv}
}

@article{brassard:2006,
  title = {Limit on Nonlocality in Any World in Which Communication Complexity Is Not Trivial},
  author = {Brassard, Gilles and Buhrman, Harry and Linden, Noah and Methot, Andre A. and Tapp, Alain and Unger, Falk},
  year = 2006,
  month = jun,
  journal = {Physical Review Letters},
  volume = {96},
  number = {25},
  eprint = {quant-ph/0508042},
  pages = {250401},
  issn = {0031-9007, 1079-7114},
  doi = {10.1103/PhysRevLett.96.250401},
  urldate = {2026-10-05},
  archiveprefix = {arXiv}
}

@article{linden:2007,
  title = {Quantum Nonlocality and Beyond: Limits from Nonlocal Computation},
  author = {Linden, Noah and Popescu, Sandu and Short, Anthony J. and Winter, Andreas},
  journal = {Phys. Rev. Lett.},
  volume = {99},
  issue = {18},
  pages = {180502},
  numpages = {4},
  year = {2007},
  month = {Oct},
  publisher = {American Physical Society},
  doi = {10.1103/PhysRevLett.99.180502},
  url = {https://link.aps.org/doi/10.1103/PhysRevLett.99.180502}
}

@article{navascues:2010,
  title = {A Glance beyond the Quantum Model},
  author = {Navascues, Miguel and Wunderlich, Harald},
  year = 2010,
  month = mar,
  journal = {Proceedings of the Royal Society A: Mathematical, Physical and Engineering Sciences},
  volume = {466},
  number = {2115},
  eprint = {0907.0372},
  primaryclass = {quant-ph},
  pages = {881--890},
  issn = {1364-5021, 1471-2946},
  doi = {10.1098/rspa.2009.0453},
  urldate = {2025-06-10},
  archiveprefix = {arXiv}
}

@article{fritz:2013,
  title = {Local Orthogonality as a Multipartite Principle for Quantum Correlations},
  author = {Fritz, T. and Sainz, A. B. and Augusiak, R. and Brask, J. B. and Chaves, R. and Leverrier, A. and Ac{\'i}n, A.},
  year = 2013,
  month = aug,
  journal = {Nature Communications},
  volume = {4},
  number = {1},
  pages = {2263},
  eprint = {1210.3018},
  primaryclass = {quant-ph},
  issn = {2041-1723},
  doi = {10.1038/ncomms3263},
  archiveprefix = {arXiv},
  url = {https://doi.org/10.1038/ncomms3263},
}

@Article{cirelson:1993,
  author = 	 {B. S. Tsirelson},
  title = 	 {Some results and problems on quantum {B}ell-type inequalities},
  journal = 	 {Hadronic Journal Supplement},
  year = 	 {1993},
  OPTkey = 	 {},
  volume = 	 {8},
  OPTnumber = 	 {},
  pages = 	 {329--345},
  OPTmonth = 	 {},
  OPTnote = 	 {},
  OPTannote = 	 {}
}

@article{navascues:2015,
  title = {Almost Quantum Correlations},
  author = {Navascu{\'e}s, Miguel and Guryanova, Yelena and Hoban, Matty J. and Ac{\'i}n, Antonio},
  year = 2015,
  month = feb,
  journal = {Nature Communications},
  volume = {6},
  number = {1},
  eprint = {1403.4621},
  primaryclass = {quant-ph},
  pages = {6288},
  issn = {2041-1723},
  doi = {10.1038/ncomms7288},
  urldate = {2026-10-05},
  archiveprefix = {arXiv}
}

@misc{marco:costantino,
    title         = {Bounding single-copy information causality violations},
    year          = 2026,
    author        = {Pompili, Marco and Budroni, Costantino},
    publisher     = {In Preparation},
    month         = sep
}

@misc{hardy:2001,
      title={Quantum Theory From Five Reasonable Axioms}, 
      author={Lucien Hardy},
      year={2001},
      eprint={quant-ph/0101012},
      archivePrefix={arXiv},
      primaryClass={quant-ph},
      url={https://arxiv.org/abs/quant-ph/0101012}, 
}

@article{chiribella:2010,
  title = {Probabilistic Theories with Purification},
  author = {Chiribella, G. and D'Ariano, G. M. and Perinotti, P.},
  year = 2010,
  month = jun,
  journal = {Physical Review A},
  volume = {81},
  number = {6},
  eprint = {0908.1583},
  primaryclass = {quant-ph},
  pages = {062348},
  issn = {1050-2947, 1094-1622},
  doi = {10.1103/PhysRevA.81.062348},
  urldate = {2026-10-05},
  archiveprefix = {arXiv}
}

@article{janotta:2013,
  title = {Generalized {{Probabilistic Theories Without}} the {{No-Restriction Hypothesis}}},
  author = {Janotta, Peter and Lal, Raymond},
  year = 2013,
  month = may,
  journal = {Physical Review A},
  volume = {87},
  number = {5},
  eprint = {1302.2632},
  primaryclass = {quant-ph},
  pages = {052131},
  issn = {1050-2947, 1094-1622},
  doi = {10.1103/PhysRevA.87.052131},
  urldate = {2026-10-05},
  archiveprefix = {arXiv}
}

\appendix
\numberwithin{proposition}{section}
\renewcommand{\theproposition}{\thesection\arabic{proposition}}
\numberwithin{lemma}{section}
\renewcommand{\thelemma}{\thesection\arabic{lemma}}

\begin{widetext}

\section{The ghostworld generalized probabilistic theories}
    \label{app:ghostworld_GPT}

    In this appendix, we derive properties of the ghostworld sets of correlations $\GW_n$. We further introduce in more detail the bipartite $n$-ghostworld GPTs and establish some
    of their structural properties that are used in our proof of
    Result~\ref{res:gw-ic}. 

    \subsection*{Ghostworld sets of correlations \texorpdfstring{$\GW_n$}{GWn}}

    We start by noticing that the positivity, non-signaling, symmetry,
    normalization, and marginal conditions on $Q_A,Q_B$, are finitely many
    linear constraints. Their feasible set is a bounded polytope. Projecting
    onto their bipartite marginal $P$ shows that $\GW_n^{(m,d)}$ is a compact convex
    polytope for all $m,d$. We further have that
    \beq
        \label{eq:gw-hierarchy}
        \LHV \subseteq \GW_{n+1}
             \subseteq \GW_n
             \subseteq \cdots
             \subseteq \GW_0
             = \NS,
    \eeq
    where $\NS$ denotes the set of all non-signalling correlations. 
    Moreover, for any Bell scenario $\mathsf B = (m_A, m_B, d_A, d_B)$, this
    hierarchy converges to the local set at a finite level. For completeness, we
    provide a proof of this fact.
    \begin{proposition}[Convergence at finite level]
        \label{prop:gw-finite-locality}
        In a Bell scenario with $m_A,m_B$ inputs, 
        \beq
            \label{eq:gw-finite-locality}
            \GW_n = \LHV \quad\text{if}\quad n + 1 \geq \min\{m_A,m_B\}
        \eeq
        More generally, every restriction of a behavior in $\GW_n$ to at most
        $n + 1$ inputs for either party is local.
    \end{proposition}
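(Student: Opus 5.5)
The plan is to prove the second (more general) statement first and then deduce Eq.~\eqref{eq:gw-finite-locality} from it. By symmetry of Definition~\ref{def:gw} under exchanging the parties, it suffices to treat a restriction of $P \in \GW_n$ to a set of at most $n+1$ inputs $x_0,\ldots,x_n$ for Alice (repeating inputs if there are fewer than $n+1$), with Bob keeping all his inputs. The natural object is the extension $Q_A(\mathbf a,b\mid\mathbf x,y)$, which has $n+1$ Alice-copies and one Bob. We feed the ghost copies the distinct inputs $\mathbf x=(x_0,\ldots,x_n)$, one per copy. This defines a joint distribution $J(a_{x_0},\ldots,a_{x_n}, b\mid y) \coloneqq q_A(\mathbf a,b\mid \mathbf x,y)$ of Alice's outcomes for \emph{all} her restricted inputs simultaneously, together with Bob's outcome for input $y$.

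The key step is to check that this yields a local hidden-variable model in Fine's style. Take the hidden variable $\lambda = (a_{x_0},\ldots,a_{x_n})$ with distribution $\sum_b J(\lambda,b\mid y)$. This distribution is independent of $y$ because $Q_A$ is non-signalling across the cut separating Bob from the Alice-copies. Conditioned on $\lambda$, Bob responds with $p(b\mid y,\lambda)=J(\lambda,b\mid y)/p(\lambda)$, and Alice on input $x_i$ deterministically outputs the $i$th entry of $\lambda$. This reproduces $p(a,b\mid x_i,y)$, as follows:
\[
\sum_{\mathbf a:\,a_i=a} q_A(\mathbf a,b\mid\mathbf x,y) = p(a,b\mid x_i,y).
\]
To justify this marginal, first marginalize out all Alice-copies other than $i$ using non-signalling. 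Then use permutation symmetry of the pairs $(a_j,x_j)$ to move copy $i$ to position $0$. Finally apply Eq.~\eqref{eq:gw-explicit-sums-alternative}.

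One subtlety needs care: if some $x_i$ coincide, the product model still works, since the copies sharing an input can simply be ignored. Also, Bob's response function is a legitimate stochastic local response, so the model is local. Hence the restricted behavior lies in $\LHV$.

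For Eq.~\eqref{eq:gw-finite-locality}, suppose without loss of generality $m_A\le n+1$, and otherwise use $Q_B$. Then Alice's full input set is a restriction to at most $n+1$ inputs, so $P\in\LHV$. Together with $\LHV\subseteq\GW_n$, which holds via symmetric extensions of deterministic strategies by copying Alice's response function, this gives equality. The main obstacle I anticipate is only bookkeeping: the marginal-and-symmetry step. I will verify explicitly that the non-signalling of $Q_A$ across every bipartition lets us discard copies regardless of the inputs the discarded copies receive.
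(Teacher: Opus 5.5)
Your proposal is correct and is essentially the paper's proof: feed the $n+1$ Alice-copies of $Q_A$ distinct inputs, take $\lambda=(a_0,\ldots,a_n)$ as the hidden variable (its distribution is $y$-independent by non-signalling), let Alice answer deterministically with $a_j$ and Bob with the conditional of $Q_A$, and handle fewer inputs by discarding or duplicating copies. The paper additionally makes explicit the arbitrary choice of Bob's response when $p(\lambda)=0$, which you should include; conversely, your marginal check via permutation symmetry and your argument for $\LHV\subseteq\GW_n$ spell out points the paper leaves implicit.
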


    The proof is based on Fine's
        theorem~\cite{fine:1982} and is analogous to the shareability result of~\cite[Result~1]{masanes:2006a}.
    
    \begin{proof}
    W.l.o.g.\ let $m_A \leq m_B$. Now let us first consider the case $m_A = n + 1$. Let there be a ghostworld behaviour $P=\{p(a,b|x,y)\}$ and label Alice's measurements
        $x \in \{0,\ldots,n\}$. Then, in the extension $Q_A$, set $x_j = j$ for the
        $j$-th copy, and let $\lambda = (a_0,\ldots,a_n)$. The distribution
        $p(\lambda) = \sum_b q_A(\lambda,b|0,\ldots,n,y)$ is independent of $y$
        by non-signalling.  For all $\lambda$ such that
        $p(\lambda) > 0$, define 
        \beq
            p(a|j,\lambda) = \delta_{a,a_j},\qquad p(b|y,\lambda)
                           = \frac{q_A(\lambda,b|0,\ldots,n,y)}{p(\lambda)} \qquad \forall a,j,b,y.
        \eeq
        For $\lambda$ such that  $p(\lambda) = 0$, set
        $p(a|j,\lambda) = \delta_{a,a_j} \forall a,j$  and choose any normalized probability
        distribution $p(\cdot|y,\lambda)$ for each $y$, since the corresponding
        term in the local hidden-variable decomposition below vanishes. Then
        \beq
            \sum_\lambda p(\lambda)p(a|j,\lambda)p(b|y,\lambda) = p(a,b|j,y), \ \forall a,b,j,y
        \eeq
        i.e., there is a local model reproducing $P$.
        The same construction applies whenever $m_A < n + 1$, after
        marginalizing over the additional systems in $Q_A$. 
    \end{proof}

    We further recall from~\cite{lang:2014}, that any ghostworld set of correlations is closed under wirings and shared randomness.

    \begin{proposition}[Closure under wirings~\cite{lang:2014}]
        \label{prop:gw-wirings}
        For any $m,d$ and any $n$, let $P_1,\ldots,P_c \in \GW_n^{(m,d)}$ be independent resources. Every behavior
        obtained from them by local wirings and shared randomness belongs to
        $\GW_n$ in the resulting Bell scenario.
    \end{proposition}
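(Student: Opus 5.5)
The plan is to build explicit symmetric non-signalling extensions of the wired behaviour out of the extensions of the individual resources. Fix a wiring protocol acting on independent $P_1,\ldots,P_c\in\GW_n^{(m,d)}$ and on a shared random variable $\Lambda$. Call the resulting behaviour $P'=\{p'(a',b'|x',y')\}$. Each $P_i$ comes with extensions $Q_A^{(i)}$ and $Q_B^{(i)}$ as in Definition~\ref{def:gw}.

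First, shared randomness. For fixed $\lambda$ the wired behaviour $P'_\lambda$ will be shown to lie in $\GW_n$, and $P'=\sum_\lambda p(\lambda)P'_\lambda$. Convexity of $\GW_n$ then gives the claim, since $\GW_n$ is a convex polytope (established above). Mixing the corresponding extensions with the same weights preserves non-signalling, symmetry and the marginal conditions. So it suffices to treat deterministic wirings without shared randomness.

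Second, the construction of $Q_A'$; $Q_B'$ is analogous with the roles exchanged. Consider the product state $\bigotimes_i Q_A^{(i)}$, where for each copy $j=0,\ldots,n$ of Alice the $j$-th Alice system of every $Q_A^{(i)}$ is grouped into a single super-party $A_j$, and Bob's systems of all $Q_A^{(i)}$ form one super-party $B$. This is a product of non-signalling boxes across the parties $A_0,\ldots,A_n,B$, so it is itself an $(n+2)$-partite non-signalling box. Moreover it is symmetric under permuting the $A_j$'s, because every factor is symmetric under the same permutation of its Alice copies. Each super-party $A_j$ then applies Alice's wiring locally with input $x'_j$, and $B$ applies Bob's wiring with input $y'$. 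This defines $q'_A(\mathbf a',b'|\mathbf x',y')$.

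Then I will verify three properties.
\begin{itemize}
\item \emph{Non-signalling.} A local adaptive wiring applied to a party of a multipartite non-signalling box yields a non-signalling box. This is the standard fact that local operations on separate parties cannot create signalling, and it can be proven by marginalizing the last-measured box first and iterating backwards.
\item \emph{Symmetry.} All $A_j$ apply the same wiring map, so permuting the pairs $(a'_j,x'_j)$ corresponds to permuting the super-parties, under which the underlying product box is invariant.
\item \emph{Marginal.} Summing over $a'_1,\ldots,a'_n$ means ignoring the outputs of the super-parties $A_1,\ldots,A_n$. By non-signalling, their wirings can be replaced by trivial marginalization. What remains is $A_0$ and $B$ acting on $\bigotimes_i\sum_{a^{(i)}_{1..n}}Q_A^{(i)}=\bigotimes_i P_i$, which gives exactly the wiring of the original resources, i.e.\ $p'(a'_0,b'|x'_0,y')$.
\end{itemize}

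The main obstacle is the non-signalling claim for the adaptive wirings, and the marginal step that relies on it. Within super-party $A_j$, the order and inputs for the component boxes depend on earlier outcomes, and the component boxes are shared with other parties that wire in a different order. I would handle this inductively: write the wired probability as a sum over the sequence of measured boxes, and observe that summing a party's final outcome removes its last measured box. By non-signalling of the product, this leaves the other parties' statistics unchanged, and repeating the step eliminates the whole wiring of that party. The same argument also shows that the result is independent of the wiring order chosen by other parties, which completes the proof.
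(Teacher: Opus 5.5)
Your proposal is correct. The paper gives no argument of its own for this proposition and simply defers to Lang, V\'ertesi and Navascu\'es~\cite{lang:2014}, and your construction is the standard ``the ghosts wire exactly like the parties'' argument behind that citation: every ghost copy applies the same local wiring to its share of $\bigotimes_i Q_A^{(i)}$ (respectively $\bigotimes_i Q_B^{(i)}$), and shared randomness is absorbed by convexity of $\GW_n$. One point is worth stating explicitly: the adaptive wirings inside each super-party are well defined because the product box is non-signalling between all $c(n+2)$ individual component systems, not merely between the $n+2$ super-parties, and this holds because each $Q_A^{(i)}$ is non-signalling across every bipartition and the factors are independent.
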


    \subsection*{The generalised probabilistic theories of \texorpdfstring{$n$}{n}-ghostworld and their measurement entropy}
   Let us specify our
    bipartite $n$-ghostworld GPTs, introduced in Section~\ref{sec:preliminary}, in more detail. An elementary system $A$ has the boxworld state
    space of all normalized conditional distributions
    $\omega_A = \{p(a|x)\}_{a,x}$. In the ordered vector space
    formalism~\cite{plavala:2023}, one usually considers a non-redundant coordinate
    system, e.g.~$\omega_A = (p(0|0),p(0|1),1)$ in the two-input and two-output
    case, but it suffices to keep the redundant parametrization for our
    purposes. The effect space
    for $A$ is also equal to the boxworld one. The ray-extremal effects on $A$ are denoted $\{\delta_{a|x}\}$ and
    they pick out one component of the state, $\delta_{a|x}(\omega_A) = p(a|x)$; 
    their sum over $a$ is the unit effect $u_A$ for all $x$.  For a pair $AB$ of elementary
    systems, we restrict the joint states to $\Omega_{AB}^{(n)} = \GW_n$. The
    allowed joint effects and measurements are those of boxworld, i.e., $\mathcal E_{AB}^{(n)} = \big\{ e\big|_{\Omega_{AB}^{(n)}} \colon e \in \mathcal E_{AB}^{\mathrm{BW}} \big\}$. 
     Note that we are restricting the states
    \emph{without} enlarging the allowed effects to all affine functionals from
    $\Omega^{(n)}_{AB}$ to $[0,1]$, hence our bipartite ghostworld GPTs do not satisfy the
    no-restriction hypothesis~\cite{chiribella:2010, janotta:2013}.

    In general, when dealing with GPTs in the context of information processing, it is convenient to describe classical information and GPT states in a joint system (analogously to classical-quantum states).  One can consider classical-GPT states the usual way via direct sums of the    involved probability distributions. Equivalently, a classical-GPT state is a    family of subnormalized GPT states whose total normalization is $1$. A joint    measurement on the classical-GPT state allows to read the classical register
    and use the outcome to choose a measurement on the GPT system.   
    Notice that for our purposes it is not necessary to generalise $n$-ghostworld to more parties. 

    Every effect on an elementary system can be represented by a nonnegative
    linear combination of ray extremal 
    effects~\cite[Appendix~D]{barrett:2007},
    namely $e = \sum_{a,x} c_{ax} \delta_{a|x}$ with $c_{ax} \ge 0$. Multipartite boxworld states are normalized non-signaling behaviors, and
    their effects can be represented as nonnegative combinations of products of 
    ray extremal effects~\cite{barrett:2007},
    \beq
        e = \sum_{\mathbf a,\mathbf x} c_{\mathbf a\mathbf x} \delta_{a_1|x_1}
            \otimes \cdots
            \otimes \delta_{a_k|x_k}, \qquad c_{\mathbf a\mathbf x}
          \ge 0. \label{eq:gw-multipartite-effects}
    \eeq
    Here $\mathbf a = (a_1,\ldots,a_k)$ and $\mathbf x = (x_1,\ldots,x_k)$.

    We now characterize the fine-grained measurements and the corresponding
    measurement entropies that we will need in the IC proofs.

    \begin{lemma}[Elementary fine-grained measurements]
        \label{lem:gw-fiducial-measurements}
        Up to splitting proportional outcomes, every fine-grained measurement
        $M \in \mathcal M^*$ on an elementary boxworld system has the form
        \beq
            M = \{c_x\delta_{a|x}\}_{x,a},\qquad c_x
              \geq 0, \qquad\sum_x c_x
              = 1. \label{eq:gw-fiducial-measurements}
        \eeq
    \end{lemma}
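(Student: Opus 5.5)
To prove Lemma~\ref{lem:gw-fiducial-measurements}, I would describe the extremal rays of the elementary effect cone $\mathcal E_A$. Then I would check which collections of such rays can sum to the unit effect $u_A$.

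First, I determine the indecomposable effects. Every effect has the form $e=\sum_{a,x}c_{ax}\delta_{a|x}$ with $c_{ax}\ge 0$. A nonzero indecomposable effect must therefore be proportional to a single $\delta_{a|x}$: if two distinct terms appeared with positive weight, $e$ would split into two positive effects that are not proportional to $e$. One subtlety needs handling here. The representation is not unique, because $\sum_a\delta_{a|x}=u_A$ for every $x$. So I must confirm that $\delta_{a|x}$ itself is indecomposable in the effect cone, not merely in the redundant coordinates. I would do this by evaluating on deterministic states $\omega$ with $p(a'|x')=\delta_{a',\alpha(x')}$. Suppose $\delta_{a|x}=e_1+e_2$ with $e_1,e_2\ge 0$ on the state space. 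Then each $e_i$ vanishes on every deterministic state with $\alpha(x)\neq a$. Both the state space and the effects are affine, so $e_i$ is determined by its values on the deterministic vertices. One then argues that $e_i$ must be constant on the face $\{\alpha(x)=a\}$ and zero off it. Concretely, the values on vertices differing only in some $\alpha(x')$ with $x'\neq x$ are constrained by affinity plus positivity: $e_i$ restricted to that face is a positive affine function vanishing on the complementary face, hence proportional to $p(a|x)$. The key point is that affine functions on a product of simplices that vanish where $p(a|x)=0$ and are nonnegative must factor as $p(a|x)\cdot g$ with $g\ge 0$ affine. Positivity of $e_i\le\delta_{a|x}$ forces $g$ to be constant, giving $e_i\propto\delta_{a|x}$.

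Second, I characterize which measurements can be built from these rays. Let $M=\{e_k\}$ be fine-grained, so each $e_k=\mu_k\delta_{a_k|x_k}$. Grouping outcomes by the pair $(a,x)$ gives $\sum_{a,x}d_{ax}\delta_{a|x}=u_A$ with $d_{ax}\ge0$. I evaluate both sides on deterministic states. For any choice function $\alpha$ this yields $\sum_x d_{\alpha(x)x}=1$. Now compare two functions $\alpha$ and $\alpha'$ that differ only at input $x$. Their constraints give $d_{ax}=d_{a'x}$ for all $a,a'$, so $d_{ax}=c_x$ is independent of $a$. Substituting back gives $\sum_x c_x=1$. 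Therefore, after merging proportional outcomes, $M=\{c_x\delta_{a|x}\}_{x,a}$, which is Eq.~\eqref{eq:gw-fiducial-measurements}. Conversely, every such collection sums to $\sum_x c_x u_A=u_A$, and each of its effects is indecomposable.

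I expect the main obstacle to be the indecomposability step in the first part, because it depends on controlling the non-uniqueness of the conic representation through the vertex-evaluation argument. The remaining steps are linear bookkeeping on deterministic states.
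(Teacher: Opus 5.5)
Your proof is correct, and its second part is exactly the paper's proof: evaluate $\sum_{a,x}d_{ax}\delta_{a|x}=u_A$ on deterministic states, then vary the output at a single input to obtain $d_{ax}=c_x$ with $\sum_x c_x=1$.

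Where you go beyond the paper is the first part. The paper simply takes from Barrett's conic representation that indecomposable effects lie on the rays spanned by the $\delta_{a|x}$. For the lemma as stated, only the easy direction is needed. If $e=\sum c_{ax}\delta_{a|x}$ is indecomposable, each summand with $c_{ax}>0$ must be proportional to $e$, so $e\propto\delta_{a|x}$. The non-uniqueness of the representation plays no role there.

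Your check that $\delta_{a|x}$ itself is indecomposable instead proves the converse, namely that every $\{c_x\delta_{a|x}\}$ is genuinely fine-grained. The lemma does not claim this, but the paper uses it implicitly when it minimizes over all such measurements to get $H_+(A)=\min_x H(A_x)$.

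In that check, tighten one step. It is affinity, not the bound $e_i\le\delta_{a|x}$, that forces your $g$ to be constant. Since $\{p(a|x)=0\}$ is a facet of the product of simplices, any affine function vanishing on it is a scalar multiple of $p(a|x)$. Equivalently, for deterministic vertices $v,v'$ with output $a$ at $x$ that differ only at some $x'\neq x$, let $w,w'$ be obtained by changing the output at $x$. Then $\tfrac12(v+w')=\tfrac12(v'+w)$, which gives $e_i(v)=e_i(v')$.
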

    \begin{proof}
        A fine-grained measurement is by definition composed
        of indecomposable effects, which are those that lie on an extremal ray of the effect cone. 
        Thus, after merging proportional outcomes, we
        can write any fine-grained measurement as $\{c_{ax}\delta_{a|x}\}_{a,x}$
        with $c_{ax} \ge 0$, with null coefficients for absent effects.
        Normalization requires, for every state $\omega_A = \{p(a|x)\}_{a,x}$,
        \beq
            \sum_{a,x} c_{ax} p(a|x) = 1. \label{eq:gw-fiducial-normalization}
        \eeq
        Applying it to all the deterministic states $\tilde{\omega}_A=\{p(a|x)\}$ such that $p(a|x) = \delta_{a,f(x)} \ \forall a,x$,
        this gives
        \beq
            \sum_x c_{f(x),x} = 1 \qquad \text{for every deterministic } f(x). \label{eq:gw-fiducial-deterministic}
        \eeq
        There are always deterministic states differing from $\tilde{\omega}_A$ only in the $\bar x$ measurement, thus varying $f(\bar x)$ while keeping the other outcomes fixed shows that
        $c_{a\bar x}$ is independent of $a$. Since $\bar x$ is arbitrary,
        $c_{ax} = c_x$, and normalization gives $\sum_x c_x = 1$, proving
        Eq.~\eqref{eq:gw-fiducial-measurements}.
    \end{proof}

    \begin{proposition}[Measurement entropies]
        \label{prop:gw-measurement-entropies}
        Let $A,B$ be elementary boxworld systems and $C,M$ classical registers.
        For states on $A$, $AB$, and $CMB$, respectively,
        \begin{subequations}
            \label{eq:gw-local-conditional-formulas}
            \begin{align}
                H_+(A)          & = \min_x H(A_x), \label{eq:gw-elementary-entropy}                                    \\
                H_+(B \mid A)   & = \min_x\sum_a p(a|x)\min_y H(B_y \mid A_x = a), \label{eq:gw-conditional-entropy}   \\
                H_+(C \mid M,B) & = \sum_m p(m)\min_y H(C \mid M = m,B_y), \label{eq:gw-classical-conditional-entropy}
            \end{align}
        \end{subequations}
        where $A_x$ and $B_y$ denote the outcomes of local measurements labelled by $x$ and $y$ in the probability table representation we use for the states,
        e.g., $H(A_x) \coloneqq H(p(\cdot|x))$ with $H$ the Shannon
        entropy. In Eq.~\eqref{eq:gw-conditional-entropy}, the optimal $y$ may
        depend on $x$ and $a$; in
        Eq.~\eqref{eq:gw-classical-conditional-entropy}, the optimal $y$ may
        depend on $m$.
    \end{proposition}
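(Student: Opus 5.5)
We prove the three formulas in Proposition~\ref{prop:gw-measurement-entropies} in order, each building on the previous one. All of them rest on Lemma~\ref{lem:gw-fiducial-measurements}.

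\emph{Eq.~\eqref{eq:gw-elementary-entropy}.} By Lemma~\ref{lem:gw-fiducial-measurements}, any fine-grained measurement on $A$ is $\{c_x\delta_{a|x}\}_{x,a}$, up to splitting proportional outcomes; splitting outcomes can only increase the Shannon entropy. The outcome distribution is $p(x,a)=c_x p(a|x)$, i.e.\ we sample $x$ with probability $c_x$ and then measure $x$. Its entropy is
\[
H(\{c_x\}) + \sum_x c_x H(A_x) \geq \min_x H(A_x),
\]
and the choice $c_x=\delta_{x,x^*}$ attains the bound, which gives the first identity.

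\emph{Eq.~\eqref{eq:gw-conditional-entropy}.} We must compute $H_+(B\mid A)$, an infimum over all measurements $\{f_k\}$ on $A$, not only fine-grained ones. Every effect on $A$ is a nonnegative combination $f_k=\sum_{a,x} c^k_{ax}\delta_{a|x}$. For the upper bound, choose $f_{(a)}=\delta_{a|x}$ with the minimizing $x$; the conditional states are $\omega_{B|A_x=a}$, and applying the first identity to them yields the right-hand side. For the lower bound we refine the measurement. Refine each outcome $k$ into the pieces $c^k_{ax}\delta_{a|x}$. The refined conditional states are $\omega_{B|A_x=a}$ with weights $c^k_{ax}p(a|x)$. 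Since $H_+$ is concave under mixtures (an infimum of concave functions of the state), coarse-graining only increases the averaged conditional entropy. It therefore suffices to treat the refined collection $\{c_{ax}\delta_{a|x}\}$, in which every coefficient $c_{ax}\geq 0$ is indexed by the pair $(a,x)$ (the label $k$ has been absorbed into the refinement). Normalization on deterministic states forces $c_{ax}=c_x$, by the argument of Lemma~\ref{lem:gw-fiducial-measurements}. The result is then a $c_x$-average over $x$ of
\[
\sum_a p(a|x)\, H_+(B\mid A_x=a) = \sum_a p(a|x)\min_y H(B_y\mid A_x=a),
\]
which is at least the minimum over $x$. The main obstacle is justifying the normalization step when several effects share the same ray. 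My plan is to sum the coefficients over $k$ before invoking the deterministic-state argument.

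\emph{Eq.~\eqref{eq:gw-classical-conditional-entropy}.} A measurement on the classical-GPT state $CMB$ reads $m$ and then applies $\{f_{k|m}\}$ on $B$, as in Eq.~\eqref{eq:gw-classical-target-measurement}. Hence $H_+(C\mid M,B)=\sum_m p(m)\inf_{\{f_{k|m}\}}H(C\mid M=m,K)$, and it suffices to work at fixed $m$.
\begin{itemize}
\item For the upper bound, take $f_{k|m}=\delta_{k|y}$.
\item For the lower bound, refine each $f_{k|m}$ into its ray components $c^k_{by}\delta_{b|y}$. By data processing, $H(C\mid K)\geq H(C\mid K')$ for the refinement $K'$. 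The same normalization argument gives weights $c_y$ for each $y$. The refined outcome $(y,b)$ then has conditional distribution $p(c\mid m,B_y=b)$, so $H(C\mid K')=\sum_y c_y H(C\mid m,B_y)\geq \min_y H(C\mid m,B_y)$.
\end{itemize}
The same refinement-and-normalization step is reused here, so the obstacle identified above is the only nontrivial point in all three formulas.
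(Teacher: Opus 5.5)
Your proof is correct and follows the same skeleton as the paper's. Both reduce to the fiducial measurements $\{c_x\delta_{a|x}\}$ of Lemma~\ref{lem:gw-fiducial-measurements} and then optimize over the weights $c_x$.

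The one genuine difference is in the two conditional formulas. The paper invokes \cite[Appendix~A]{weilenmann:2020} to restrict the measurement on the conditioning system to fine-grained ones, and then uses linearity in $c_x$ to pass to a vertex. You instead prove that reduction directly:
\begin{itemize}
\item you refine each effect into its ray components $c^k_{ax}\delta_{a|x}$;
\item you control the coarse-graining by concavity of $H_+$ (an infimum of concave functions) in Eq.~\eqref{eq:gw-conditional-entropy}, and by the fact that conditioning on a finer variable cannot increase $H(C\mid\cdot)$ in Eq.~\eqref{eq:gw-classical-conditional-entropy};
\item you then run the deterministic-state normalization argument on the summed coefficients $\sum_k c^k_{ax}$.
\end{itemize}
Summing over $k$ is exactly the right way to handle several outcomes sharing a ray, since $\sum_k f_k=u$ constrains only the total coefficient on each ray.

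Your version is self-contained and makes explicit the step the paper delegates to the citation. This includes Eq.~\eqref{eq:gw-classical-conditional-entropy}, where the paper's passage to fine-grained measurements on $MB$ is inherited from that reference. Both arguments still rely on the same ``read $m$ first, then measure $B$'' description of measurements on classical-GPT states. The paper's route is shorter.

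For Eq.~\eqref{eq:gw-elementary-entropy}, your bound $H(\{c_x\})+\sum_x c_xH(A_x)\ge\min_xH(A_x)$ is a slightly more direct substitute for the paper's concavity-and-vertex argument.
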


    \begin{proof}
        Notice that proportional splitting of outcomes cannot lower
        their Shannon entropy. By Lemma~\ref{lem:gw-fiducial-measurements}, we
        can therefore write
        \beq
            \begin{aligned}
                H_+(A) & = \min_{\{c_x\}} H(\{c_xp(a|x)\}_{a,x})                        \\
                       & = \min_{\{c_x\}}\left[ H(\{c_x\}) + \sum_x c_x H(A_x) \right],
            \end{aligned} \label{eq:gw-entropy-simplex}
        \eeq
        where $c_x \geq 0$ and $\sum_xc_x = 1$. The second equality follows by
        writing the definition of Shannon entropy and regrouping terms. This is
        a concave function on the probability simplex, so its minimum is
        attained at a vertex $c_x = \delta_{x,\bar x}$. Minimizing over $\bar x$
        gives Eq.~\eqref{eq:gw-elementary-entropy}.

        For conditional measurement entropy, the measurement on the conditioning
        system may be taken fine-grained~\cite[Appendix~A]{weilenmann:2020}. This mean that on
        $A$, we can take a measurement of the form $\{c_x\delta_{a|x}\}_{a,x}$, up to
        proportional splitting. The outcome $(x,a)$ has probability $c_xp(a|x)$
        and (if nonzero) it leaves $B$ in the following state, independent of
        $c_x$:
        \beq
            \omega_{B}^{a|x} = \left\{ \frac{p(a,b|x,y)}{p(a|x)} \right\}_{b,y}. \label{eq:gw-fiducial-conditional-state}
        \eeq
        Hence
        \beq
            H_+(B \mid A) = \min_{\{c_x\}}\sum_{x,a} c_xp(a|x)H_+(\omega_{B}^{a|x}), \label{eq:gw-conditional-simplex}
        \eeq
        with $c_x \geq 0$ and $\sum_xc_x = 1$. The objective is linear in these
        coefficients, so its minimum is attained at a vertex, i.e., at a fixed
        input $x$. Applying Eq.~\eqref{eq:gw-elementary-entropy} separately to
        each conditional state of $B$ gives
        Eq.~\eqref{eq:gw-conditional-entropy}.

        Finally, a measurement on $M B$ may be described by reading the value of
        the classical $M$ first, followed by a measurement on $B$ chosen for
        that value~\cite[Appendix~A]{weilenmann:2020}. Up to proportional
        splitting, the fine-grained measurements on $MB$ therefore have effects
        $\{c_{my}\delta_m \otimes \delta_{b|y}\}_{m,b,y}$, where $\delta_m$
        reads the classical value $m$ and $c_{my} \geq 0$, $\sum_yc_{my} = 1$
        for each $m$. Since $C$ is classical, the resulting 
        entropy is
        \beq
         H_+(C \mid M, B)= \min_{\{c_{my}\}}  \sum_m p(m)\sum_y c_{my}H(C \mid M = m,B_y). \label{eq:gw-classical-conditional-simplex}
        \eeq
        This expression is linear in each vector $(c_{my})_y$, which can be chosen
        independently for every $m$. Hence the minimum is attained at a vertex of each simplex, which 
        gives Eq.~\eqref{eq:gw-classical-conditional-entropy}. %
    \end{proof}

    \subsection*{
        The \texorpdfstring{$I_{mm22}$}{Imm22} ghostworld bound
    }

    In this section, we prove the optimality of the points $P_m$ in 
    Section~\ref{sec:insufficiency-finite-length} for the $I_{mm22}$ 
    Bell functional.
    We use the notation $p_x^A = p_A(0|x)$,
    $p_y^B = p_B(0|y)$, $p_{xy} = p(0,0|x,y)$. In the Collins--Gisin
    notation~\cite{collins:2004a}, transposed so that Alice's inputs index rows
    and Bob's inputs index columns, with the marginal coefficients/probabilities in the first column and row, we have
    \beq
        I_{mm22}
        = \left( \begin{behaviorarray}{c|ccccc}
                            & -(m-1) & -(m-2) & \cdots                & -1     & 0      \\
                     \hline
                     -1     & 1      & 1      & \cdots                & 1      & 1      \\
                     0      & 1      & 1      & \cdots                & 1      & -1     \\
                     0      & 1      & 1      & \cdots                & -1     & 0      \\
                     \vdots & \vdots & \vdots & \reflectbox{$\ddots$} & \vdots & \vdots \\
                     0      & 1      & -1     & \cdots                & 0      & 0
                 \end{behaviorarray} \right), \qquad P_m
        = \frac1m\left( \begin{behaviorarray}{c|ccccc}
                                   & 1      & 1      & \cdots                & 1      & 1      \\
                            \hline
                            m-1    & 1      & 1      & \cdots                & 1      & 1      \\
                            m-1    & 1      & 1      & \cdots                & 1      & 0      \\
                            m-1    & 1      & 1      & \cdots                & 0      & 1      \\
                            \vdots & \vdots & \vdots & \reflectbox{$\ddots$} & \vdots & \vdots \\
                            m-1    & 1      & 0      & \cdots                & 1      & 1
                        \end{behaviorarray} \right). \label{eq:gw-collins-gisin-tables}
    \eeq

    The local bound of the $I_{mm22}$ functional is zero for all
    $m \ge 2$~\cite{collins:2004a}. For a behavior in $\GW_{m-2}$ we notice
    that, by Proposition~\ref{prop:gw-finite-locality}, deleting any input on
    Alice or Bob leaves a local sub-behavior. This yields the following two
    bounds.

    \begin{lemma}[Bob deletions]
        \label{lem:bell-deletions}
        For $P \in \GW_{m-2}$ in the $(m,m,2,2)$ scenario,
        $(m - 1)I_{mm22}(P) \leq p_1^A$.
    \end{lemma}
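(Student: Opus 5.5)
The plan is to combine Proposition~\ref{prop:gw-finite-locality} with an averaging identity over deletions of one of Bob's inputs. Fix $P \in \GW_{m-2}$ in the $(m,m,2,2)$ scenario. Since $n+1 = m-1$, Proposition~\ref{prop:gw-finite-locality} says that the restriction of $P$ to any $m-1$ of Bob's inputs is local. For each $y_0 \in \{1,\ldots,m\}$, let $I^{(-y_0)}$ be the functional obtained from $I_{mm22}$ by deleting column $y_0$. That is, we drop the joint terms $M_{xy_0}p_{xy_0}$ and the marginal term $-(m-y_0)p^B_{y_0}$, but keep Alice's marginal term $-p_1^A$. This is a Bell functional in the $(m,m-1,2,2)$ scenario. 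So if I show that its local bound is at most $0$, locality of the restriction gives $I^{(-y_0)}(P) \le 0$ for every $y_0$.

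Next I would carry out the averaging. Write $I_{mm22} = \sum_y C_y(P) - p_1^A$, where $C_y$ collects the column-$y$ terms, namely $\sum_x M_{xy}p_{xy} - (m-y)p^B_y$. Each column survives in exactly $m-1$ of the $m$ deletions, while $-p_1^A$ appears in all $m$ of them. Hence
\[
\sum_{y_0=1}^m I^{(-y_0)}(P) = (m-1)\sum_y C_y(P) - m\,p_1^A = (m-1)I_{mm22}(P) - p_1^A .
\]
Since the left-hand side is $\le 0$, this gives $(m-1)I_{mm22}(P) \le p_1^A$, which is the claim.

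The remaining step, and the main thing to check, is that deleting a column cannot raise the local bound above $0$. I would argue with deterministic strategies $a(x), b(y)$. For fixed Alice outputs, the Bob-dependent part splits column by column. Column $y$ contributes $\sum_x M_{xy}[a_x = 0] - (m-y)$ if $b_y = 0$, and exactly $0$ if $b_y = 1$. An optimal Bob therefore gets $\max(0,\cdot)$ from each column independently, so each column's optimal contribution is nonnegative. Removing a column therefore cannot increase the maximum over deterministic strategies. Consequently the local bound of $I^{(-y_0)}$ is at most the local bound of $I_{mm22}$, which is $0$ by~\cite{collins:2004a}.

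If I want the argument self-contained rather than citing~\cite{collins:2004a}, I would verify the full bound directly. When $a_1 = 1$, every column value is at most $0$, since column $y$ has at most $m-y$ positive entries in rows $x \ge 2$. When $a_1 = 0$, each column value is at most $1$. Moreover, two columns $y < y'$ cannot both reach $1$: column $y'$ requires $a_{m+2-y'} = 1$, whereas column $y$ requires $a_x = 0$ for all $x \le m+1-y$, and $m+2-y' \le m+1-y$. So the total is at most $1 - 1 = 0$ in that case too.
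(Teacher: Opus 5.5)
Your proposal is correct and follows the paper's proof essentially step for step: the same column contributions $C_k$, the same averaging identity $\sum_k I^{B\setminus k}(P) = (m-1)I_{mm22}(P) - p_1^A$, and locality of each Bob-deleted sub-behavior via Proposition~\ref{prop:gw-finite-locality}. Your per-column $\max(0,\cdot)$ argument is equivalent to the paper's completion of a local model by $b=1$ at the deleted input, and your direct check of the zero local bound correctly replaces the citation to Collins--Gisin.
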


    \begin{proof}
        Define the contribution of Bob's $k$th input, and the functional that
        ignores that contribution, as
        \beq
            C_k \coloneqq \sum_xM_{xk}p_{xk}
                          - (m - k)p_k^B, \qquad I^{B \setminus k}(P)
                \coloneqq I_{mm22}(P) - C_k.
        \eeq
        The functional $I^{B \setminus k}$ has local bound zero: any local model
        on the remaining inputs can be completed with the deterministic outcome
        $b = 1$ at input $k$, so that $p_k^B = p_{xk} = 0$ and the full
        functional $I_{mm22}$ has exactly the same value. Moreover,
        $I^{B \setminus k}(P) \le 0$, as the sub-behavior obtained from $P$ by
        deleting Bob input $k$ is local. Summing this inequality over all $k$,
        and using $\sum_k C_k = I_{mm22}(P) + p_1^A$ that comes from the
        definition of $I_{mm22}$, gives
        \beq
            0 \geq \sum_{k=1}^{m}I^{B \setminus k}(P)
              = (m - 1)I_{mm22}(P) - p_1^A,
        \eeq
        as we wanted to prove.
    \end{proof}

    \begin{lemma}[Alice completions]
        \label{lem:bell-completion}
        For $P \in \GW_{m-2}$ in the $(m,m,2,2)$ scenario,
        $I_{mm22}(P) \leq 1 - p_1^A$.
    \end{lemma}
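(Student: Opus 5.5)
The plan is to mirror the proof of Lemma~\ref{lem:bell-deletions}, this time deleting one of Alice's inputs. The deleted input will be completed with the deterministic outcome $a=0$ rather than $a=1$, since that completion is what produces the constant $1$. Concretely, I delete Alice's input $x=1$. By Proposition~\ref{prop:gw-finite-locality}, any restriction of $P\in\GW_{m-2}$ to at most $m-1$ inputs of one party is local. Hence the sub-behavior $P^{A\setminus 1}$, on Alice's inputs $2,\dots,m$ and all of Bob's inputs, admits a local model.

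First I split off the contribution of Alice's first row,
\[
D_1 \coloneqq \sum_{y=1}^m M_{1y}p_{1y} - p_1^A = \sum_{y=1}^m p_{1y} - p_1^A ,
\]
where $M_{1y}=1$ for all $y$ because $1+y\le m+1$. Then $I_{mm22}(P) - D_1$ depends only on the sub-behavior $P^{A\setminus 1}$. Next I introduce the functional
\[
J(P') \coloneqq I_{mm22}(P') - D_1(P') + \sum_{y=1}^m p_y^B - 1
\]
on the reduced scenario. I claim $J$ has local bound zero. Take any local deterministic strategy on the reduced scenario and complete it by letting Alice output $a=0$ on input $1$. Then $p_1^A=1$ and $p_{1y}=p_y^B$, so the full functional evaluated on the completed local strategy equals exactly $J$. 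By the local bound of $I_{mm22}$, this value is at most $0$. Extending by convexity, $J\le 0$ on all local behaviors of the reduced scenario, and in particular $J(P^{A\setminus 1})\le 0$.

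Evaluating $J$ on the restriction of $P$ then gives
\[
I_{mm22}(P) \le 1 - \sum_y p_y^B + \sum_y p_{1y} - p_1^A = 1 - p_1^A - \sum_y\bigl(p_y^B - p_{1y}\bigr).
\]
Since $p_{1y}=p(0,0|1,y)\le p_B(0|y)=p_y^B$ for every $y$, the last sum is nonnegative, which yields $I_{mm22}(P)\le 1-p_1^A$.

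The only delicate point is choosing the correct completion. The completion $a=1$ used in Lemma~\ref{lem:bell-deletions} only gives $I\le D_1$, which is not the desired bound. The completion $a=0$ is the one that generates the constant $1$ through $-p_1^A$, and it replaces the joint terms by Bob's marginals. The remaining checks are routine: the marginal terms $-(m-y)p_y^B$ are unaffected by the deletion, and the restricted behavior still involves all of Bob's inputs.
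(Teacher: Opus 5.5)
Your proposal is correct and follows essentially the same route as the paper: delete Alice's first input, invoke Proposition~\ref{prop:gw-finite-locality} to get a local sub-behavior, complete it with the deterministic outcome $a=0$ so that $p_1^A=1$ and $p_{1y}=p_y^B$, apply the local bound zero, and drop the nonnegative term $\sum_y p(1,0|1,y)$. The only difference is cosmetic: the paper completes the local sub-behavior directly, whereas you complete each deterministic strategy and then extend by convexity; both arguments are equally valid.
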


    \begin{proof}
        Delete Alice's first measurement setting, getting a local sub-behavior
        in $(m - 1,m,2,2)$. Complete it by assigning $a = 0$ to the deleted
        input. The resulting local behavior $Q$ satisfies $q_1^{A} = 1$ and
        $q_{1y} = p_y^B$, with all other entries unchanged. Since the first row
        of $M_{xy}$ of the $I_{mm22}$ functional consists of ones,
        \beq
            \begin{aligned}
                0 \geq I_{mm22}(Q) & = I_{mm22}(P) - \sum_y p_{1y} + \sum_y p_y^B + p_1^A - 1 \\
                                   & = I_{mm22}(P) + p_1^A - 1 + \sum_y p(1,0|1,y).
            \end{aligned}
        \eeq
        Therefore,
        $I_{mm22}(P) \le 1 - p_1^A - \sum_y p(1,0|1,y) \le 1 - p_1^A$.
    \end{proof}

    Adding the bounds in Lemmas~\ref{lem:bell-deletions}
    and~\ref{lem:bell-completion} gives
    \beq
        m I_{mm22}(P) \leq p_1^A
                           + (1 - p_1^A), \quad \text{i.e.,} \quad I_{mm22}(P)
                      \le \frac{1}{m}. \label{eq:gw-bell-upper-combination}
    \eeq
    This proves the upper bound in Proposition~\ref{prop:bell-upper},
    and thus the optimality of $P_m$ for the $I_{mm22}$ functional.

    \section{Contraction coefficients and ghostworld points}
    \label{app:contraction}

    In this appendix, we provide more details on the proof of Result~\ref{res:contraction}.

\subsection*{Proof of Theorem~\ref{thm:contraction-criterion}}
        Let $P$ satisfy the contraction criterion, and take an arbitrary
        deterministic single-copy protocol with choices
        $x = f_X(\rvec)$, $y = f_Y(s,m')$, decoding $g = f_G(s,m',b)$, and
        communication channel $p(m'|\rvec,a)$. Write $x_\rvec = f_X(\rvec)$. If
        $p(\rvec,m') > 0$, the distributions of Alice's and Bob's outcomes
        before and after conditioning on $m'$ have components
        \beq
            \begin{aligned}
                p_{A|\rvec}(a)   & = p_A(a|x_\rvec), & p_{A|\rvec,m'}(a)   & = \frac{p_A(a|x_\rvec)p(m'|\rvec,a)}{p(m'|\rvec)}, \\
                p_{B_y|\rvec}(b) & = p_B(b|y),       & p_{B_y|\rvec,m'}(b) & = \sum_a p_{A|\rvec,m'}(a)W_{x_\rvec y}(b|a).
            \end{aligned} \label{eq:contraction-posteriors}
        \eeq
        If $p(\rvec,m') = 0$, set $p_{A|\rvec,m'} \coloneqq p_{A|\rvec}$ and $p_{B_y|\rvec,m'} \coloneqq p_{B_y|\rvec}$ instead (so that the following equation
        reads $0 \le 0$). 
        Here, the expression for $p_{A|\rvec,m'}$ follows from Bayes' rule, and
        $p_{B_y|\rvec}(b) = p_B(b|y)$ follows from non-signaling. Applying
        Eq.~\eqref{eq:relative-entropy-criterion} to these distributions gives
        \beq
            \DKL\left( \bm p_{B_y|\rvec,m'} \,\middle\|\, \bm p_{B_y|\rvec} \right) \leq \eta_{x_\rvec y} \DKL\left( \bm p_{A|\rvec,m'} \,\middle\|\, \bm p_{A|\rvec} \right). \label{eq:contraction-protocol-kl}
        \eeq
        Averaging over $m'$ at fixed $\rvec$, we recognize the conditional
        mutual informations
        \beq
            \begin{aligned}
                \sum_{m'}p(m'|\rvec) \DKL\left( \bm p_{B_y|\rvec,m'} \,\middle\|\, \bm p_{B_y|\rvec} \right) & = I(M':B_y \mid \Rvec = \rvec), \\
                \sum_{m'}p(m'|\rvec) \DKL\left( \bm p_{A|\rvec,m'} \,\middle\|\, \bm p_{A|\rvec} \right)     & = I(M':A \mid \Rvec = \rvec).
            \end{aligned} \label{eq:contraction-kl-averages}
        \eeq
        Summing over $y$ and using $\sum_{y} \eta_{x_\rvec y} \le 1$, we get
        \beq
            \begin{aligned}
                \sum_y I(M':B_y \mid \Rvec = \rvec) %
                \leq I(M':A \mid \Rvec = \rvec).
            \end{aligned} \label{eq:contraction-protocol-bound}
        \eeq
        Averaging over $\rvec$ therefore gives
        $\sum_y I(M':B_y \mid \Rvec) \leq I(M':A \mid \Rvec)$. Moreover,
        non-signaling implies $I(\Rvec:B_y) = 0$, so
        \beq
            \begin{aligned}
                I(M':B_y \mid \Rvec) & = I(\Rvec,M':B_y)                  \\
                                     & = I(M':B_y) + I(\Rvec:B_y \mid M') \\
                                     & \geq I(\Rvec:B_y \mid M').
            \end{aligned} \label{eq:contraction-nonsignaling-chain}
        \eeq
        Combining these bounds, adding $I(\Rvec:M')$, and using the chain rule
        yields
        \beq
            \begin{aligned}
                I(\Rvec:M') + \sum_y I(\Rvec:B_y \mid M') & \leq I(\Rvec:M') + I(M':A \mid \Rvec) \\
                                                          & = I(\Rvec,A:M').
            \end{aligned} \label{eq:contraction-conditional-information}
        \eeq
        This proves Eq.~\eqref{eq:middle-ic-inequality-operational}, the
        intermediate inequality in the statement of the theorem.

        To obtain IC, write $B_s^\pi$ for Bob's outcome when Bob's question
        is $s$, with measurement $y = f_Y(s,m')$. 
        Since $g_s = f_G(s,m',b_s^\pi)$, data processing gives
        \beq
            \sum_{s=0}^{n-1}I(R_s:G_s \mid \Rvec_{<s}) \leq \sum_{s=0}^{n-1}I(R_s:M',B_s^\pi \mid \Rvec_{<s}). \label{eq:contraction-decoding-data-processing}
        \eeq
        Fix a message value $m'$ of positive probability and partition Bob's $s$ indices according to his measurement:
        \beq
            \mathcal I_y(m') \coloneqq \{s \in [n] : f_Y(s,m') = y\}. \label{eq:contraction-decoding-groups}
        \eeq
        For $s \in \mathcal I_y(m')$, the pairs $(\Rvec,B_s^\pi)$ and
        $(\Rvec,B_y)$ have the same conditional distribution given $M' = m'$.
        Hence
        \beq
            \label{eq:contraction-rac-fixed-message}
            \begin{aligned}
                \sum_s H(R_s \mid \Rvec_{<s},M' = m',B_s^\pi) & = \sum_y \sum_{s \in \mathcal I_y(m')} H(R_s \mid \Rvec_{<s},M' = m',B_y)                                              \\
                                                              & = \sum_y \sum_{s \in \mathcal I_y(m')} \bigl[H(R_s \mid \Rvec_{<s},M' = m') - I(R_s:B_y \mid \Rvec_{<s},M' = m')\bigr] \\
                                                              & = H(\Rvec \mid M' = m') - \sum_y \sum_{s \in \mathcal I_y(m')} I(R_s:B_y \mid \Rvec_{<s},M' = m')                      \\
                                                              & \geq H(\Rvec \mid M' = m') - \sum_y I(\Rvec:B_y \mid M' = m').
            \end{aligned}
        \eeq
        The second equality expresses each entropy as an entropy without $B_y$
        minus a conditional mutual information. Non-signaling makes the marginal
        of $(\Rvec,M')$, and hence the first entropy, independent of $y$. In the
        subsequent equality, we used the conditional entropy chain rule and the
        fact that the sets $\mathcal I_y(m')$ form a partition of $[n]$. In the
        last line, for each $y$ we have used
        \beq
            \sum_{s \in \mathcal I_y(m')} I(R_s:B_y \mid \Rvec_{<s},M' = m')
            \leq \sum_{s=0}^{n-1}I(R_s:B_y \mid \Rvec_{<s},M' = m')
            = I(\Rvec:B_y \mid M' = m'). \label{eq:contraction-rac-chain-rule}
        \eeq
        The inequality adds nonnegative conditional mutual informations, and the
        equality is the chain rule. Averaging
        Eq.~\eqref{eq:contraction-rac-fixed-message} over $m'$ gives
        \beq
            \sum_s H(R_s \mid \Rvec_{<s},M',B_s^\pi)
            \geq H(\Rvec \mid M')
                 - \sum_y I(\Rvec:B_y \mid M'). \label{eq:contraction-rac-averaged}
        \eeq
        Consequently,
        \beq
            \begin{aligned}
                \sum_s I(R_s:M',B_s^\pi \mid \Rvec_{<s}) & = \sum_s H(R_s \mid \Rvec_{<s}) - \sum_s H(R_s \mid \Rvec_{<s},M',B_s^\pi) \\
                                                         & = H(\Rvec) - \sum_s H(R_s \mid \Rvec_{<s},M',B_s^\pi)                      \\
                                                         & \leq I(\Rvec:M') + \sum_y I(\Rvec:B_y \mid M')                             \\
                                                         & \leq I(\Rvec,A:M').
            \end{aligned} \label{eq:contraction-correlated-proof}
        \eeq
        Here we used the entropy chain rule,
        Eq.~\eqref{eq:contraction-rac-averaged}, and
        Eq.~\eqref{eq:contraction-conditional-information}. Together with
        Eq.~\eqref{eq:contraction-decoding-data-processing}, this proves the IC
        inequality for arbitrarily correlated entries. In the special case of
        independent entries, $I(R_s:G_s) \leq I(R_s:G_s \mid \Rvec_{<s})$, the
        standard IC inequality follows as well.

\subsection*{Proof of Proposition~\ref{prop:contraction-segment}}
 We divide the proof
    in two parts: first, we prove the exclusion of $P_\lambda$ from $Q_c$ for
    all $\lambda \in (0,1]$, then we calculate the contraction coefficients of
    the corresponding channels $W_{xy}(\lambda)$.

    \begin{proof}[Proof of postquantumness for Proposition~\ref{prop:contraction-segment}]
        Assume, for contradiction, that $P_\lambda$ has a commuting-operator
        quantum realization. Without loss of generality, let $|\psi\rangle$ be a
        unit vector and let $\{A_{a|x}\}_a$ and $\{B_{b|y}\}_b$ be binary
        projective measurements on a common Hilbert space, commuting across
        Alice and Bob, and
        $p(a,b|x,y) = \langle\psi|A_{a|x}B_{b|y}|\psi\rangle$. 
        Consider the four vectors
        \beq
            \label{eq:contraction-four-vectors}
            |u_0\rangle = A_{0|0}|\psi\rangle, \quad |u_1\rangle = A_{0|1}|\psi\rangle, \quad |v_0\rangle =  B_{1|0}|\psi\rangle, \quad |v_1\rangle = B_{1|1}|\psi\rangle.
        \eeq
From the probability table~\eqref{eq:contraction-segment-table}, we find that
        \begin{equation}
            \label{eq:contraction-four-marginals}
            p_A(0|0) = p_A(0|1)
                     = p_B(1|0)
                     = p_B(1|1)
                     = \frac{3 + \lambda}{12}
                     = q,
        \end{equation}
        which implies 
        \beq
            \label{eq:contraction-vector-norms}
            \langle u_0|u_0\rangle = \langle u_1|u_1\rangle
                                   = \langle v_0|v_0\rangle
                                   = \langle v_1|v_1\rangle
                                   = q.
        \eeq
        In addition, the four zero probabilities $p(0,1|x,y) = 0$, $x,y \in \{0,1\}$ give
        \beq
            \label{eq:contraction-cross-orthogonality}
            \langle u_x|v_y\rangle = \langle\psi|A_{0|x}B_{1|y}|\psi\rangle
                                   = p(0,1|x,y)
                                   = 0.
        \eeq
        Furthermore, since $A_{a|x}$ and $B_{b|y}$ are commuting projections,
        their product is itself a projection. Hence
        $p(a,b|x,y) = \langle\psi|A_{a|x}B_{b|y}|\psi\rangle = \left\| A_{a|x}B_{b|y}|\psi\rangle \right\|^2$.
        Therefore,
        \beq
            p(a,b|x,y)
            = 0 \quad
            \Longrightarrow \quad A_{a|x}B_{b|y}|\psi\rangle = 0. \label{eq:contraction-zero-implication}
        \eeq
        We thus find that
        $p(0,0|0,2) = 0$ and $p(0,1|1,2) = 0$ give $A_{0|0}B_{0|2}|\psi\rangle = 0$ and
        $A_{0|1} B_{1|2}|\psi\rangle = 0$. Using commutativity, these mean
        $B_{0|2}|u_0\rangle = 0$ and $B_{1|2}|u_1\rangle = 0$ respectively.
        Hence
        \beq
            \label{eq:contraction-alice-orthogonality}
            \langle u_0|u_1\rangle = \langle u_0|(B_{0|2} + B_{1|2})|u_1\rangle
                                   = 0.
        \eeq
        Similarly, the zeros $p(1,1|2,0) = 0$ and $p(0,1|2,1) = 0$ imply
        $A_{1|2}B_{1|0}|\psi\rangle = 0$ and
        $A_{0|2}B_{1|1}|\psi\rangle = 0$. This in turn implies that
        $A_{1|2}|v_0\rangle = 0$ and $A_{0|2}|v_1\rangle = 0$, and hence
        \beq
            \label{eq:contraction-bob-orthogonality}
            \langle v_0|v_1\rangle = \langle v_0|(A_{0|2} + A_{1|2})|v_1\rangle
                                   = 0.
        \eeq
        Thus the four vectors in Eq.~\eqref{eq:contraction-four-vectors} are
        pairwise orthogonal. Moreover, since each vector is obtained by applying a projection to
        $|\psi\rangle$, $\langle\psi|u_x\rangle = \langle u_x|u_x\rangle$ and
        $\langle\psi|v_y\rangle = \langle v_y|v_y\rangle$. Together with~\eqref{eq:contraction-vector-norms} this implies
        \beq
            \label{eq:contraction-norm-contradiction}
            \begin{split}
                0 & \leq \left\| |\psi\rangle - |u_0\rangle - |u_1\rangle - |v_0\rangle - |v_1\rangle \right\|^2            \\
                  & = 1 - \langle u_0|u_0\rangle - \langle u_1|u_1\rangle - \langle v_0|v_0\rangle - \langle v_1|v_1\rangle \\
                  & = 1 - 4q                                                                                                \\
                  & = -\frac{\lambda}{3}.
            \end{split}
        \eeq
        This is impossible for
        $\lambda > 0$, and therefore $P_\lambda$ has no commuting-operator
        quantum realization.
        \end{proof}

For the proof of the contraction bound we will rely on the following channels.
    The binary $Z$-channel is a standard example of a discrete memoryless
    channel~\cite{polyanskiy:2025}. For $\delta \in [0,1]$, its transition
    matrix is
    \beq
        Z_\delta = \begin{pmatrix}
                       1 & \delta     \\
                       0 & 1 - \delta
                   \end{pmatrix},
    \eeq
    with rows indexed by the output $b$ and columns by the input $a$. It
    transmits the input $0$ without error and maps $1$ to $0$ with probability
    $\delta$. Thus, if $p(A = 0) = q$, then $p(B=0)= q + \delta(1-q)$ and $p(B = 1) = (1 - \delta)(1 - q)$.
    The fixed-input contraction coefficient of $Z_\delta$ is known from~\cite[Eq.~(17)]{anantharam:2013b}, as stated in the following lemma.

    \begin{lemma}[Binary $Z$-channel contraction~\cite{anantharam:2013b}]
        \label{lem:contraction-z-channel}
        Let $q,\delta\in(0,1)$ and $\mathbf q=(q,1-q)^T$ be a fixed binary probability distribution. For every binary probability distribution $\bm w$,
        \beq
        \DKL(Z_\delta\bm w \,\|\, Z_\delta\mathbf q) \leq \eta_Z(\delta,q)\DKL(\bm w \,\|\, \mathbf q),
        \eeq
        where
        \beq
            \eta_Z(\delta,q) = \frac{\log\!\bigl(q + \delta(1 - q)\bigr)}
                       {\log q}. \label{eq:contraction-z-factor}
        \eeq
    \end{lemma}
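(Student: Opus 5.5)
Since the reference distribution is binary, I will reduce the claim to one scalar inequality and prove it by a sign analysis of the second derivative. The candidate value of $\eta_Z(\delta,q)$ is exactly the ratio of divergences at the extreme input $\bm w=(1,0)^T$. For that input, $Z_\delta\bm w=(1,0)^T$, so
$$\DKL(Z_\delta\bm w\|Z_\delta\mathbf q)=-\log(q+\delta(1-q)),\qquad \DKL(\bm w\|\mathbf q)=-\log q .$$
So the lemma says that the ratio of output to input divergence is maximised at this endpoint.

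\textbf{Reparametrisation.} Write $c\coloneqq 1-\delta\in(0,1)$ and $r\coloneqq 1-q\in(0,1)$. Parametrise $\bm w$ by $s\coloneqq w(1)\in[0,1]$. Then $Z_\delta\bm w$ has $P(B=1)=cs$, and $Z_\delta\mathbf q$ has $P(B=1)=cr$. With $d(\cdot\|\cdot)$ the binary divergence, the claim becomes $g(s)\ge 0$ on $[0,1]$, where
$$g(s)\coloneqq \eta\, d(s\|r)-d(cs\|cr),\qquad \eta=\frac{\log(1-cr)}{\log(1-r)}\in(0,1].$$
Three boundary facts are immediate:
\begin{itemize}
\item $g(0)=0$, by the choice of $\eta$;
\item $g(r)=0$;
\item $g'(r)=0$, because both divergences are minimised at $s=r$.
\end{itemize}

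\textbf{Second derivative.} Using $\partial_s^2 d(s\|r)=1/(s(1-s))$ and the chain rule for the scaled term, I get
$$g''(s)=\frac{\eta}{s(1-s)}-\frac{c}{s(1-cs)}=\frac{\eta(1-cs)-c(1-s)}{s(1-s)(1-cs)}.$$
The numerator $L(s)=(\eta-c)+c(1-\eta)s$ is affine and nondecreasing, since $\eta\le 1$. Hence $g''$ changes sign at most once, from negative to positive, at some point $s^*$; possibly $g$ is convex on all of $(0,1)$.

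\textbf{Locating the sign change and concluding.} I first show $s^*\le r$ by contradiction. If $s^*>r$, then $g$ is concave on $[0,r]$. Then $g'$ is nonincreasing there, and $g'(r)=0$ gives $g'\ge 0$ on $[0,r]$. Combined with $g(0)=g(r)=0$, this forces $g\equiv 0$ on $[0,r]$, hence $g''\equiv 0$ there. That contradicts $L$ being a nonconstant affine function (if $\eta<1$), and the case $\eta=1$ is treated directly. So $s^*\le r$, and the conclusion splits into two ranges:
\begin{itemize}
\item On $[s^*,1]$, $g$ is convex with a critical point at $r$ where $g(r)=0$, so $g\ge 0$ there.
\item On $[0,s^*]$, $g$ is concave with $g(0)=0$ and $g(s^*)\ge 0$, so $g\ge 0$ there as well.
\end{itemize}
This gives the inequality for every binary $\bm w$ with finite divergence. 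The case $\DKL(\bm w\|\mathbf q)=\infty$ cannot occur, since $q\in(0,1)$.

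\textbf{Main obstacle.} I expect the delicate step to be this convexity bookkeeping: checking $\eta\le 1$, i.e. $\log(1-cr)\ge\log(1-r)$, so that $L$ is monotone in the right direction, and handling the degenerate cases where $L$ is constant or $s^*$ falls outside $(0,1)$. As a sanity check, the same argument shows the supremum in Eq.~\eqref{eq:bg_kl_contraction} is attained at $s=0$, so $\eta_Z(\delta,q)=\eta(Z_\delta,\mathbf q)$ exactly, consistent with~\cite{anantharam:2013b}.
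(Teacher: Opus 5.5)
Your proof is correct, and it takes a different route from the paper. The paper gives no proof of this lemma and simply imports the formula from Ref.~\cite{anantharam:2013b}.

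You instead reduce the claim to the scalar function $g(s)=\eta\,d(s\|r)-d(cs\|cr)$. The key fact is $g''(s)=L(s)/[s(1-s)(1-cs)]$ with $L$ affine of slope $c(1-\eta)>0$, so $g$ is concave-then-convex. Together with $g(0)=g(r)=g'(r)=0$ this forces $g\ge 0$. Your second derivatives and boundary values are all correct.

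Three small tidy-ups:
\begin{itemize}
\item Since $\delta>0$ gives $cr<r$, and $\delta<1$ gives $cr>0$, you always have $0<\eta<1$ strictly. So the case $\eta=1$ (and a constant $L$) never arises under the hypotheses, and that aside can be dropped.
\item Because $g''$ is singular at $s=0$ and $s=1$, you should state explicitly that $g$ is continuous on $[0,1]$. This holds because both reference points $r$ and $cr$ are interior, and it is what justifies passing from concavity/convexity on the open subintervals to the closed ones used in your endpoint arguments.
\item The contradiction step can be shortened. If $s^*>r$, strict concavity on $[0,r]$ with $g'(r)=0$ gives $g(0)<g(r)=0$ at once.
\end{itemize}

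Compared with the citation, your route is self-contained and uses only one-variable calculus. It also yields tightness: equality at $\bm w=(1,0)^T$ shows that $\eta_Z(\delta,q)$ is exactly the fixed-input coefficient $\eta(Z_\delta,\mathbf q)$. Consequently the threshold $\lambda\le 21-12\sqrt3$ in Proposition~\ref{prop:contraction-segment} cannot be improved within Theorem~\ref{thm:contraction-criterion} for this family. The paper's citation is shorter and places the formula within the broader hypercontractivity and strong data-processing analysis of binary channels in Ref.~\cite{anantharam:2013b}.
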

    In particular, $\eta_Z(\delta,q) = \log p_B(0)/\log p_A(0)$. Therefore, for
    $Z$-channels, this coefficient has an alternative interpretation as the
    ratio of the output and input log-probabilities of the symbol $0$.

    \begin{proof}[Proof of the contraction bound for Proposition~\ref{prop:contraction-segment}]
        Set $q = (3 + \lambda) / 12$, with $0 \leq \lambda \leq 1$. Alice's
        marginals of $P_\lambda$ are
        \beq
            p_A(0|0) = p_A(0|1)
                     = q, \qquad p_A(0|2)
                     = \frac12. \label{eq:contraction-alice-marginals}
        \eeq
        For fixed $(x,y)$, we compute $W_{xy}(b|a) = p(a,b|x,y)/p_A(a|x)$ for
        $P_\lambda$, cf.~Eq.~\eqref{eq:contraction-segment-table} in the main
        text. With rows indexing $b$ and columns indexing $a$, we get
        \beq
            W_{xy} = \begin{pmatrix}
                         1 & \frac{1 - 2q}{1 - q} \\[0.8ex]
                         0 & \frac{q}{1 - q}
                     \end{pmatrix}, \qquad x,y
                   \in \{0,1\}. \label{eq:contraction-channels-01}
        \eeq
        The remaining five channels are
        \beq
            W_{02} = \begin{pmatrix}
                         0 & \frac{1}{2(1 - q)}      \\[0.8ex]
                         1 & \frac{1 - 2q}{2(1 - q)}
                     \end{pmatrix}, \quad W_{12}
                   = \begin{pmatrix}
                         1 & \frac{1 - 2q}{2(1 - q)} \\[0.8ex]
                         0 & \frac{1}{2(1 - q)}
                     \end{pmatrix}, \quad W_{20}
                   = \begin{pmatrix}
                         1 - 2q & 1 \\
                         2q     & 0
                     \end{pmatrix}, \quad W_{21}
                   = \begin{pmatrix}
                         1 & 1 - 2q \\
                         0 & 2q
                     \end{pmatrix}, \quad W_{22}
                   = \begin{pmatrix}
                         \frac12 & \frac12 \\
                         \frac12 & \frac12
                     \end{pmatrix}. \label{eq:contraction-channel-matrices}
        \eeq
        We write $\eta_{xy} = \eta(W_{xy},\bm p_{A|x})$ where $\bm p_{A|x}$ is
        the reference input fixed from the behavior's marginal. We now show that
        these nine matrices are, up to relabelings of the binary outcome
        symbols, $Z$-channels or a constant channel.
        Note that the contraction coefficient is unchanged by relabeling input or output
        symbols, with the reference distribution relabeled along with the
        inputs.

        For $x,y \in \{0,1\}$, $W_{xy}$ is already a $Z$-channel with $p_A(0|x) = q$ and $p_B(0|y) = 1 - q$.
        Lemma~\ref{lem:contraction-z-channel} gives
        \beq
            \eta_{xy} = \frac{\log_2 p_B(0|y)}{\log_2 p_A(0|x)}
                      = \frac{\log_2(1 - q)}{\log_2 q}
                      \eqqcolon \alpha. \label{eq:contraction-alpha-factor}
        \eeq

        For $x = 0,1$ and $y = 2$, swapping Bob's outcome labels $0$ and $1$
        exchanges the rows of $W_{02}$ and gives the $Z$-channel $W_{12}$. Bob's
        uniform marginal is unchanged by this relabeling. For both channels in
        this form, the input reference distribution assigns probability $q$ to
        $a = 0$, and the induced output distribution assigns probability $1/2$
        to $b = 0$. Applying Lemma~\ref{lem:contraction-z-channel} gives
        \beq
            \eta_{02} = \eta_{12}
                      = \frac{\log_2(1/2)}{\log_2 q}
                      \eqqcolon \beta. \label{eq:contraction-beta-factor}
        \eeq

        For $x = 2$ and $y = 0,1$, swapping Alice's outcome labels $0$ and $1$
        exchanges the columns of $W_{20}$ and gives the $Z$-channel $W_{21}$.
        Alice's uniform reference distribution is unchanged by this relabeling.
        For both channels in this form, the input reference distribution assigns
        probability $1/2$ to $a = 0$, and the induced output distribution
        assigns probability $1 - q$ to $b = 0$. Applying
        Lemma~\ref{lem:contraction-z-channel} gives
        \beq
            \eta_{20} = \eta_{21}
                      = \frac{\log_2(1 - q)}{\log_2(1/2)}
                      \eqqcolon \gamma. \label{eq:contraction-gamma-factor}
        \eeq

        Finally, $W_{22}$ is constant: Bob's distribution is independent of
        Alice's outcome. Any reweighting of Alice's outcomes therefore leaves
        Bob's distribution unchanged, and we may take $\eta_{22} = 0$.
        Collecting the four cases, an admissible contraction-factor matrix is
        \beq
            \label{eq:contraction-coefficient-matrix}
            (\eta_{xy}) = \begin{pmatrix}
                              \alpha & \alpha & \beta \\
                              \alpha & \alpha & \beta \\
                              \gamma & \gamma & 0
                          \end{pmatrix},
        \eeq
        for the $\alpha$, $\beta$ and $\gamma$ in
        Eqs.~\eqref{eq:contraction-alpha-factor},
        \eqref{eq:contraction-beta-factor} and
        \eqref{eq:contraction-gamma-factor}. We can now verify the row condition
        $\sum_y \eta_{xy} \le 1$ of Theorem~\ref{thm:contraction-criterion}. For
        the first two rows,
        \beq
            \label{eq:contraction-first-row-condition}
            \begin{aligned}
                2\alpha + \beta \leq 1 & \iff 2\frac{\log_2(1 - q)}{\log_2 q} + \frac{\log_2(1/2)}{\log_2 q} \leq 1 \\
                                       & \iff 2\log_2(1 - q) + \log_2(1/2) \geq \log_2 q                            \\
                                       & \iff (1 - q)^2 \geq 2q                                                     \\
                                       & \iff q \leq 2 - \sqrt3.
            \end{aligned}
        \eeq
        The second equivalence reverses the inequality upon multiplication by
        $\log_2 q < 0$, and the last uses $q \in [1/4,1/3]$. For the third row,
        \beq
            \label{eq:contraction-third-row-condition}
            2\gamma \leq 1 \iff q \leq 1 - \frac1{\sqrt2}.
        \eeq
        The first of the two conditions is the stronger one. Recalling that
        $q = (3 + \lambda) / 12$, it is equivalent to
        \beq
            \label{eq:contraction-lambda-threshold}
            \lambda \leq 21 - 12\sqrt3 \approx 0.2154.
        \eeq
        Hence, for every $0 \leq \lambda \leq 21 - 12\sqrt3$, the hypotheses of
        Theorem~\ref{thm:contraction-criterion} are satisfied. In particular,
        this includes $\lambda = 1/5$ (corresponding to $q = 4/15$), and
        therefore $P^* = P_{1 / 5}$ satisfies information causality for every
        length $n$ and every single-copy protocol.
    \end{proof}

    \section{Multi-copy extension of the contractivity argument}
    \label{app:contraction-multi-copy}

    In this Appendix we prove Lemma~\ref{lem:mc-information-bound}
    and Theorem~\ref{thm:contraction-mixed}. We use the notation introduced in Section~\ref{sec:insufficiency-general}.

    \subsection*{Proof of Lemma~\ref{lem:mc-information-bound}}

    Fix $n,c$ and for simplicity consider a deterministic protocol $\pi_c$ in which Alice 
    uses the copies in the order $1,\ldots,c$. Then, $\ell_i = i$ and
    $x_i = f_{X_i}(\rvec,\mathbf a_{<i})$, where
    $\mathbf a_{<i} = (a_1,\ldots,a_{i-1})$. The more general case will be commented later.
        
    Start from $q_c(\rvec,\mathbf a,m') := p_{\pi_c}^{(P,\ldots, P)}(\rvec,\mathbf a,m')$, 
    which is a marginal of the distribution given in Eq.~\eqref{eq:mc-mixed-protocol} for the specific boxes and protocol considered here.
    To construct from this a distribution $Q(\rvec,m',\bvec_{\mathrm{tot}})$ let us define
    \beq
            q_{c-1}(\bvec_c|\rvec,\mathbf a_{<c},m') \coloneqq \prod_y\left[ \sum_{a_c}W_{x_c y}(b_{c|y}|a_c) q_c(a_c|\rvec,\mathbf a_{<c},m') \right], \label{eq:mc-first-replacement}
    \eeq
        which corresponds to the construction of $(b_{c|1}, \ldots, b_{c|m})$ from $a_c$, yielding the distribution
        \beq
            q_{c-1}(\rvec,\mathbf a_{<c},m',\bvec_c) = q_c(\rvec,\mathbf a_{<c},m') \prod_y\left[ \sum_{a_c}W_{x_c y}(b_{c|y}|a_c) q_c(a_c|\rvec,\mathbf a_{<c},m') \right].
        \eeq
        Notice that by construction the $B_{c|y}$ for different $y$ are
        conditionally independent given $(\Rvec,\mathbf A_{<c},M')$ in $Q_{c-1}$. Now, starting from $Q_{c-1}(\Rvec,\mathbf A_{<c},m',\Bvec_{\geq c})$, we recursively construct distributions $Q_{i-1}(\Rvec,\mathbf A_{<i},m',\Bvec_{\geq i})$ until we reach $Q:=Q_0$. In particular, in step $i$, this gives
        \beq
           q_{i-1}(\rvec,\mathbf a_{<i},m',\bvec_{\geq i}) \coloneqq q_i(\rvec,\mathbf a_{<i},m',\bvec_{>i}) \prod_y\left[ \sum_{a_i}W_{x_i y}(b_{i|y}|a_i) q_i(a_i|\rvec,\mathbf a_{<i},m',\bvec_{>i}) \right]. \label{eq:mc-replacement}
        \eeq
        Notice that by construction the $B_{i|y}$ for different $y$ are again
        conditionally independent given $(\Rvec,\mathbf A_{<i},M', \Bvec_>i)$ in $Q_{i-1}$.
        Furthermore, the
        marginal of $Q_i$ on $(\Rvec,\mathbf A_{\leq i},M')$ for any $i$ remains $q_i(\rvec, a_{\leq i}, m') = p_{\pi_c}^{(P,\ldots, P)}(\rvec, a_{\leq i}, m')$. 
        Therefore, marginalising over $m'$ and $\bvec_{>i}$ 
         gives
        \beq
            \begin{aligned}
                q_{i-1}(B_{i|y} = b|\rvec,\mathbf a_{<i}) & = \sum_{m',\bvec_{>i}} \,
                \sum_{b_{i|y'}:y'\neq y}
                \, q_{i-1}(m',\bvec_{\geq i}|\rvec,\mathbf a_{<i})                                     \\
                                                          & = \sum_{m',\bvec_{>i}} q_i(m',\bvec_{>i}|\rvec,\mathbf a_{<i}) \sum_{a_i} W_{x_i y}(b|a_i) q_i(a_i|\rvec,\mathbf a_{<i},m',\bvec_{>i}) \\
                                                          & = \sum_{a_i} W_{x_i y}(b|a_i) q_i(a_i|\rvec,\mathbf a_{<i})                                                                            \\
                                                          & = \sum_{a_i}W_{x_i y}(b|a_i)p_A(a_i|x_i) = p_B(b|y),
            \end{aligned} \label{eq:mc-bob-marginal}
        \eeq
        which is in particular independent of $\mathbf r, \mathbf a_{<i}$.

        We now estimate the entropy gained in each recursive step. For this, notice that summing Eq.~\eqref{eq:relative-entropy-criterion} over $y$ and using
    Eq.~\eqref{eq:contraction-row-sums} gives
    \beq
        \sum_y \DKL\left( W_{xy}\bm w \,\middle\|\,\bm p_{B|y} \right) \leq \DKL\left( \bm w \,\middle\|\,\bm p_{A|x} \right) \quad\text{for every }x\text{ and every }\bm w, \label{eq:mc-one-copy}
    \eeq
    where $\bm p_{B|y} = W_{xy}\bm p_{A|x}$.    
        Apply
        Eq.~\eqref{eq:mc-one-copy} at $x_i = f_{X_i}(\rvec,\mathbf a_{<i})$ with
        $w(a_i) = q_i(a_i|\rvec,\mathbf a_{<i},m',\bvec_{>i})$, and average over
        $(\rvec,\mathbf a_{<i},m',\bvec_{>i})$. The averaged right-hand side is 
        \beq
            \begin{aligned}
                \sum_{\rvec,\mathbf a_{<i},m',\bvec_{>i}} q_i(\rvec,\mathbf a_{<i},m',\bvec_{>i}) \DKL\left( \bm w\,\middle\|\,\bm p_{A|x_i} \right) 
                &= H(A_i|\Rvec,\mathbf A_{<i})
                  - H_{q_i}(A_i|\Rvec,\mathbf A_{<i},M',\Bvec_{>i}),
            \end{aligned} \label{eq:mc-averaged-input-divergence}
        \eeq
        since $q_i(a_i|\rvec,\mathbf a_{<i}) = p_A(a_i|x_i)$. By
        Eq.~\eqref{eq:mc-bob-marginal}, the averaged left-hand side equals 
        \beq
            \begin{aligned}
               \sum_y \sum_{\rvec,\mathbf a_{<i},m',\bvec_{>i}} q_i(\rvec,\mathbf a_{<i},m',\bvec_{>i}) \DKL\left( W_{x_iy}\bm w\,\middle\|\,\bm p_{B|y} \right) 
                &=h - \sum_y H_{q_{i-1}}(B_{i|y}|\Rvec,\mathbf A_{<i},M',\Bvec_{>i})\\
                &= h - H_{q_{i-1}}(\Bvec_i|\Rvec,\mathbf A_{<i},M',\Bvec_{>i}),
            \end{aligned} \label{eq:mc-averaged-output-divergence}
        \eeq
        where $h
        \coloneqq \sum_y H(B_y)
        = \sum_y H\bigl(p_B(\cdot|y)\bigr)$ and we use that $Q_i$ and $Q_{i-1}$ have the same marginal on $(\Rvec,\mathbf A_{<i},M',\Bvec_{>i}$). The last equality uses the conditional independence in
        Eq.~\eqref{eq:mc-replacement}. Since $Q_i$ and $Q_{i-1}$ have the same
        marginal on $(\Rvec,\mathbf A_{<i},M',\Bvec_{>i})$, subtracting its
        entropy from both sides of the averaged inequality and using the chain
        rule further gives
        \beq
            h - H_{q_{i-1}}(\Rvec,\mathbf A_{<i},M',\Bvec_i,\Bvec_{>i})
            \leq H(A_i|\Rvec,\mathbf A_{<i})
                 - H_{q_i}(\Rvec,\mathbf A_{\leq i},M',\Bvec_{>i}). \label{eq:mc-entropy-step}
        \eeq
        Rearranging and summing over $i = 1,\ldots,c$ yields
        \beq
            \sum_{i=1}^{c}\left[ H_{q_{i-1}}(\Rvec,\mathbf A_{<i},M',\Bvec_{>i-1}) - H_{q_i}(\Rvec,\mathbf A_{\leq i},M',\Bvec_{>i}) \right]
            \geq ch
                 - \sum_{i=1}^{c}H(A_i|\Rvec,\mathbf A_{<i}). \label{eq:mc-telescoping-sum}
        \eeq
        Notice that in the sum on the left all terms except for the
        entropy of $q_0$ and of $q_c$ cancel, which equal
        $H_Q(\Rvec,M',\Bvec_{\mathrm{tot}})$ and $H(\Rvec,\mathbf A,M')$
        respectively. Therefore,
        \beq
            \begin{aligned}
                H_Q(\Rvec,M',\Bvec_{\mathrm{tot}}) & \geq H(\Rvec,\mathbf A,M') + ch - \sum_{i=1}^{c}H(A_i|\Rvec,\mathbf A_{<i}) \\
                                                   & = H(\Rvec,\mathbf A,M') + ch - H(\mathbf A|\Rvec)                           \\
                                                   & = H(\Rvec) + H(M'|\Rvec,\mathbf A) + ch,
            \end{aligned} \label{eq:mc-entropy-lower-bound}
        \eeq
        where the equalities follow from the usual chain rules. Subadditivity of the Shannon entropy furthermore 
        gives
        \beq
            H_Q(M',\Bvec_{\mathrm{tot}})
            \leq H(M') + \sum_{i=1}^{c}\sum_y H_Q(B_{i|y})
            = H(M') + ch, \label{eq:mc-entropy-upper-bound}
        \eeq
        where the last equality uses Eq.~\eqref{eq:mc-bob-marginal}.
        Substituting these bounds into the entropy expression for the mutual
        information gives
        \beq
            \begin{aligned}
                I_Q(\Rvec:M',\Bvec_{\mathrm{tot}}) & = H(\Rvec) + H_Q(M',\Bvec_{\mathrm{tot}}) - H_Q(\Rvec,M',\Bvec_{\mathrm{tot}}) \\
                                                   & \leq H(\Rvec) + H(M') + ch - \bigl[H(\Rvec) + H(M'|\Rvec,\mathbf A) + ch\bigr] \\
                                                   & = I(\Rvec,\mathbf A:M').
            \end{aligned} \label{eq:mc-information-bound}
        \eeq

        It remains to construct the functions $F_s$. For a fixed input $y_i$,
        summing Eq.~\eqref{eq:mc-replacement} over the other coordinates of
        $\Bvec_i$ gives
        \beq
            \begin{aligned}
                q_{i-1}(\rvec,\mathbf a_{<i},m',\bvec_{>i},B_{i|y_i} = b_i) & = q_i(\rvec,\mathbf a_{<i},m',\bvec_{>i}) \sum_{a_i}W_{x_i y_i}(b_i|a_i) q_i(a_i|\rvec,\mathbf a_{<i},m',\bvec_{>i}) \\
                                                                            & = \sum_{a_i}W_{x_i y_i}(b_i|a_i) q_i(\rvec,\mathbf a_{\leq i},m',\bvec_{>i}).
            \end{aligned} \label{eq:mc-selected-coordinate}
        \eeq
        In the first equality, we used normalization of every factor with
        $y \ne y_i$ in Eq.~\eqref{eq:mc-replacement}, after summing over its
        outcome $b_{i|y}$, 
        in the second, we used the definition of conditional
        probability. Applying this identity successively for $i = c,\ldots,1$
        gives, for every fixed $\mathbf y = (y_1,\ldots,y_c)$,
        \beq
            Q(\rvec,m',B_{1|y_1} = b_1,\ldots,B_{c|y_c} = b_c) = \sum_{\mathbf a} p_{\pi_c}^{(P,\ldots, P)}(\rvec,\mathbf a,m') \prod_{i=1}^{c}W_{x_i y_i}(b_i|a_i).\label{eq:mc-outcome-distrib}
        \eeq
        The right-hand side is the original joint distribution for Bob, when his inputs
        equal $\mathbf y$. 
    This shows that the distribution $Q$ contains Bob's outcome probabilities for any $\mathbf y$ he might choose simultaneously, as long as each box is probed only once.

    Any realization $\bvec_{\rm tot}$ of the auxiliary variables $\Bvec_{\rm tot}$ contains one entry $b_{j|y}$ for each copy $j$ and input $y$. For fixed $s,m'$, let us consider Bob's functions $f_{J_k}, f_{Y_k}$:
        whenever they specify copy $j$ and input $y$, we look at the value
        $b_{j|y}$ and use it as his outcome to determine the subsequent
        choices of $f_{J_k}, f_{Y_k}$, and we finally apply the
        guessing function $f_G$.  This defines
        \beq
            \widetilde g_s \coloneqq F_s(m',\bvec_{\mathrm{tot}}). \label{eq:mc-auxiliary-function}
        \eeq
        To verify that $F_s$ has the property stated in the lemma, fix $s,m'$
        and a list $\bvec$ of Bob's $c$ outcomes. The functions
        $\{f_{J_k}\},\{f_{Y_k}\}$ determine which inputs $\mathbf y$ he would
        use if these outcomes occurred. Substitute these inputs into
        Eq.~\eqref{eq:mc-outcome-distrib}. Its left-hand side is the probability
        that the auxiliary lookup table supplies these outcomes, jointly with
        $(\rvec,m')$; its right-hand side is the corresponding probability in
        the original protocol. Thus, $F_s$ has the desired property.

    The case where Alice chooses adaptively which copy to use next using
    \beq
            \ell_i = f_{L_i}(\rvec,a_{\ell_1}, \dots, a_{\ell_{i-1}}), \quad x_{\ell_i}
                   = f_{X_i}(\rvec, a_{\ell_1}, \dots, a_{\ell_{i-1}}),
    \eeq
    is similar, but the substitutions happen in the ``reverse chronological order''
    $(\widetilde{A}_{c}, \ldots,\widetilde{A}_{1}) \coloneqq (A_{\ell_c}, \dots, A_{\ell_1})$. More precisely, fix a value of
    $(\rvec, \mathbf a, m')$, and calculate
    $\ell_1 = f_{L_1}(\rvec)$, $\widetilde a_1 \coloneqq a_{\ell_1}$,
    and then recursively
    \beq
    \begin{aligned}
        \ell_i = f_{L_i}(\rvec,\widetilde a_1, \dots, \widetilde a_{i-1}), \qquad \widetilde a_i \coloneqq a_{\ell_i}
    \end{aligned}
    \eeq
    for all $i$. This lets us write the starting joint distribution as
    $q_c(\rvec, \widetilde{\mathbf a},m') = p_{\pi_c}(\rvec,\widetilde{\mathbf a},m')$, with Alice's outcomes in chronological
    order. Starting from this distribution, we follow the exact same construction as the
    lemma, with $\widetilde{\mathbf A}$ instead of $\mathbf A$ (we replace 
    $\widetilde A_c$ by $\mathbf B_{\ell_c}$, then $\widetilde A_{c-1}$ by $\mathbf B_{\ell_{c-1}}$ etc.). The entropy proofs still go
    through identically with $\widetilde A_i$ in place of $A_i$ everywhere,
    and in the final bounds, one has
    $I(\mathbf R , \widetilde{\mathbf A} : M')$ which is equal to
    $I(\mathbf R , \mathbf A : M')$. 
    Notice that the distribution of $\mathbf A $ and $\widetilde{\mathbf A} $ and thus their entropy may differ, however this reordering is deterministic (conditioned on $\mathbf r$) and invertible, and thus does not change their joint entropy with $\Rvec$.
    The construction and verification of $F_s$ extend similarly to the previous case: whenever
    Bob chooses copy $j$ and input $y$, one can use $b_{j|y}$ as his outcome. At the step replacing $\widetilde{A}_i$,
    the variables $(\rvec,\widetilde a_1, \dots, \widetilde a_{i-1})$ fix $\ell_i$ and $x_{\ell_i}$, 
    so the same calculation as Eq.~\eqref{eq:mc-selected-coordinate} applies. Repeating this gives an analogue of Eq.~\eqref{eq:mc-outcome-distrib} for every fixed $\mathbf y$, and the verification of $F_s$ proceeds as the previous case.
    
    \subsection*{
        \texorpdfstring{Protocols with $P^*$ and local behaviors}
        {Protocols with P* and local behaviors}
    }

    Theorem~\ref{thm:contraction-multi-copy} establishes IC for arbitrary copies
    of $P^*$. We now extend this to the case of $c$ possibly different boxes,
    each taken from $\mathcal S = \conv(\LHV \cup \{P^*\})$, by showing that
    mixing one behavior while keeping the others fixed cannot ``activate'' a
    violation of IC (Lemma~\ref{lem:mc-mixing}), together with the fact that
    local deterministic boxes can be simulated locally
    (Lemma~\ref{lem:mc-deterministic-boxes}).

    For the general protocol $\pi$ in Eq.~\eqref{eq:mc-mixed-protocol}, define
    the IC deficit
    \beq
        \Delta(\mathbf P,\pi) = \sum_{s=0}^{n-1}I(R_s:G_s \mid \Rvec_{<s})
                                - I(\Rvec,\mathbf A:M'). \label{eq:mc-deficit}
    \eeq
    We say that the set of boxes $\mathbf P$ satisfies IC if
    $\Delta(\mathbf P,\pi) \leq 0$ for every database length, every database
    distribution, and every deterministic protocol.

    \begin{lemma}
        \label{lem:mc-mixing}
        Let $\mathbf P'$ and $\mathbf P''$ differ only in one system, at position
        $i$:
        \beq
            \begin{aligned}
                \mathbf P'  & = P_1 \otimes \cdots \otimes P' \otimes \cdots \otimes P_c,  \\
                \mathbf P'' & = P_1 \otimes \cdots \otimes P'' \otimes \cdots \otimes P_c.
            \end{aligned} \label{eq:mc-mixing-tuples}
        \eeq
        If both $\mathbf P'$ and $\mathbf P''$ satisfy IC, so does
        \beq
            \mathbf P = P_1
                        \otimes \cdots
                        \otimes [\lambda P' + (1 - \lambda)P'']
                        \otimes \cdots
                        \otimes P_c, \quad 0
                      \leq \lambda
                      \leq 1. \label{eq:mc-mixed-tuple}
        \eeq
    \end{lemma}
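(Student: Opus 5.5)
We prove the lemma by contraposition: we will show that for every fixed deterministic protocol $\pi$, the deficit $\Delta(\mathbf P,\pi)$ is a convex function of the behaviour placed at position $i$, with all other boxes held fixed. Then
\[
\Delta(\mathbf P,\pi)\le \lambda\,\Delta(\mathbf P',\pi)+(1-\lambda)\,\Delta(\mathbf P'',\pi)\le 0,
\]
because both tuples satisfy IC under the same protocol. It suffices to treat deterministic protocols, by the reduction of Section~\ref{sec:deterministic-reduction}.

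\textbf{Step 1 (affinity of the joint distribution).} From Eq.~\eqref{eq:mc-mixed-protocol}, the full joint distribution $p_\pi(\rvec,s,\mathbf a,m',\bvec,g)$ is a product in which the $i$-th box enters through exactly one factor $p_i(a_i,b_i|x_i,y_i)$. All inputs are deterministic functions of other variables, so adaptivity does not introduce further dependence. Hence $p_\pi$ is affine in $P_i$. Concretely, mixing $P_i$ corresponds to introducing a hidden classical flag $\Lambda\in\{',''\}$ with $p(\Lambda={}')=\lambda$, independent of $\Rvec,S$. The flag selects which box is used, and the mixed distribution is its marginal.

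\textbf{Step 2 (splitting the deficit).} We aim to show $\sum_s I(R_s:G_s|\Rvec_{<s})$ is convex in this mixture and $I(\Rvec,\mathbf A:M')$ is concave, or at least that their difference is. Writing the flagged distribution, we have
\[
I(R_s:G_s|\Rvec_{<s})\le I(R_s:G_s|\Rvec_{<s},\Lambda)=\lambda I'(\cdot)+(1-\lambda)I''(\cdot).
\]
This holds because $\Lambda$ is independent of $(R_s,\Rvec_{<s})$, so $I(R_s:G_s\Lambda|\Rvec_{<s})=I(R_s:G_s|\Rvec_{<s},\Lambda)$, which upper bounds the unflagged term. This is the usual convexity of mutual information in the channel when the input marginal is fixed.

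For the right-hand side, we need $I(\Rvec,\mathbf A:M')\ge I(\Rvec,\mathbf A:M'|\Lambda)$. The marginal of $(\Rvec,\mathbf A)$ does depend on $\Lambda$ in general, because Alice's outcome $a_i$ has a different distribution under $P'$ and $P''$. However, the channel $p(m'|\rvec,\mathbf a)$ is fixed and independent of $\Lambda$. Given $(\Rvec,\mathbf A)$, $M'$ is independent of $\Lambda$, so $\Lambda\to(\Rvec,\mathbf A)\to M'$ is a Markov chain. Therefore
\[
I(\Rvec,\mathbf A:M')=I(\Rvec,\mathbf A,\Lambda:M')\ge I(\Rvec,\mathbf A:M'|\Lambda),
\]
which is precisely concavity of mutual information in the input distribution for a fixed channel. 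Combining the two bounds gives
\[
\Delta(\mathbf P,\pi)\le \lambda\Delta(\mathbf P',\pi)+(1-\lambda)\Delta(\mathbf P'',\pi).
\]

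\textbf{Main obstacle.} The delicate point is to justify that the same protocol $\pi$, run with $P'$ or with $P''$ in slot $i$, yields exactly the conditional distributions given $\Lambda$, including the marginals of $(R_s,\Rvec_{<s})$ and the database law. This holds because the database and question are generated before any box is used and are independent of the flag. One must also check that each conditional mutual information under $\Lambda$ is computed with the unchanged protocol maps, including the noisy channel $p(m'|\rvec,\mathbf a)$. I would verify this directly from the product form in Eq.~\eqref{eq:mc-mixed-protocol}, handling zero-probability events by the same convention used in the proof of Theorem~\ref{thm:contraction-criterion}.
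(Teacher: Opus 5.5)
Your proof is correct and follows the paper's route. The joint distribution is affine in the $i$-th box, the left-hand side is convex because mutual information is convex in the channel $p(g_s|r_s,\rvec_{<s})$ at a fixed input law, and $I(\Rvec,\mathbf A:M')$ is concave because the channel $p(m'|\rvec,\mathbf a)$ is fixed while its input law is mixed. Your flag variable $\Lambda$ and the Markov chain $\Lambda\to(\Rvec,\mathbf A)\to M'$ (which does hold, since summing over Bob's adaptively chosen outcomes leaves Alice's marginals by non-signalling) simply reproduce the standard proofs of these two facts, which the paper invokes directly; note also that your argument is a direct proof, not one by contraposition.
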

    \begin{proof}
        Fix a database distribution $p(\rvec)$ and a protocol $\pi_c$. Each box
        occurs once in the product in Eq.~\eqref{eq:mc-mixed-protocol}, and
        $f_G$ is fixed. The distributions over all variables therefore satisfy
        $p = \lambda p' + (1 - \lambda)p''$. Since $p(s)$ is fixed, the same
        mixture holds for $p(g_s,r_s,\rvec_{<s}) = p(g,r_s,\rvec_{<s} | S = s)$.
        Dividing by the common marginal $p(r_s,\rvec_{<s})$ gives
        \beq
            \begin{aligned}
                p(g_s|r_s,\rvec_{<s}) & = \frac{p(g_s,r_s,\rvec_{<s})}{p(r_s,\rvec_{<s})}                        \\
                                      & = \lambda p'(g_s|r_s,\rvec_{<s}) + (1 - \lambda)p''(g_s|r_s,\rvec_{<s}).
            \end{aligned} \label{eq:mc-mixing-guess-channel}
        \eeq
        For each $\rvec_{<s}$, mutual information is convex in this channel
        because the input distribution $p(r_s|\rvec_{<s})$ is fixed. Averaging
        with the common weights $p(\rvec_{<s})$ gives
        \beq
            I(R_s:G_s|\Rvec_{<s})
            \leq \lambda I'(R_s:G_s \mid \Rvec_{<s})
                 + (1 - \lambda)I''(R_s:G_s \mid \Rvec_{<s}). \label{eq:mc-mixing-information-left}
        \eeq
        Conversely, the channel $p(m'|\rvec,\mathbf a)$ is the same in the three
        cases, while its input distribution satisfies
        $p(\rvec,\mathbf a) = \lambda p'(\rvec,\mathbf a) + (1 - \lambda)p''(\rvec,\mathbf a)$.
        Mutual information is concave in the input distribution for a fixed
        channel, hence
        \beq
            I(\Rvec,\mathbf A:M')
            \geq \lambda I'(\Rvec,\mathbf A:M')
                 + (1 - \lambda)I''(\Rvec,\mathbf A:M'). \label{eq:mc-mixing-information-right}
        \eeq
        Subtracting gives
        \beq
            \Delta(\mathbf P,\pi) \leq \lambda\Delta(\mathbf P',\pi)
                                       + (1 - \lambda)\Delta(\mathbf P'',\pi)
                                  \leq 0. \label{eq:mc-mixing-deficit}
        \eeq
        Since $p(\rvec)$ and $\pi_c$ were arbitrary, $\mathbf P$ satisfies IC.
    \end{proof}

    The next lemma shows that any multi-copy protocol involving copies of a
    (possibly nonlocal) behavior $P^*$ and local deterministic boxes can be seen as a
    protocol on copies of $P^*$ only where the local deterministic boxes are
    simulated locally.

    \begin{lemma}%
        \label{lem:mc-deterministic-boxes}
        Let $\mathbf P = P_1 \otimes \cdots \otimes P_c$, where $P_i = D_i$ for
        $i$ in a set $\mathcal D$ of $k < c$ indices, with
        \beq
            D_i(a,b|x,y) = \delta_{a,\alpha_i(x)}\delta_{b,\beta_i(y)}, \label{eq:mc-deterministic-resource}
        \eeq
        and all other boxes are $P^*$. Every deterministic protocol on
        $\mathbf P$ is equivalent to a deterministic $(c - k)$-copy protocol on
        $P^{* \otimes (c-k)}$ with the same IC deficit $\Delta$.
    \end{lemma}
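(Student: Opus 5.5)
Since a deterministic box $D_i$ has outputs that are fixed functions of the local inputs, $a_i=\alpha_i(x_i)$ and $b_i=\beta_i(y_i)$, I will simulate each such box inside the parties' local processing. Given a deterministic protocol $\pi_c$ as in Eq.~\eqref{eq:deterministic}, I will define a protocol $\pi'_{c-k}$ on $P^{*\otimes(c-k)}$. Alice runs the same sequence of functions $f_{L_i},f_{X_i}$. Whenever the next copy $\ell_i$ lies in $\mathcal D$, she does not query a box; she computes $x_{\ell_i}$ and sets $a_{\ell_i}=\alpha_{\ell_i}(x_{\ell_i})$. Otherwise she queries the corresponding copy of $P^*$. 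Bob does the same with $f_{J_k},f_{Y_k}$ and $\beta_j$. His guess is $f_G(s,m',\bvec)$, where $\bvec$ is reconstructed including the simulated entries. All new local maps are deterministic functions of the previously available local data, so $\pi'_{c-k}$ is again a deterministic adaptive protocol of the form~\eqref{eq:deterministic}.

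The first step is to show that the two protocols give the same joint distribution of $(\Rvec,S,\mathbf A,M',\bvec,G)$. This follows from Eq.~\eqref{eq:mc-mixed-protocol} by summing over the deterministic factors: each $D_i$ contributes $\delta_{a_i,\alpha_i(x_i)}\delta_{b_i,\beta_i(y_i)}$, and this factor is exactly what the local simulation produces. An induction along the sequential steps confirms that the adaptive choices see identical histories. Hence every $I(R_s:G_s\mid\Rvec_{<s})$ is unchanged.

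The second step handles the right-hand side. In $\pi'$ the message is produced by the channel $p(m'\mid\rvec,\mathbf a)$, where $\mathbf a$ includes the simulated outcomes. Those outcomes are deterministic functions of $\rvec$ and the genuine outcomes $\mathbf A^*=(a_i)_{i\notin\mathcal D}$, because each $x_i$ is itself such a function by recursion. Therefore $(\Rvec,\mathbf A)$ and $(\Rvec,\mathbf A^*)$ determine each other, so $I(\Rvec,\mathbf A:M')=I(\Rvec,\mathbf A^*:M')$. The right-hand side of the IC inequality evaluated for the $(c-k)$-copy protocol (with $\mathbf A^*$ as Alice's outcomes) then equals the original one, and $\Delta$ is preserved.

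I expect the main obstacle to be bookkeeping rather than analysis. The ordering and copy labels must be re-indexed so that Alice's and Bob's possibly different orders remain consistent. The recursion showing that $\mathbf A$ is a function of $(\Rvec,\mathbf A^*)$ also needs care, since the simulated outcomes feed into later choices of $\ell_i$ and inputs. I would handle this by induction on Alice's time-step $i$, using that $\ell_i$ and $x_{\ell_i}$ depend only on $\rvec$ and earlier outcomes, each of which is inductively determined by $(\rvec,\mathbf a^*)$.
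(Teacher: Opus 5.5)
Your proposal is correct and follows essentially the same argument as the paper. Both simulate each $D_i$ locally via $\alpha_i,\beta_i$, recover the reduced protocol's distribution by summing out the deterministic factors in Eq.~\eqref{eq:mc-mixed-protocol}, and preserve the right-hand side because $(\Rvec,\mathbf A)$ and $(\Rvec,\mathbf A^*)$ determine each other. The only cosmetic difference is that you handle Alice's adaptive copy order directly by induction on her time step, whereas the paper first fixes the order $1,\ldots,c$ and then remarks that the adaptive case is analogous.
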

    \begin{proof}
        First suppose that Alice uses the copies in the order $1,\ldots,c$. The
        idea is that the parties can calculate the outcomes of every $D_i$
        locally, so these resources can be removed. Write
        \beq
            \mathbf a_*
            = (a_i)_{i \notin \mathcal D},\qquad \bvec_*
            = (b_i)_{i \notin \mathcal D} \label{eq:mc-retained-outcomes}
        \eeq
        for the outcomes on the $P^*$ resources. Alice follows her original
        strategy $\{f_{X_i}\}$ and generates
        \beq
            x_i
            = f_{X_i}(\rvec,\mathbf a_{<i}),\qquad a_i
            = \begin{cases}
                  \alpha_i(x_i),                                               & i \in \mathcal D,    \\
                  \text{outcome of copy }i\text{ of }P^* \text{ at input }x_i, & i \notin \mathcal D.
              \end{cases} \label{eq:mc-local-simulation}
        \eeq
        Thus $\mathbf a_{<i}$, and hence $x_i$, are functions of $\rvec$ and of
        her previous outcomes on the $P^*$ copies. At the end, the full outcome
        vector can be written as a function
        $\mathbf a = \phi(\rvec,\mathbf a_*)$. The reduced protocol uses the
        channel
        \beq
            \widetilde p(m'|\rvec,\mathbf a_*) \coloneqq p(m'|\rvec,\phi(\rvec,\mathbf a_*)). \label{eq:mc-reduced-message-channel}
        \eeq
        Bob follows his decision tree, but whenever it asks for a copy
        $j \in \mathcal D$ at input $y$, he calculates $b_j = \beta_j(y)$. His
        next $P^*$ copy, its input, and his guess are then functions only of
        $(s,m')$ and his previous outcomes on the $P^*$ copies.

        These rules define a deterministic $(c - k)$-copy protocol
        $\widetilde\pi$ on $P^{* \otimes (c-k)}$. In the global distribution of
        Eq.~\eqref{eq:mc-mixed-protocol}, the deterministic resources contribute
        only $\delta_{a_i,\alpha_i(x_i)}\delta_{b_i,\beta_i(y_i)}$. Summing over
        their outcomes substitutes these local functions into the remaining
        factors, giving precisely the distribution of $\widetilde\pi$. The two
        protocols therefore have the same joint distribution of $(\Rvec,S,M',G)$
        and hence the same conditional information terms.

        For the right-hand side, $\mathbf A = \phi(\Rvec,\mathbf A_*)$ and
        $\mathbf A_*$ is a subvector of $\mathbf A$. Hence $(\Rvec,\mathbf A)$
        and $(\Rvec,\mathbf A_*)$ determine each other, and data processing in
        both directions gives
        \beq
            I(\Rvec,\mathbf A:M') = I(\Rvec,\mathbf A_*:M'). \label{eq:mc-local-removal-cost}
        \eeq
        The IC deficits are therefore equal.

        Alice's adaptive copy order is treated as in the previous results: use
        the $P^*$ outcomes in chronological order and compute each local
        deterministic outcome when needed. Alice's complete list of outputs is
        still determined by $\rvec$ and the outcomes of the non-deterministic
        boxes, so $\Delta$ stays the same.
    \end{proof}

    We can now combine these lemmas with
    Theorem~\ref{thm:contraction-multi-copy} to prove
    Theorem~\ref{thm:contraction-mixed}.
    \begin{proof}[Proof of Theorem~\ref{thm:contraction-mixed}]
        Every element of $\mathcal S$ is a convex combination of $P^*$ and local
        deterministic behaviors. We first consider every tuple
        $\mathbf Q = Q_1 \otimes \cdots \otimes Q_c$ in which each $Q_i$ is
        either $P^*$ or local deterministic. Let $k < c$ be the number of local
        deterministic boxes (in the $k=c$ case, IC is trivially satisfied as all 
        resources are local). Lemma~\ref{lem:mc-deterministic-boxes} gives a
        protocol on $P^{* \otimes (c-k)}$ with the same IC deficit, which is
        nonpositive by Theorem~\ref{thm:contraction-multi-copy}.
        Lemma~\ref{lem:mc-mixing} easily extends to any finite convex
        combination $\sum_j \lambda_j P^{(j)}$ on position $i$. Applying it one
        position at a time therefore proves the claim for all
        $P_1,\ldots,P_c \in \mathcal S$:
        \beq
            Q_1 \otimes \cdots \otimes Q_c \;
            \implies \; P_1 \otimes Q_2 \otimes \cdots \otimes Q_c \;
            \implies \; \cdots\;
            \implies \; P_1
                     \otimes \cdots
                     \otimes P_c. \label{eq:mc-mixing-one-position}
        \eeq
        In other words, any set of the form $Q_1 \otimes \cdots \otimes Q_c$
        satisfies IC, thus by the lemma any set of the form
        $P_1 \otimes Q_2 \otimes \cdots \otimes Q_c$ satisfies IC. Applying the
        lemma again, sets of the form
        $P_1 \otimes P_2 \otimes \cdots \otimes Q_c$ satisfy IC, and so on.
    \end{proof}

    \section{Other formulations of information causality}
    \label{app:other-IC-formulations}

    In this appendix, we prove that the set $\mathcal S$ of
    Theorem~\ref{thm:contraction-mixed} also satisfies the other IC inequalities
    proposed in the literature, namely those of Chaves, Majenz and
    Gross~\cite{chaves:2015c}, Pollyceno et al.~\cite{pollyceno:2026}, and Yu
    and Scarani~\cite{yu:2026}. The first two follow from
    Eq.~\eqref{eq:contraction-mixed-ic} of the main text with $(M',G_s)$ in
    place of $G_s$. Indeed, in Lemma~\ref{lem:mc-information-bound},
    $(M',\widetilde G_s)$ is also a function of $(M',\Bvec_{\mathrm{tot}})$ and
    has the same joint distribution with $\Rvec$ as $(M',G_s)$. The conditional
    data-processing step in Eq.~\eqref{eq:mc-correlated-ic} therefore applies to
    this pair. The local simulation and mixing arguments of
    Lemmas~\ref{lem:mc-deterministic-boxes} and~\ref{lem:mc-mixing} also apply
    with $(M',G_s)$ as the output.

    \begin{proposition}[Generalisation of Theorem~\ref{thm:contraction-mixed} to IC from~\cite{chaves:2015c}]
        \label{prop:of-cmg}
        Every protocol using $c$ independent resources
        from $\mathcal S$ and a database of $n$ possibly correlated entries
        satisfies
        \beq
            \sum_{s=0}^{n-1}I(R_s:M',G_s) + \sum_{s=1}^{n-1}I(R_0:R_s \mid M',G_s)
            \leq I(M:M')
                 + \sum_{s=0}^{n-1}H(R_s)
                 - H(\Rvec). \label{eq:of-cmg-noisy}
        \eeq
        For a noiseless channel, $M' = M$ and $I(M:M') = H(M)$, recovering the
        inequality of~\cite[Eq.~(7)]{chaves:2015c}.
    \end{proposition}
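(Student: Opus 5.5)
Proposition~\ref{prop:of-cmg} follows from the $(M',G_s)$-version of Eq.~\eqref{eq:contraction-mixed-ic}, i.e.\ from Theorem~\ref{thm:contraction-mixed} with $(M',G_s)$ in place of $G_s$, combined with Shannon identities. Fix a protocol with resources from $\mathcal S$ and a database distribution. By the remark preceding the proposition, the argument of Lemma~\ref{lem:mc-information-bound} together with Lemmas~\ref{lem:mc-mixing} and~\ref{lem:mc-deterministic-boxes} gives
\beq
\sum_{s=0}^{n-1} I(R_s:M',G_s \mid \Rvec_{<s}) \le I(\Rvec,\mathbf A:M') \le I(M:M').
\eeq
The last inequality uses the Markov chain $(\Rvec,\mathbf X,\mathbf A)\to M\to M'$ discussed in Section~\ref{sec:IC_and_quantum_sets}. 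It then remains to show that the left-hand side of Eq.~\eqref{eq:of-cmg-noisy}, minus the correction $\sum_s H(R_s)-H(\Rvec)$, is bounded by $\sum_s I(R_s:M',G_s\mid\Rvec_{<s})$.

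Write $Z_s=(M',G_s)$. By the chain rule,
\beq
I(R_s:\Rvec_{<s},Z_s)=I(R_s:\Rvec_{<s})+I(R_s:Z_s\mid\Rvec_{<s}),
\eeq
and summing $I(R_s:\Rvec_{<s})$ over $s$ gives $\sum_s H(R_s)-H(\Rvec)$. It therefore suffices to prove, for each $s$,
\beq
I(R_s:Z_s)+\mathbb 1_{s\ge1}\,I(R_0:R_s\mid Z_s)\le I(R_s:\Rvec_{<s},Z_s).
\eeq
For $s=0$ this is an equality. For $s\ge1$, expand the right-hand side as
\beq
I(R_s:Z_s)+I(R_s:\Rvec_{<s}\mid Z_s),
\eeq
and use
\beq
I(R_s:\Rvec_{<s}\mid Z_s)\ge I(R_s:R_0\mid Z_s),
\eeq
which holds by the chain rule because $R_0$ is a component of $\Rvec_{<s}$. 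Summing over $s$ and combining with the first display yields Eq.~\eqref{eq:of-cmg-noisy}. For $M'=M$ the right-hand side becomes $H(M)+\sum_sH(R_s)-H(\Rvec)$, which is the inequality of~\cite{chaves:2015c}.

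The main obstacle is not the entropy manipulation but making sure the $(M',G_s)$ version of Theorem~\ref{thm:contraction-mixed} genuinely holds for correlated entries. In Eq.~\eqref{eq:mc-correlated-ic} the data-processing step needs $(M',\widetilde G_s)$ to be a function of $(M',\Bvec_{\mathrm{tot}})$ with the correct joint law with $\Rvec$. This is provided by Lemma~\ref{lem:mc-information-bound}, since $F_s$ is deterministic and the $Q$-marginal on $(\Rvec,M')$ is preserved.

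The mixing step needs more care, because the left-hand side now involves $I(R_s:M',G_s\mid\Rvec_{<s})$. The channel $p(m',g_s\mid r_s,\rvec_{<s})$ is still affine in the mixed box, so convexity in the channel applies exactly as in Lemma~\ref{lem:mc-mixing}. I would verify this explicitly.
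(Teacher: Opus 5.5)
Your proposal is correct and matches the paper's proof essentially step for step. The paper likewise writes $I(R_s:M',G_s)+I(R_0:R_s\mid M',G_s)=I(R_s:R_0,M',G_s)\le I(R_s:\Rvec_{<s},M',G_s)=I(R_s:\Rvec_{<s})+I(R_s:M',G_s\mid\Rvec_{<s})$, sums the terms $I(R_s:\Rvec_{<s})$ to $\sum_s H(R_s)-H(\Rvec)$, and concludes with the $(M',G_s)$-version of Eq.~\eqref{eq:contraction-mixed-ic} relaxed to $I(M:M')$, with your checks of the data-processing and mixing steps corresponding to the remark the paper makes just before the proposition.
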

    \begin{proof}
        For each $s \geq 1$, the chain rule and data processing give
        \beq
            \begin{aligned}
                I(R_s:M',G_s) + I(R_0:R_s \mid M',G_s) & = I(R_s:R_0,M',G_s)                                  \\
                                                       & \leq I(R_s:\Rvec_{<s},M',G_s)                        \\
                                                       & = I(R_s:\Rvec_{<s}) + I(R_s:M',G_s \mid \Rvec_{<s}).
            \end{aligned} \label{eq:of-cmg-entry-bound}
        \eeq
        The inequality uses the fact that $R_0$ is a component of $\Rvec_{<s}$.
        We next sum the terms $I(R_s:\Rvec_{<s})$ over $s = 1,\dots,n - 1$. The
        entropy chain rule gives
        $\sum_{s=1}^{n-1}H(R_s|\Rvec_{<s}) = H(\Rvec) - H(R_0)$, so
        \beq
            \sum_{s=1}^{n-1}I(R_s:\Rvec_{<s})
            = \sum_{s=1}^{n-1}\bigl[H(R_s) - H(R_s|\Rvec_{<s})\bigr]
            = \sum_{s=0}^{n-1}H(R_s)
              - H(\Rvec). \label{eq:of-cmg-correlation-correction}
        \eeq
        Now sum Eq.~\eqref{eq:of-cmg-entry-bound} over $s = 1,\ldots,n - 1$, add
        $I(R_0:M',G_0)$ to both sides, and substitute
        Eq.~\eqref{eq:of-cmg-correlation-correction}. This gives
        \beq
            \begin{aligned}
                \sum_{s=0}^{n-1}I(R_s:M',G_s) + \sum_{s=1}^{n-1}I(R_0:R_s \mid M',G_s) & \leq \sum_{s=0}^{n-1}H(R_s) - H(\Rvec) + \sum_{s=0}^{n-1}I(R_s:M',G_s \mid \Rvec_{<s}) \\
                                                                                       & \leq \sum_{s=0}^{n-1}H(R_s) - H(\Rvec) + I(M:M').
            \end{aligned}
        \eeq
        In the first line, the $s = 0$ term of the last sum is
        $I(R_0:M',G_0)$, since $\Rvec_{<0}$ is empty. The last inequality is
        our general result,
        Eq.~\eqref{eq:contraction-mixed-ic}, with $(M',G_s)$ in place of $G_s$
        and $I(M:M')$ as the right-hand side, which is in general less
        restrictive. This proves Eq.~\eqref{eq:of-cmg-noisy}.
    \end{proof}

    \begin{proposition}[Generalisation of Theorem~\ref{thm:contraction-mixed} to IC from~\cite{pollyceno:2026}]
        \label{prop:of-gc10}
        Every protocol using $c$ independent resources
        from $\mathcal S$ and a two-entry database satisfies
        \beq
            I(R_0:M',G_s) + I(R_1:R_0,M',G_{s'})
            \leq I(M:M') + I(R_0:R_1) \label{eq:of-gc10}
        \eeq
        for all Bob questions $s,s'$. The same inequality holds with $R_0$ and $R_1$
        interchanged. For independent entries, this recovers the inequality
        of~\cite[Eq.~(10)]{pollyceno:2026}.
    \end{proposition}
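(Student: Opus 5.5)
We reduce the inequality to Eq.~\eqref{eq:contraction-mixed-ic} with $(M',G_s)$ in place of $G_s$, applied to the two-entry database $\Rvec=(R_0,R_1)$, exactly as in the proof of Proposition~\ref{prop:of-cmg}. That general result gives
\beq
I(R_0:M',G_0) + I(R_1:M',G_1 \mid R_0) \leq I(\Rvec,\mathbf A:M') \leq I(M:M'),
\eeq
where the last step is the Markov chain $\Rvec,\mathbf X,\mathbf A\to M\to M'$. Since the ordering of the database entries in Eq.~\eqref{eq:contraction-mixed-ic} is arbitrary, the same holds with the roles of $R_0$ and $R_1$ interchanged. This yields the symmetric version of the claim for free.

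The remaining issue is that the target inequality involves arbitrary questions $s,s'$ rather than the matched ones $s=0$, $s'=1$. We handle this by relabelling the decoding. The protocol is arbitrary, so we define a modified deterministic protocol in which Bob, on question $0$, acts as the original Bob does on question $s$, and on question $1$ acts as on question $s'$. Concretely, we set $\tilde f_Y(0,\cdot)=f_Y(s,\cdot)$ and $\tilde f_Y(1,\cdot)=f_Y(s',\cdot)$, and likewise for $f_J$, $f_Y$ and $f_G$. Alice's side and the channel are untouched, because $S$ is independent of everything on Alice's side. Hence the joint distribution of $(\Rvec,\mathbf A,M')$ is unchanged, and $(\Rvec,M',\tilde G_0)$ and $(\Rvec,M',\tilde G_1)$ are distributed as $(\Rvec,M',G_s)$ and $(\Rvec,M',G_{s'})$. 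The modified protocol still uses the same resources from $\mathcal S$, so Theorem~\ref{thm:contraction-mixed}, in its $(M',G_s)$ form, applies to it.

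It then remains to convert the conditional term. By the chain rule,
\beq
I(R_1:R_0,M',G_{s'}) = I(R_0:R_1) + I(R_1:M',G_{s'}\mid R_0),
\eeq
so adding $I(R_0:M',G_s)$ and using the bound from the first paragraph for the relabelled protocol gives Eq.~\eqref{eq:of-gc10} directly. When the entries are independent, $I(R_0:R_1)=0$ and the inequality reduces to the form of~\cite{pollyceno:2026}.

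The only step needing care is the relabelling argument, specifically checking that the $(M',G_s)$ variant of Theorem~\ref{thm:contraction-mixed} is genuinely available. The text before Proposition~\ref{prop:of-cmg} asserts that Lemma~\ref{lem:mc-information-bound}, Lemma~\ref{lem:mc-mixing} and Lemma~\ref{lem:mc-deterministic-boxes} all go through with $(M',G_s)$ as the output. The deterministic reduction also remains valid, since the relabelled protocol is again deterministic. Given this, the rest is routine chain-rule bookkeeping.
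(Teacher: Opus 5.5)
Your proof is correct and follows essentially the paper's route. The paper relabels Bob's strategies for questions $s,s'$ in the same way and then invokes Proposition~\ref{prop:of-cmg} with $n=2$. For two entries, that proposition's proof is exactly your chain-rule step $I(R_1:R_0,M',G_{s'}) = I(R_0:R_1) + I(R_1:M',G_{s'}\mid R_0)$ combined with Eq.~\eqref{eq:contraction-mixed-ic} in its $(M',G_s)$ form, so you have simply inlined it. One harmless imprecision: for non-deterministic protocols the theorem's right-hand side is $I(\Rvec,\mathbf X,\mathbf A:M')$ rather than $I(\Rvec,\mathbf A:M')$, but both are bounded by $I(M:M')$.
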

    \begin{proof}
        Fix $s,s'$ and apply Proposition~\ref{prop:of-cmg} with $n = 2$ to the
        protocol in which Bob follows his original strategy for question $s$ when
        asked for $R_0$, and that for $s'$ when asked for $R_1$. Alice's
        strategy is unchanged. By the chain rule,
        \beq
            I(R_1:M',G_{s'}) + I(R_0:R_1 \mid M',G_{s'}) = I(R_1:R_0,M',G_{s'}).
        \eeq
        The right-hand side of Eq.~\eqref{eq:of-cmg-noisy} is the sum of $I(M:M')$ and
        $H(R_0) + H(R_1) - H(R_0,R_1) = I(R_0:R_1)$, which gives
        Eq.~\eqref{eq:of-gc10}. The other ordering follows by relabeling $R_0$
        and $R_1$.
    \end{proof}

Yu and Scarani have proposed a reformulation of IC that takes into account that different guesses of Bob may contain some shared information about Alice's database~\cite{yu:2026}. They use the measure of
    redundant information~\cite{harder:2013},
    to quantify this. For two outputs $G_1,G_2$, let
    $\mathcal C_i = \conv\{p(\Rvec|G_i = g):p(G_i = g) > 0\}$ and define
    \beq
        I^\pi(G_1 \searrow G_2)
        = I(\Rvec:G_1)
          - \sum_g p(G_1 = g)\min_{\bm v \in \mathcal C_2} \DKL\!\left( p(\Rvec|G_1 = g)\,\middle\|\,\bm v \right). \label{eq:of-projected}
    \eeq
    Define $I^\pi(G_2 \searrow G_1)$ analogously. Their redundant information is
    \beq
        I_{\mathrm{red}}(\Rvec;G_1,G_2) = \min\{I^\pi(G_1 \searrow G_2),I^\pi(G_2 \searrow G_1)\}.
    \eeq
    This quantity depends only on the two marginals $p(\rvec,g_1)$ and
    $p(\rvec,g_2)$, so it does not require any joint $p(\rvec,g_1,g_2)$ to be
    specified. The information causality inequality is then
\beq
    I(\Rvec:G_1) + I(\Rvec:G_2) - I_{\mathrm{red}}(\Rvec;G_1,G_2) \leq C,
\eeq
where $C$ is the channel capacity.
    Extending this formulation beyond two decoding procedures remains
an open problem.

    \begin{proposition}[Generalisation of Theorem~\ref{thm:contraction-mixed} to IC from~\cite{yu:2026}]
        \label{prop:of-ys}
        Every  protocol using $c$ independent resources
        from $\mathcal S$ satisfies
        \beq
            I(\Rvec:G_s) + I(\Rvec:G_{s'}) - I_{\mathrm{red}}(\Rvec;G_s,G_{s'}) \leq I(M:M') \label{eq:of-ys}
        \eeq
        for all questions $s,s'$. Since $I(M:M')$ is at most the channel
        capacity, this implies the IC definition of~\cite[Eq.~(4)]{yu:2026}.
    \end{proposition}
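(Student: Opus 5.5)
The plan is to reduce the Yu--Scarani quantity to the two-entry correlated-input inequality with $(M',G_s)$ in place of $G_s$, as in Proposition~\ref{prop:of-cmg}.

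\emph{Step 1: lower bound on the redundancy.} Since $\mathcal C_2\ni p(\Rvec|G_2=g')$ for every $g'$, and $\DKL$ is convex in its second argument, one can bound
\[
\min_{\bm v\in\mathcal C_2}\DKL\bigl(p(\Rvec|G_1=g)\,\big\|\,\bm v\bigr)\le \DKL\Bigl(p(\Rvec|G_1=g)\,\Big\|\,\textstyle\sum_{g'}\mu_g(g')p(\Rvec|G_2=g')\Bigr)
\]
for any weights $\mu_g$. The natural choice is the conditional law $\mu_g(g')=p(g'|g)$ of any coupling of $G_1,G_2$ in which $\Rvec\to G_2\to G_1$ is \emph{not} assumed. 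Rather than optimizing this, I would use the cruder but robust comparison $I_{\mathrm{red}}\ge I^\pi(\cdot)$ lower bounded via a suitable joint protocol. The key claim is
\[
I(\Rvec:G_s)+I(\Rvec:G_{s'})-I_{\mathrm{red}}(\Rvec;G_s,G_{s'})\le I(\Rvec:G_s)+I(\Rvec:G_{s'}|G_s)=I(\Rvec:G_s,G_{s'})
\]
for any coupling of the two guesses with the correct marginals. This follows from $I^\pi(G_s\searrow G_{s'})\ge I(\Rvec:G_s)-I(\Rvec:G_s|G_{s'})$. To see that, take $\mu_g=p(g'|g)$, and use convexity of $\DKL$ jointly to get $\sum_g p(g)\,\DKL\bigl(p(\Rvec|g)\|\sum_{g'}p(g'|g)p(\Rvec|g')\bigr)\le I(\Rvec:G_s|G_{s'})$ when the coupling satisfies $p(\rvec|g,g')=p(\rvec|g)$ on average. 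This inequality is the delicate point. I would instead verify it by writing $\sum_{g,g'}p(g,g')\DKL(p(\rvec|g,g')\|p(\rvec|g'))=I(\Rvec:G_s|G_{s'})$ and applying convexity in the first argument over $g'$, which requires matching the mixture $\sum_{g'}p(g'|g)p(\rvec|g,g')=p(\rvec|g)$ and yields exactly $\le I(\Rvec:G_s|G_{s'})$. Then $I(\Rvec:G_s)+I(\Rvec:G_{s'})-I^\pi\le I(\Rvec:G_{s'})+I(\Rvec:G_s|G_{s'})=I(\Rvec:G_s,G_{s'})$, and the minimum in $I_{\mathrm{red}}$ only helps, since we may pick the better order.

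\emph{Step 2: coupling via the auxiliary distribution.} Lemma~\ref{lem:mc-information-bound} supplies $Q$ and functions $F_s,F_{s'}$ with $\widetilde G_s=F_s(M',\Bvec_{\mathrm{tot}})$ reproducing each marginal $(\Rvec,G_s)$. This gives a legitimate coupling. Hence, by data processing,
\[
I_Q(\Rvec:\widetilde G_s,\widetilde G_{s'})\le I_Q(\Rvec:M',\Bvec_{\mathrm{tot}})\le I(\Rvec,\mathbf A:M')\le I(M:M')
\]
for copies of $P^*$.

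\emph{Step 3: extension to $\mathcal S$.} For mixed resources, I would prove the weaker statement in the form ``$\exists$ coupling with $I(\Rvec:G_s,G_{s'})\le I(\Rvec,\mathbf A:M')$''. The main obstacle is that the mixing Lemma~\ref{lem:mc-mixing} relied on convexity of the left side, while $I_{\mathrm{red}}$ is not obviously convex. I would sidestep this by doing the reduction at the level of $Q$. Lemma~\ref{lem:mc-deterministic-boxes} removes deterministic boxes. For a convex decomposition of each box, the protocol distribution is a mixture over a classical label $\Lambda$, independent of $\Rvec$, that is held jointly by no one. I would include $\Lambda$ in the coupling: $I(\Rvec:G_s,G_{s'})\le I(\Rvec:G_s,G_{s'},\Lambda)=I(\Rvec:G_s,G_{s'}|\Lambda)$, bound each $\Lambda$-branch by Step 2, and finally use concavity in the input distribution, as in Lemma~\ref{lem:mc-mixing}, to average the right-hand sides to at most $I(\Rvec,\mathbf A:M')\le I(M:M')$.
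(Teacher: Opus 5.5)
Your proposal is correct and follows essentially the paper's route. You bound the left-hand side by $I(\Rvec:G_s,G_{s'})$ for any coupling with the right marginals, where you prove the Harder et al.\ Lemma~7 bound directly and the paper cites it. You then build the coupling from Lemma~\ref{lem:mc-information-bound} after removing deterministic boxes. Finally you extend to $\mathcal S$ by mixing couplings followed by concavity in the input distribution; your label $\Lambda$ with $I(\Rvec:\Lambda)=0$ is just the paper's convexity-in-the-channel step.

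Two points in Step~1 need correcting:
\begin{itemize}
\item The inequality requires joint convexity of $\DKL$, not convexity in the first argument alone, because the reference $p(\rvec|G_{s'}=g')$ varies with $g'$. The mixture identity $\sum_{g'}p(g'|g)p(\rvec|g,g')=p(\rvec|g)$ that you flag as delicate holds automatically for every coupling.
\item The minimum in $I_{\mathrm{red}}$ does not let you ``pick the better order''. You need a lower bound on $I_{\mathrm{red}}$, hence on both $I^\pi(G_s\searrow G_{s'})$ and $I^\pi(G_{s'}\searrow G_s)$. Your Step~1 does give both once you swap the roles of the two guesses, since the resulting bound $I(\Rvec:G_s)+I(\Rvec:G_{s'})-I(\Rvec:G_s,G_{s'})$ is symmetric, exactly as the paper notes.
\end{itemize}
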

    \begin{proof}
    For any $s \ne s'$, denote Bob's corresponding guesses by $G_1$ and $G_2$
    respectively.
        Lemma~7 of~\cite{harder:2013} bounds $I^\pi(G_1 \searrow G_2)$ from
        below by $I(\Rvec:G_1) + I(\Rvec:G_2) - I(\Rvec:G_1,G_2)$. Exchanging
        $G_1$ and $G_2$ gives the same lower bound for
        $I^\pi(G_2 \searrow G_1)$, hence also for their minimum
        $I_{\mathrm{red}}(\Rvec;G_1,G_2)$. Rearranging gives
        \beq
            I(\Rvec:G_1) + I(\Rvec:G_2) - I_{\mathrm{red}}(\Rvec;G_1,G_2) \leq I(\Rvec:G_1,G_2) \label{eq:of-ys-joint-bound}
        \eeq
        for every joint distribution of $(\Rvec,G_1,G_2)$. Since the left-hand
        side depends only on the two marginals, it suffices to construct a joint
        distribution reproducing $(\Rvec,G_s)$ and $(\Rvec,G_{s'})$, with
        $I(\Rvec:G_1,G_2) \leq I(M:M')$.

        For a set $P_1, \dots, P_c$, that we denote as $\mathbf{P}$ for simplicity, %
        containing only copies of $P^*$ and
        local deterministic boxes, simulate the latter as in
        Lemma~\ref{lem:mc-deterministic-boxes} and use the construction of
        Lemma~\ref{lem:mc-information-bound} for the remaining copies of $P^*$. Both
        auxiliary guesses are functions of $(M',\Bvec_{\mathrm{tot}})$. Their
        joint distribution with $\Rvec$, denoted by $Q_{\mathbf P}$, reproduces
        the distributions of $(\Rvec,G_1)$ and $(\Rvec,G_2)$ in the protocol
        that uses $\mathbf P$, and by data
        processing satisfies
        \beq
            I_{Q_{\mathbf P}}(\Rvec:G_1,G_2) \leq I_{\mathbf P}(M:M'), \label{eq:of-coupling-bound}
        \eeq
        where $I_{\mathbf P}$ is evaluated in the protocol that uses $\mathbf P$. Every product of resources from $\mathcal S$ is a convex combination of 
        such tuples. Let $w_{\mathbf P}$ be its weights and take
        $Q = \sum_{\mathbf P} w_{\mathbf P} Q_{\mathbf P}$. This joint
        distribution reproduces the distributions of $(\Rvec,G_1)$ and 
        $(\Rvec,G_2)$ of the original protocol, and
        \beq
            \begin{aligned}
                I_Q(\Rvec:G_1,G_2) & \leq \sum_{\mathbf P} w_{\mathbf P} I_{Q_{\mathbf P}}(\Rvec:G_1,G_2) \\
                                   & \leq \sum_{\mathbf P} w_{\mathbf P} I_{\mathbf P}(M:M')              \\
                                   & \leq I(M:M').
            \end{aligned}
        \eeq
        The first inequality uses convexity of the mutual information in $Q(g_1,g_2|\rvec)$ 
        at fixed $p(\mathbf r)$. 
        The last uses concavity of the mutual information in $p(m)=\sum_{\mathbf P}w_{\mathbf P} p_{\mathbf P}(m)$, for a fixed communication channel $p(m'|m)$,
        as in
        Lemma~\ref{lem:mc-mixing}. Applying Eq.~\eqref{eq:of-ys-joint-bound} to
        $Q$ proves Eq.~\eqref{eq:of-ys}.
    \end{proof}

    These proofs apply for any behavior $P^*$ satisfying the hypotheses of
    Theorem~\ref{thm:contraction-criterion}. By
    Proposition~\ref{prop:contraction-segment}, this includes $P_\lambda$ with
    $0 < \lambda \leq 21 - 12\sqrt3$, which therefore satisfy all notions of 
    information causality considered here.

\end{widetext}
\end{document}